\definecolor{MyBlue}{RGB}{50,100,200}
\crefname{techclaim}{Claim}{Claims}
\crefname{observation}{Observation}{Observations}
\renewcommand{\leq}{\leqslant}
\renewcommand{\geq}{\geqslant}
\renewcommand{\phi}{\varphi}
\newcommand{\cfleq}{\sqsubseteq}
\newcommand{\Bb}{\mathcal{B}}
\newcommand{\Dd}{\mathcal{D}}
\newcommand{\Z}{\mathbb{Z}}
\newcommand{\Zn}{\Z_{\geq 0}}
\newcommand{\R}{\mathbb{R}}
\newcommand{\Q}{\mathbb{Q}}
\newcommand{\myRn}{\R_{\geq 0}} % TS: \Rn already defined
\newcommand{\tup}[1]{{\mathbf{#1}}}
\newcommand{\xx}{\tup x}
\newcommand{\ww}{\tup w}
\newcommand{\whww}{{\widehat{\ww}}}
\newcommand{\yy}{\tup y}
\newcommand{\zz}{\tup z}
\newcommand{\sv}{\tup s}
\newcommand{\tv}{\tup t}
\newcommand{\av}{\tup a}
\newcommand{\bb}{\tup b}
\newcommand{\ff}{\tup f}
\renewcommand{\gg}{\tup g}
\newcommand{\hh}{\tup h}
\newcommand{\cc}{\tup c}
\newcommand{\rr}{\tup r}
\newcommand{\dd}{\tup d}
\newcommand{\ee}{\tup e}
\newcommand{\pp}{\tup p}
\newcommand{\qq}{\tup q}
\newcommand{\uu}{\tup u}
\newcommand{\myvv}{\tup v} % TS: \vv already defined
\newcommand{\matrices}[3]{#1^{#2\times #3}}
\newcommand{\Graver}{\mathsf{Graver}}
\newcommand{\zero}{\tup 0}
\newcommand{\mset}[1]{\left\langle #1\right\rangle}
\newcommand{\Oh}{{\mathcal{O}}}
\newcommand{\Qq}{\mathcal{Q}}
\newcommand{\Ff}{\mathcal{F}}
\newcommand{\Gg}{\mathcal{G}}
\newcommand{\Rr}{\mathcal{R}}
\newcommand{\Zz}{\mathcal{Z}}
\newcommand{\Ss}{\mathscr{L}}
\newcommand{\RHS}{\mathsf{RHS}}
\newcommand{\Diag}{\mathsf{Diag}}
\newcommand{\Base}{\mathsf{Base}}
\newcommand{\Gra}{\mathsf{Gra}^+}
\newcommand{\rev}{\mathsf{best}}
\newcommand{\Types}{\mathsf{Types}}
\newcommand{\cntTypes}{\mathsf{count}}
\newcommand{\Tight}{\mathsf{Tight}}
\newcommand{\sca}[2]{\langle #1,#2\rangle}
\newcommand{\nfoldFeas}{\textsc{$n$-fold ILP Feasability}}
\newcommand{\nfoldOpt}{\textsc{$n$-fold ILP Optimization}}
\newcommand{\stageFeas}{\textsc{Two-Stage Stochastic ILP Feasability}}
\newcommand{\nfold}{$n$-fold}
\newcommand{\stage}{two-stage stochastic}
\renewcommand{\setminus}{-}
\newcommand{\realcone}{\mathsf{cone}}
\newcommand{\intcone}{\mathsf{intCone}}
\newcommand{\lattice}{\mathsf{lattice}}
\newcommand{\spn}{\mathsf{span}}
\newcommand\IfRestateTF{%
  \ifx\label\thmt@gobble@label % or just compared to \@gobble
    \expandafter\@firstoftwo
  \else
    \expandafter\@secondoftwo
  \fi
}
\newcommand{\RestateRemark}{\IfRestateTF{{\normalfont\bfseries (Restated) }}{}}
\title{Parameterized algorithms for block-structured integer programs with large entries}
\begin{document}
\maketitle

% \begin{textblock}{20}(0, 11.3)
% \includegraphics[width=40px]{}%
% \end{textblock}
% \begin{textblock}{20}(0, 12.3)
% \includegraphics[width=40px]{}%
% \end{textblock}

\begin{abstract}
%\begin{scriptsize}
We study two classic variants of block-structured integer programming. {\em{Two-stage stochastic programs}} are integer programs of the form $\{A_i \xx + D_i \yy_i = \bb_i\textrm{ for all }i=1,\ldots,n\}$, where $A_i$ and $D_i$ are bounded-size matrices. Intuitively, this form corresponds to the setting when after setting a small set of global variables $\xx$, the program can be decomposed into a possibly large number of bounded-size subprograms.
On the other hand, {\em{$n$-fold programs}} are integer programs of the form $\{{\sum_{i=1}^n C_i\yy_i=\av} \textrm{ and } D_i\yy_i=\bb_i\textrm{ for all }i=1,\ldots,n\}$, where again $C_i$ and $D_i$ are bounded-size matrices. This form is natural for knapsack-like problems, where we have a large number of variables partitioned into small-size groups, each group needs to obey some set of local constraints, and there are only a few global constraints that link together all the variables.

A line of recent work established that the optimization problem for both two-stage stochastic programs and $n$-fold programs is fixed-parameter tractable when parameterized by the dimensions of relevant matrices $A_i,C_i,D_i$ and by the maximum absolute value of any entry appearing in the constraint matrix. A fundamental tool used in these advances is the notion of the {\em{Graver basis}} of a matrix, and this tool heavily relies on the assumption that all the entries of the constraint matrix are bounded.

In this work, we prove that the parameterized tractability results for two-stage stochastic and $n$-fold programs persist even when one allows large entries in the global part of the program. More precisely, we prove the~following:
\begin{itemize}[nosep]
 \item The feasibility problem for two-stage stochastic programs is fixed-parameter tractable when parameterized by the dimensions of matrices $A_i,D_i$ and by the maximum absolute value of the entries of matrices $D_i$. That is, we allow matrices $A_i$ to have arbitrarily large entries.
 \item The linear optimization problem for $n$-fold integer programs that are uniform -- all matrices $C_i$ are equal -- is fixed-parameter tractable when parameterized by the dimensions of matrices $C_i$ and $D_i$ and by the maximum absolute value of the entries of matrices $D_i$. That is, we require that $C_i=C$ for all $i=1,\ldots,n$, but we allow $C$ to have arbitrarily large entries.
\end{itemize}
In the second result, the uniformity assumption is necessary; otherwise the problem is $\mathsf{NP}$-hard already when the parameters take constant values. Both our algorithms are weakly polynomial: the running time is measured in the total bitsize of the input.

In both results, we depart from the approach that relies purely on Graver bases. Instead, for two-stage stochastic programs, we devise a reduction to integer programming with a bounded number of variables using new insights about the combinatorics of integer cones.
For $n$-fold programs, we reduce a given $n$-fold program to an exponential-size program with bounded right-hand sides, which we consequently solve using a reduction to mixed integer programming with a bounded number of integral variables.
For $n$-fold programs, we reduce a given $n$-fold program to an exponential-size program with bounded right-hand sides, which we consequently solve using a reduction to mixed integer programming with a bounded number of integral variables.
%\end{scriptsize}
\end{abstract}

%\end{titlepage}
\section{Introduction}\label{sec:intro}

We study two variants of integer programming problems, where the specific structure of the constraint matrix can be exploited for the design of efficient parameterized algorithms. {\em{Two-stage stochastic programs}} are integer programs of the following form:
\begin{align}
 \xx\in \Zn^k,\ \yy_i\in \Zn^k, &\qquad  \textrm{and}\nonumber\\
 A_i\xx + D_i\yy_i =\bb_i &\qquad \textrm{for all }i=1,2,\ldots,n.\tag{$\spadesuit$}\label{eq:twostage-form}
\end{align}
Here, $A_i,D_i$ are integer $k\times k$ matrices\footnote{Reliance on square matrices is just for convenience of presentation. The setting where blocks are rectangular matrices with dimensions bounded by $k$ can be reduced to the setting of $k\times k$ square matrices by just padding with $0$s.} and each $\bb_i$ is an integer vector of length $k$. This form arises naturally when the given integer program can be decomposed into multiple independent subprograms on disjoint variable sets $\yy_i$, except there are several global variables $\xx$ that are involved in all the subprograms and thus link them. See the survey of Shultz et al.~\cite{SchultzSvdV96} as well as an exposition article by Gaven\v{c}iak et al.~\cite{GavenciakKK22} for examples of~applications.

We moreover study {\em{$n$-fold programs}} which are integer programs of the form
\begin{align}
    \yy_i\in \Zn^k, &\nonumber \\
    \sum_{i=1}^n C_i\yy_i = \av,\nonumber & \qquad \textrm{and} \tag{$\clubsuit$}\label{eq:nfold-form}\\
    D_i\yy_i = \bb_i & \qquad \textrm{for all }i=1,2,\ldots,n,\nonumber
\end{align}
where again $C_i,D_i$ are integer $k\times k$ matrices and $\av,\bb_i$ are integer vectors of length $k$. This kind of programs appears for knapsack-like and scheduling problems, where blocks of variables $\yy_i$ correspond to some independent local decisions (for instance, whether to pack an item into the knapsack or not), and there are only a few linear constraints that involve all variables (for instance, that the capacity of the knapsack is not exceeded). See~\cite{ChenMYZ17,DeLoera2008,GavenciakKK22,JansenKMR22,KnopK18,KnopKLMO21,KnopKM20,KnopKM20b} for examples of $n$-fold programs appearing naturally ``in the wild''. \Cref{fig:blocks} depicts constraint matrices of two-stage stochastic and $n$-fold~programs.

\begin{figure}[h]
\centering
\begin{tikzpicture}
    \node at (-5,0) {$\begin{bmatrix}
        A_1 & D_1 & & & \\
        A_2 & & D_2 & & \\
        \vdots & & & \ddots & \\
        A_n & & & & D_n
    \end{bmatrix}$};

    \node at (0,0) {$\begin{bmatrix}
        C_1 & C_2  & \dots & C_n \\
        D_1 & & & \\
        & D_2 & & \\
        & & \ddots & \\
        & & & D_n
    \end{bmatrix}$};

    \node at (5,0) {$\begin{bmatrix}
        B & C_1 & C_2  & \dots & C_n \\
        A_1 & D_1 & & & \\
        A_2 & & D_2 & & \\
        \vdots & & & \ddots & \\
        A_n & & & & D_n
    \end{bmatrix}$};

\end{tikzpicture}

    \caption{Constraint matrices in two-stage stochastic, $n$-fold, and $4$-block integer programs, respectively. (The last kind will be discussed later.) Every block is a $k\times k$ matrix, where $k$ is the parameter. Empty spaces denote blocks of~zeroes.}\label{fig:blocks}
\end{figure}

Both for two-stage stochastic programs and for $n$-fold programs, we can consider two computational problems. The simpler {\em{feasibility problem}} just asks whether the given program has a {\em{solution}}: an evaluation of variables in nonnegative integers that satisfies all the constraints. In the harder {\em{(linear) optimization problem}}, we are additionally given an integer weight $w_x$ for every variable $x$ appearing in the program, and the task is to minimize $\sum_{x\colon \text{variable}} w_x\cdot x$ over all solutions.

Two-stage stochastic and $n$-fold programs have recently gathered significant interest in the theoretical community for two reasons. On one hand, it turns out that for multiple combinatorial problems, their natural integer programming formulations take either of the two forms. On the other hand, one can actually design highly non-trivial fixed-parameter algorithms for the optimization problem for both two-stage stochastic and $n$-fold programs; we will review them in a minute. Combining this two points yields a powerful algorithmic technique that allowed multiple new tractability results and running times improvements for various problems of combinatorial optimization; see~\cite{ChenMYZ17,GavenciakKK22,JansenKMR22,KnopK18,KnopK22,KnopKLMO21,KnopKM20,KnopKM20b,KouteckyZ20} for examples.

Delving more into technical details, if by $\Delta$ we denote the maximum absolute value of any entry in the constraint matrix, then the optimization problem for
\begin{itemize}[nosep]
    \item two-stage stochastic programs~\cref{eq:twostage-form} can be solved in time \mbox{$2^{\Delta^{\Oh(k^2)}}\cdot n\log^{\Oh(k^2)} n$}~\cite{CslovjecsekEPVW21}; and
    \item $n$-fold programs~\cref{eq:nfold-form} can be solved in time $(k\Delta)^{\Oh(k^3)}\cdot n\log^{\Oh(k^2)} n$~\cite{CslovjecsekEHRW21}.
\end{itemize}
The results above are in fact pinnacles of an over-a-decade-long sequence of developments, which gradually improved both the generality of the results and the running times~\cite{AschenbrennerH07,CslovjecsekEHRW21,CslovjecsekEPVW21,DeLoera2008,EisenbrandHK18,monster,HemmeckeOR13,JansenLR20,DBLP:journals/mp/Klein22,KouteckyLO18}, as well as provided complexity lower bounds~\cite{JansenKL23,KnopPW20}. We refer the interested reader to the monumental manuscript of Eisenbrand et al.~\cite{monster} which provides a comprehensive perspective on this research area.

We remark that the tractability results presented above can be also extended to the setting where the global-local block structure present in two-stage stochastic and $n$-fold programs can be iterated further, roughly speaking to trees of bounded depth. This leads to the study of integer programs with bounded {\em{primal}} or {\em{dual treedepth}}, for which analogous tractability results have been established. Since these notions will not be of interest in this work, we refrain from providing further details and refer the interested reader to the works relevant for this direction~\cite{BrianskiKKPS22,ChanCKKP22,ChenM18,CslovjecsekEHRW21,CslovjecsekEPVW21,EisenbrandHK18,monster,DBLP:journals/mp/Klein22,KleinR22,KnopPW20,KouteckyLO18}.

All the above-mentioned works, be it on two-stage stochastic or $n$-fold programs, or on programs of bounded primal or dual treedepth, crucially rely on one tool: the notion of the {\em{Graver basis}} of a matrix. Formal definition can be found in \cref{sec:Gravers}, but intuitively speaking, the Graver basis of a matrix $A$ consists of ``minimal steps'' within the lattice of integer points belonging to the kernel of $A$. Therefore, the maximum norm of an element of the Graver basis reflects the ``granularity'' of this lattice. And so, the two fundamental observations underlying all the discussed developments are the following:
\begin{itemize}[nosep]
    \item in two-stage stochastic matrices (or more generally, matrices of bounded primal treedepth), the $\ell_\infty$ norm of the elements of the Graver basis is bounded in terms of $k$ (the dimension of every block) and the maximum absolute value of any entry appearing in the matrix (see~\cite[Lemma~28]{monster}); and
    \item an analogous result holds for $n$-fold matrices (or more generally, matrices of bounded dual treedepth) and the $\ell_1$ norm (see~\cite[Lemma~30]{monster}).
\end{itemize}
Based on these observations, various algorithmic strategies, including augmentation frameworks~\cite{JansenLR20,KouteckyLO18} and proximity arguments~\cite{CslovjecsekEHRW21,CslovjecsekEPVW21,EisenbrandHK18}, can be employed to solve respective integer programs.

The drawback of the Graver-based approach is that it heavily relies on the assumption that all the entries of the input matrices are (parametrically) bounded. Indeed, the norms of the elements of the Graver basis are typically at least as large as the entries, so lacking any upper bound on the latter renders the methodology inapplicable. This is in stark contrast with the results on fixed-parameter tractability of integer programming parameterized by the number of variables~\cite{Dadush12,DadushER23,FrankTardos1987,kannan1987minkowski,Lenstra1983,ReisR23}, which rely on different tools and for which no bound on the absolute values of the entries is required. In fact, two-stage stochastic programs generalize programs with a bounded number of variables (just do not use variables~$\yy_i$), yet the current results for two-stage stochastic programs do {\em{not}} generalize those for integer programs with few variables, because they additionally assume a bound on the absolute values of the entries.

The goal of this paper is to understand to what extent one can efficiently solve two-stage stochastic and $n$-fold programs
while allowing large entries on input.

\paragraph*{Our contribution.} We prove that both the feasibility problem for two-stage stochastic programs and the optimization problem for uniform $n$-fold programs (that is, where $C_1=C_2=\ldots=C_n$) can be solved in fixed-parameter time when parameterized by the dimensions of the blocks and the maximum absolute value of any entry appearing in the diagonal blocks $D_i$. That is, we allow the entries of the global blocks ($A_i$ and $C_i$, respectively) to be arbitrarily large, and in the case of $n$-fold programs, we require that all blocks $C_i$ are equal. The statements below summarize our results. ($\|P\|$ denotes the total bitsize of a program~$P$.)

\begin{theorem}\label{thm:2stage-intro}
   The feasibility problem for two-stage stochastic programs $P$ of the form~\cref{eq:twostage-form} can be solved in time $f(k,\max_i \|D_i\|_\infty)\cdot \|P\|$ for a computable function $f$, where $k$ is the dimension of all the blocks $A_i,D_i$.
\end{theorem}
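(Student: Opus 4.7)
The plan is to eliminate the local variables $\yy_i$ and reduce the problem to an integer program with a bounded number of variables, which can then be dispatched by Lenstra's algorithm in fixed-parameter time. The starting observation is that, after fixing a candidate $\xx \in \Zn^k$, feasibility of the program decomposes blockwise: for each $i$, the question becomes whether $\bb_i - A_i\xx \in \intcone(D_i)$, where $\intcone(D)$ denotes the integer conic hull of the columns of $D$. So the entire feasibility problem is equivalent to finding $\xx \in \Zn^k$ such that $\bb_i - A_i \xx \in \intcone(D_i)$ for every $i \in \{1,\ldots,n\}$, a constraint purely on the $k$-dimensional vector $\xx$.

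The core technical contribution I expect to need is a structural description of $\intcone(D)$ for matrices $D \in \Z^{k\times k}$ with $\|D\|_\infty \leq \Delta$: the integer cone is the intersection of the rational cone $\realcone(D)$ with a finite union of cosets of a sublattice of $\Z^k$, where both the number of cosets and the index of the sublattice in $\Z^k$ are bounded by a computable function of $k$ and $\Delta$. When $D$ is invertible this is immediate from the identity $\intcone(D) = \realcone(D) \cap D\Z^k$; the singular case is the technical crux, which I would handle by projecting onto the column span of $D$ and invoking, for example, the Smith normal form to expose the relevant lattice structure. With this structural description in hand, every per-block constraint $\bb_i - A_i \xx \in \intcone(D_i)$ splits into a polyhedral part (a linear inequality on $\xx$ coming from $\realcone(D_i)$) and a modular part (an affine condition on $\xx$ modulo the sublattice).

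Next I would group blocks into \emph{types}, where a type records the matrix $D_i$ together with the residue class of each column of $A_i$ and of $\bb_i$ modulo the sublattice associated to $D_i$. Since $\|D_i\|_\infty \leq \Delta$ and the sublattices have bounded index, the number of distinct types is bounded by some function $f_1(k,\Delta)$. All blocks of the same type impose identical modular constraints on $\xx$, so the conjunction of modular conditions over all $n$ blocks collapses to at most $f_1(k,\Delta)$ distinct conditions, which together pin $\xx$ down to one of $f_2(k,\Delta)$ cosets of a sublattice $\Lambda^\star \subseteq \Z^k$ of bounded index. For each such coset I substitute $\xx = \xx_0 + M\xx'$, where $M$ is a basis of $\Lambda^\star$; the remaining polyhedral parts then become linear inequalities on $\xx' \in \Z^k$. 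The resulting ILP has $k$ integer variables and, after pre-aggregating redundant inequalities within each type, a bounded number of linear inequalities, which is solved by Lenstra's algorithm.

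The main obstacle I foresee is proving the structural result for $\intcone(D)$ in the singular case with the right quantitative bounds on the number of cosets and on the sublattice index; this is the ``combinatorics of integer cones'' insight advertised in the introduction. A secondary technical point is obtaining linear rather than polynomial dependence on $\|P\|$ in the running time: this requires a careful preprocessing step that merges blocks of the same type and aggregates the $\Oh(nk)$ polyhedral inequalities into a description whose size is bounded purely in terms of $k$ and $\Delta$, so that the final Lenstra call sees an ILP of size depending only on the parameters.
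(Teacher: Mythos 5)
Your high-level strategy — fix $\xx$, rewrite each block as a membership constraint $\bb_i - A_i\xx \in \intcone(D_i)$, branch over the residue of $\xx$ modulo a parametrically bounded integer, and finish with a bounded-variable ILP solver — is exactly the paper's plan. But your key structural claim is false, and fixing it is essentially the paper's entire technical contribution.

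You claim that for any $D$ with $\|D\|_\infty \le \Delta$, the integer cone satisfies $\intcone(D) = \realcone(D) \cap \bigcup_j (\Lambda + \cc_j)$ for a sublattice $\Lambda \subseteq \Z^k$ of bounded index and boundedly many cosets, i.e.\ that after fixing a residue class, membership in $\intcone(D)$ is decided by membership in $\realcone(D)$ alone. This fails already for $k=1$. Take the $1\times 2$ matrix $D$ with columns $3$ and $5$ (or its $2\times 2$ square padding by a zero row): then $\realcone(D) = [0,\infty)$, $\lattice(D) = \Z$, but $\intcone(D) = \Zn \setminus \{1,2,4,7\}$. Any union of cosets of a sublattice $m\Z$ that covers every sufficiently large positive integer must cover a representative of each residue class and hence be all of $\Z$, so it cannot avoid $7$. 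The invertible-$D$ identity $\intcone(D) = \realcone(D) \cap D\Z^k$ that you take as your base case is correct precisely because there is no Frobenius phenomenon when the columns are linearly independent; the singular (and, more generally, the "more columns than rank" after padding) case is not a mild technicality on top of that — it is where the problem actually lives, and the paper explicitly flags this in its overview ("it is not even necessarily the case that $\intcone(D) = \realcone(D) \cap \lattice(D)$").

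What the paper actually proves (Theorem~\ref{thm:polyhedron}) is weaker and subtler than your claim: for a large enough modulus $B$ computable from $D$, and for \emph{each fixed residue vector} $\rr$, there is a polyhedron $\Qq_\rr$, \emph{depending on $\rr$ and strictly contained in $\realcone(D)$ in general}, such that $\Lambda^B_\rr \cap \intcone(D) = \Lambda^B_\rr \cap \Qq_\rr$. The extra inequalities in $\Qq_\rr$ beyond those defining $\realcone(D)$ carve out the near-facet regions where Frobenius aberrations occur. Establishing that this is possible is the combinatorial heart of the argument: one needs the Deep-in-the-Cone Lemma (far enough from every facet, lattice membership implies integer-cone membership), a partition of the cone by which facet functionals are "tight" modulo $B$, and a monotonicity claim across tight-sets (Lemma~\ref{lem:upwards-overview} / Claim~\ref{cl:tightImplies}). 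None of this is recoverable from a "cosets of a sublattice" picture, because the right side of your proposed identity is invariant under the lattice and hence cannot distinguish $7$ from $8+15m$. Once you replace your structural claim with the paper's residue-dependent polyhedral one, the rest of your algorithmic outline (branch over $\rr$, write $\xx = B\ww + \rr$, solve a $2k$-variable ILP) goes through; no pre-aggregation of inequalities is needed, since the linear running-time bound comes from feeding Kannan's algorithm an ILP of bitsize $g(\Delta,k)\cdot\|P\|$ with only $2k$ variables.
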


\begin{theorem}\label{thm:nfold-intro}
    The optimization problem for $n$-fold programs $P$ of the form~\cref{eq:nfold-form} that are uniform (that is, satisfy $C_1=\ldots=C_n$) can be solved in time $f(k,\max_i \|D_i\|_\infty)\cdot \|P\|^{\Oh(1)}$ for a computable function $f$, where $k$ is the dimension of all the blocks $C_i,D_i$.
\end{theorem}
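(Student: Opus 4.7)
}

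The plan is to exploit uniformity decisively in two ways. First, under $C_1 = \ldots = C_n = C$, the global constraint $\sum_i C \yy_i = \av$ collapses to $C \zz = \av$ where $\zz := \sum_i \yy_i$; this is a system in only $k$ integer variables, so the large entries of $C$ can be handled by an algorithm for integer programming with few variables, such as Lenstra's algorithm~\cite{Lenstra1983} or its recent improvements~\cite{ReisR23}. Second, since every $D_i$ has entries bounded by $\Delta := \max_i \|D_i\|_\infty$, each local feasibility set $S_i := \{\yy \in \Zn^k : D_i \yy = \bb_i\}$ admits a compact structural description via the Graver basis $\Gg_i$ of $D_i$, whose $\ell_\infty$-norm and cardinality are bounded by a function of $k$ and $\Delta$.

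Next I would put every feasible $\yy_i \in S_i$ into a canonical form: an LP vertex $\pp_i$ of the relaxation $\{D_i \yy = \bb_i, \yy \geq 0\}$ plus a sign-conformal integer combination of Graver moves of bounded norm, using the standard proximity machinery available for bounded-$\Delta$ matrices. Crucially, the LP vertices fall into finitely many combinatorial ``vertex types'' (indexed by which rows and inequalities are tight), and the number of types is bounded in $k$ and $\Delta$; given the type, the actual numerical vertex $\pp_i$ is an explicit linear function of $\bb_i$, and the admissible Graver adjustments come from a bounded-size list of multiplicity vectors. Hence every block can be categorized into one of $g(k, \Delta)$ patterns.

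The third step is to build a configuration-style reformulation. Introduce count variables $z_t \in \Zn$ for each pattern $t$, expressing how many blocks are assigned to pattern $t$; the sum of counts equals $n$, and per-pattern feasibility imposes constraints whose right-hand sides are bounded in $k$ and $\Delta$. The aggregate $\zz = \sum_i \yy_i$ is then a linear function of the counts $z_t$ together with per-block linear images of $\bb_i$ that can be precomputed. The only large-entry constraint is $C\zz = \av$, which involves only the $k$ components of $\zz$. The resulting program is a mixed integer program whose integer variables are $\zz$ and the pattern counts---a number bounded in $k$ and $\Delta$---while the block-to-pattern assignments are relaxed to continuous auxiliary variables. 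A final application of mixed ILP with a bounded number of integer variables~\cite{ReisR23} then finishes the algorithm in time $f(k, \Delta) \cdot \|P\|^{\Oh(1)}$.

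The main obstacle is accommodating the heterogeneity of the per-block data $\bb_i$ and $w_i$: two blocks may share a pattern type, but the precise LP vertex value and objective contribution depend on $\bb_i$ and $w_i$ individually. The resolution is that the aggregate contribution of all blocks of a given type to both $\zz$ and the objective is a linear function of the sums $\sum_{i \in I_t} \bb_i$ and $\sum_{i \in I_t} w_i$, where $I_t$ is the set of blocks assigned to type $t$. These sums can be encoded by continuous (or polynomially-bounded integer) auxiliary variables, so that the number of truly integer variables in the final mixed ILP remains bounded in $k$ and $\Delta$. Showing that a fractional block-to-pattern assignment can always be rounded integrally---plausibly via flow or matroid-exchange arguments---will be the most delicate technical step.
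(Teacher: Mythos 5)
Your high-level instincts are in the right place---use uniformity so that $\zz = \sum_i \yy_i$ is all that the linking constraint sees, reduce per-block structure to finitely many "patterns," and finish with a mixed integer program having few integral variables, recovering integrality by a network-flow argument. The flow-based rounding in particular is exactly what the paper does (it exhibits a flow network whose constraint matrix is totally unimodular). However, there is a genuine and central gap in how you propose to bound the pattern structure.

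The problem is that you never bound the right-hand sides $\bb_i$, and without that bound, your "canonical form" does not work. Concretely: (a)~an LP vertex $\pp_i$ of $\{D_i\yy=\bb_i,\ \yy\ge 0\}$ is in general fractional, so there is no decomposition $\yy_i = \pp_i + (\text{integer Graver moves})$ at all---you would need an \emph{integer} base point, and by Pottier's bound (the paper's \cref{lem:GraverDecomp}) the smallest such base point has norm growing with $\|\bb_i\|_\infty$; (b)~the Graver \emph{multiplicities} are unbounded for large $\bb_i$, so "a bounded-size list of multiplicity vectors" is false (already for $D=[1\ 1]$ and $\bb=[B]$ one needs up to $B$ moves), and a bounded proximity ball around the LP vertex covers only a tiny slice of the feasible set, not all of it; and (c)~your aggregation trick fails for the objective: if $\pp_i = T_t\,\bb_i$ for blocks of type $t$, then $\sum_{i\in I_t}\langle\cc_i,\pp_i\rangle = \sum_{i\in I_t}\cc_i^\trs T_t\,\bb_i$ is bilinear and cannot be recovered from the sums $\sum_{i\in I_t}\bb_i$ and $\sum_{i\in I_t}\cc_i$ alone. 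The paper's resolution is a decomposition step you do not have: the Brick Decomposition Lemma (\cref{lem:Decomp}, built from Klein's lemma plus a Steinitz-style bundling argument) shows that any brick $D\yy=\bb$ can be faithfully split into bricks with $\ell_\infty$-bounded right-hand sides, and this reduction is what makes the set of base solutions, and hence the set of "brick types," genuinely finite. Only after that reduction does the count-and-assign MIP with a totally unimodular relaxation go through. Without an analogue of this step---or some other mechanism to bound the per-block base structure in a way that is independent of $\|\bb_i\|_\infty$---your plan does not close.
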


We remark that in \cref{thm:2stage-intro} it is necessary to include $\max_i \|D_i\|_\infty$ among the parameters, because with no bound on the entries of the constraint matrix, the feasibility problem for two-stage stochastic programs becomes $\mathsf{NP}$-hard already for a constant $k$. An easy way to see it is via a straightforward reduction from an $\mathsf{NP}$-hard problem called \textsc{Good Simultaneous Approximation}~\cite{Lagarias85}, where the input consists of a rational vector $\av \in \Q^n$, a desired error bound $\epsilon \in \Q$, and an upper bound $N \in \Zn$, and the goal is to decide if there exists an integer multiplier $Q \in \{1, 2, \ldots, N\}$ such that $Q\av$ is $\epsilon$-close in the infinity norm to an integer vector, i.e., $\exists_{\bb \in \Z^n} \|Q\av - \bb\|_\infty \leq \epsilon$. In \Cref{sec:nphard} we give a different argument, reducing directly from 3-SAT, and showing that two-stage stochastic feasibility is actually \emph{strongly} $\mathsf{NP}$-hard for $k=16$. This rules out also any running time of the form $f(k) \cdot (\|P\|+\max_i \|D_i\|_\infty)^{\Oh(1)}$.

The uniformity condition in \cref{thm:nfold-intro} is also necessary (unless $\mathsf{P}=\mathsf{NP}$), as one can very easily reduce {\sc{Subset Sum}} to the feasibility problem for $n$-fold programs with $k=2$ and $D_i$ being $\{0,1\}$-matrices. Indeed, given an instance of {\sc{Subset Sum}} consisting of positive integers $a_1,\ldots,a_n$ and a target integer~$t$, we can write the following $n$-fold program on variables $y_1,\ldots,y_n,y_1',\ldots,y_n'\in \Zn$: $y_i+y_i'=1$ for all $i=1,\ldots,n$ and $\sum_{i=1}^n a_i y_i=t$. We remark that uniform $n$-fold programs are actually of the highest importance, as this form typically arises in applications. In fact, many of the previous works named such problems ``$n$-fold'', while our formulation~\cref{eq:nfold-form} would be called ``generalized $n$-fold''.

We also remark that the algorithm of \cref{thm:nfold-intro} does not use the assumption that the number of {\em{rows}} of matrix $C$ is bounded by $k$ (formally, in the precise, statement~\cref{thm:nfold-main}, we do not consider this number among parameters). However, we stress that the bound on the number of {\em{columns}} of $C$ is heavily exploited, which sets our approach apart from many of the previous works~\cite{CslovjecsekEHRW21,EisenbrandHK18,KouteckyLO18}.

Further, observe that \cref{thm:2stage-intro} applies only to the feasibility problem for two-stage stochastic programs. We actually do not know whether \cref{thm:2stage-intro} can be extended to the optimization problem as well, and we consider determining this an outstanding open problem. We will discuss its motivation in more details later on. Also, we remark that \cref{thm:2stage-intro} seems to be the first algorithm for feasibility of two-stage stochastic programs that achieves truly linear dependence of the running time on the total input size; the earlier algorithms of~\cite{CslovjecsekEPVW21,KouteckyLO18} had at least some additional polylogarithmic factors.

Finally, note that the algorithms of \Cref{thm:2stage-intro,thm:nfold-intro} are not strongly polynomial (i.e., the running time depends on the total bitsize of the input, rather than is counted in the number of arithmetic operations), while the previous algorithms of~\cite{CslovjecsekEHRW21,CslovjecsekEPVW21,EisenbrandHK18,KouteckyLO18} for the stronger parameterization are. At least in the case of \Cref{thm:2stage-intro}, this is justified, as the problem considered there generalizes integer programming parameterized by the number of variables, for which strongly polynomial FPT algorithms are not known.

Not surprisingly, the proofs of \Cref{thm:2stage-intro,thm:nfold-intro} depart from the by now standard approach through Graver bases; they are based on entirely new techniques, with some key Graver-based insight needed in the case of \cref{thm:nfold-intro}. In both cases, the problem is ultimately reduced to (mixed) integer programming with a bounded number of (integral) variables, and this allows us to cope with large entries on input. We expand the discussion of our techniques in \cref{sec:overview}, which contains a technical overview of the proofs.

\paragraph*{$4$-block programs.} Finally, we would like to discuss another motivation for investigating two-stage stochastic and $n$-fold programs with large entries, namely the open question about the parameterized complexity of {\em{$4$-block integer programming}}. $4$-block programs are programs in which the constraint matrix has the block-structured form depicted in the right panel of \Cref{fig:blocks}; note that this form naturally generalizes both two-stage stochastic and $n$-fold programs. It is an important problem in the area to determine whether the feasibility problem for $4$-block programs can be solved in fixed-parameter time when parameterized by the dimension of blocks $k$ and the maximum absolute value of any entry in the input matrix. The question was asked by Hemmecke et al.~\cite{HemmeckeKW14}, who proposed an XP algorithm for the problem. Improvements on the XP running time were reported by Chen et al.~\cite{ChenKXS20}, and FPT algorithms for special cases were proposed by Chen et al.~\cite{ChenCZ21}; yet no FPT algorithm for the problem in full generality is known so far. We remark that recently, Chen et al.~\cite{ChenCZ22} studied the complexity of $4$-block programming while allowing large entries in all the four blocks of the matrix. They showed that then the problem becomes $\mathsf{NP}$-hard already for blocks of constant dimension, and they discussed a few special cases that lead to tractability.

We observe that in the context of the feasibility problem for uniform $4$-block programs (i.e., with $A_i=A$ and $C_i=C$ for all $i=1,\ldots,n$), it is possible to emulate large entries within the global blocks $A,B,C$ using only small entries at the cost of adding a bounded number of auxiliary variables. This yields the following reduction, which we prove in \Cref{app:4block}.

\begin{restatable}{observation}{fourblock}\label{obs:4block}\RestateRemark
    Suppose the feasibility problem for uniform $4$-block programs can be solved in time $f(k,\Delta)\cdot \|P\|^{\Oh(1)}$ for some computable function $f$, where $k$ is the dimension of every block and $\Delta$ is the maximum absolute value of any entry in the constraint matrix. Then the feasibility problem for uniform $4$-block programs can be also solved in time $g(k,\max_i\|D_i\|_\infty)\cdot \|P\|^{\Oh(1)}$ for some computable function $g$ under the assumption that all the absolute values of the entries in matrices $A,B,C$ are bounded by $n$.
\end{restatable}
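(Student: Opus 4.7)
The plan is to reduce a uniform $4$-block program $P$ with entries of $A, B, C$ bounded by $n$ to an equivalent uniform $4$-block program $P'$ whose global matrices $A', B', C'$ have entries in $\{-1, 0, 1\}$, while entries of the diagonal blocks $D_i'$ stay bounded by $\max(\Delta, 1)$, where $\Delta = \max_i \|D_i\|_\infty$. The new block dimension will be $k' = \Oh(k^2)$ and the new bitsize $\|P'\| = \|P\|^{\Oh(1)}$, so feeding $P'$ into the hypothesized algorithm yields the claimed running time $g(k, \Delta) \cdot \|P\|^{\Oh(1)}$. The reduction proceeds in three steps, each of which pushes the large entries from one matrix to another until the extra available resource -- the $n$ blocks themselves -- is used to eliminate them altogether.

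\textbf{Step 1 (push $A$ into $B$).} Introduce global variables $\myvv^+, \myvv^- \in \Zn^k$, add $k$ new top rows enforcing $\myvv^+ - \myvv^- = A\xx$ (the large entries of $A$ now live here, inside $B'$), and replace $A\xx$ in each block constraint by $\myvv^+ - \myvv^-$. The new block matrix $A'$ has entries only in $\{-1, 0, 1\}$ and remains uniform across $i$. \textbf{Step 2 (push $C$ into $B$).} Since $C$ is uniform, $\sum_i C\yy_i = C \sum_i \yy_i$; introduce globals $\ww \in \Zn^k$ together with $k$ top rows enforcing $\ww = \sum_i \yy_i$, and replace $\sum_i C\yy_i$ in the original top rows by $C\ww$. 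The large entries of $C$ are now coefficients of $\ww$ inside $B'$. \textbf{Step 3 (eliminate the large entries of $B'$ using the $n$ blocks).} For each entry $B'_{r,j}$ of absolute value $m \leq n$, fix any subset $S_{r,j} \subseteq [n]$ of size $m$ and introduce two local variables $u_i, \bar u_i \in \Zn$ in each block $i$. Add a uniform block constraint $u_i + \bar u_i - z_j = 0$ (where $z_j$ is the global variable indexing column $j$), together with one additional row living entirely inside $D_i'$ that forces $u_i = 0$ when $i \notin S_{r,j}$ and $\bar u_i = 0$ when $i \in S_{r,j}$. Together these give $u_i = z_j \cdot [i \in S_{r,j}]$, hence $\sum_i u_i = m z_j$; we then replace the term $B'_{r,j} z_j$ in top row $r$ by $\mathrm{sign}(B'_{r,j}) \sum_i u_i$ (a coefficient $\pm 1$ in $C'$, uniform over $i$) and set $B'_{r,j} := 0$.

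The main subtlety lies in Step 3: one must encode the block-dependent choice $S_{r,j}$ while keeping $A'$ and $C'$ uniform in $i$. This is possible precisely because the diagonal blocks $D_i$ are allowed to vary across blocks -- the decision ``which of $u_i, \bar u_i$ is forced to zero'' is hidden inside $D_i'$, and all entries injected into $A'$, $B'$, $C'$ at every step lie in $\{-1, 0, 1\}$. A straightforward bookkeeping check shows $k' = \Oh(k^2)$ and $\|P'\| = \|P\|^{\Oh(1)}$, so, since the entries of $A', B', C'$ are bounded by a constant, the hypothesized algorithm decides feasibility of $P'$, and thus of $P$, in the desired time.
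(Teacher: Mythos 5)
Your proof is correct and uses the same core technique as the paper's: a coefficient $m\le n$ on a global variable is emulated by spreading $m$ unit copies of that variable across the $n$ blocks, with the block-dependent choice of which blocks contribute hidden inside the diagonal matrices $D_i$ (which are free to vary), so that $A'$ and $C'$ stay uniform with $\{-1,0,1\}$ entries. The only difference is organizational: you first funnel all large entries into $B$ (via the signed split $\myvv^+-\myvv^-=A\xx$ and the aggregate $\ww=\sum_i\yy_i$) and then eliminate them in a single pass, whereas the paper moves $C$ into $B$ and then applies the block trick to $A$ and $B$ separately; both yield $k'=\Oh(k^2)$ and the claimed running time.
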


Consequently, to approach the problem of fixed-parameter tractability of $4$-block integer programming, it is imperative to understand first the complexity of two-stage stochastic and $n$-fold programming with large entries allowed in the global blocks. And this is precisely what we do in this work.

We believe that the next natural step towards understanding the complexity of $4$-block integer programming would be to extend \cref{thm:2stage-intro} to the optimization problem; that is, to determine whether optimization of two-stage stochastic programs can be solved in fixed-parameter time when parameterized by $k$ and $\max_i \|D_i\|_\infty$. Indeed, lifting the result from feasibility to the optimization problem roughly corresponds to adding a single constraint that links all the variables, and $4$-block programs differ from two-stage stochastic programs precisely in that there may be up to $k$ such additional linking constraints. Thus, we hope that the  new approach to block-structured integer programming presented in this work may pave the way towards understanding the complexity of solving $4$-block integer programs.

\paragraph*{Acknowledgements.} This research has been initiated during the trimester on Discrete Optimization at the Hausdorff Research Institute for Mathematics (HIM) in Bonn, Germany. We thank the organizers of the trimester for creating a friendly and motivating research environment. We also thank Eleonore Bach, Fritz Eisenbrand, and Robert Weismantel, for pointing us to the work of Aliev and Henk~\cite{AlievH10}. Finally, we thank an anonymous reviewer for suggesting a simpler proof of \cref{lem:bundling} as well as pointing out the possibility of obtaining the lower bound presented in \Cref{sec:nphard}.

\section{Overview}\label{sec:overview}

In this section, we provide a technical overview of our results aimed at presenting the main ideas and new conceptual contributions.

\subsection{Two-stage stochastic programming}

We start with an overview on the proof of \cref{thm:2stage-intro}. We will heavily rely on the combinatorics of integer and polyhedral cones, so let us recall basic definitions and properties.

\paragraph*{Cones.} Consider an integer  matrix $D$ with $t$ columns and $k$ rows. The {\em{polyhedral cone}} spanned by $D$ is the set $\realcone(D)\coloneqq\{D\yy\colon \yy\in \myRn^t\}\subseteq \myRn^k$, or equivalently, the set of all vectors in $\myRn^k$ expressible as nonnegative combinations of the columns of $D$. Within the polyhedral cone, we have the {\em{integer cone}} where we restrict attention to nonnegative integer combinations: $\intcone(D)\coloneqq\{D\yy\colon \yy\in \Zn^t\}\subseteq \Z^k$. Finally, the {\em{integer lattice}} is the set $\lattice(D)\coloneqq\{D\yy\colon \yy\in \Z^t\}\subseteq \Z^k$ which comprises all integer combinations of columns of $D$ with possibly negative coefficients.

Clearly, not every integer vector in $\realcone(D)$ has to belong to $\intcone(D)$. It is not even necessarily the case that $\intcone(D)=\realcone(D)\cap \lattice(D)$, as there might be vectors that can be obtained both as a nonnegative combination and as an integer combination of columns of $D$, but every such integer combination must necessarily contain negative coefficients. To see an example, note that in dimension~$k=1$, this is the Frobenius problem: supposing all entries of $D$ are positive integers, the elements of $\intcone(D)$ are essentially all nonnegative numbers divisible by the gcd (greatest common divisor) of the entries of $D$, except that for small numbers there might be some aberrations: a positive integer of order~$\Oh(\|D\|_\infty^2)$ may not be presentable as a nonnegative combination of the entries of~$D$, even assuming it is divisible by the gcd of the entries of $D$.

However, the Frobenius example suggests that the equality $\intcone(D)=\realcone(D)\cap \lattice(D)$ is almost true, except for aberrations near the boundary of $\realcone(D)$. We forge this intuition into a formal statement presented below that says roughly the following: if one takes a look at $\intcone(D)$ at a large scale, by restricting attention to integer vectors $\myvv\in \Z^k$ with fixed remainders of entries modulo some large integer $B$, then $\intcone(D)$ behaves like a polyhedron. In the following, for a positive integer~$B$ and a vector $\rr\in \{0,1,\ldots,B-1\}^k$, we let $\Lambda_\rr^B$ be the set of all vectors $\myvv\in \Z^k$ such that $\myvv\equiv \rr\bmod B$, which means $v_i\equiv r_i\bmod B$ for all $i\in \{1,\ldots,k\}$.

\begin{theorem}[Reduction to Polyhedral Constraints, see \cref{thm:polyhedron}]\label{thm:reduction-overview}
Let $D$ be an integer matrix with $t$ columns and $k$ rows. Then there exists a positive integer $B$, computable from $D$, such that for every ${\rr\in \{0,1,\ldots,B-1\}^k}$, there exists a polyhedron $\Qq_\rr$ such that
$$\Lambda_\rr^B\cap \intcone(D) = \Lambda_\rr^B\cap \Qq_\rr.$$
Moreover, a representation of such a polyhedron $\Qq_\rr$ can be computed given $D$ and $\rr$.
\end{theorem}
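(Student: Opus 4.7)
The plan is to exploit the semigroup structure of $\intcone(D)$. The key observation is that for a suitably chosen positive integer $B$, the dilated saturation $B \cdot (\realcone(D) \cap \lattice(D))$ is contained in $\intcone(D)$, which makes $\intcone(D)$ ``eventually polyhedral'' once one restricts to a fixed residue class modulo $B$. The polyhedra $\Qq_\rr$ are then built as Minkowski sums of finite polytopes (capturing the ``offsets'' for each residue class) and the ambient cone $\realcone(D)$. This line of reasoning uses Gordan's lemma (finite generation of the saturation) and Dickson's lemma (finiteness of antichains in pointed cones), bypassing the Graver-basis machinery.

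First, I would choose $B$ as follows. Let $N := \realcone(D) \cap \lattice(D)$ be the saturation of $\intcone(D)$. By Gordan's lemma, $N$ is a finitely generated affine semigroup; fix a Hilbert basis $\hh_1,\ldots,\hh_s$ of $N$. Each $\hh_i \in \realcone(D)$ admits a nonnegative rational combination in the columns of $D$, so clearing denominators yields a positive integer $\lambda_i$ with $\lambda_i\hh_i \in \intcone(D)$. I take $B$ to be a common positive multiple of $\lambda_1,\ldots,\lambda_s$, enlarged by additional divisibility factors needed for lattice compatibility (so that $\lattice(D)$ interacts cleanly with $B$-shifts of $\Z^k$). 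The resulting $B$ satisfies $B \cdot N \subseteq \intcone(D)$.

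Next, for each $\rr$ with $S_\rr := \Lambda_\rr^B \cap \intcone(D)$ nonempty (otherwise set $\Qq_\rr := \emptyset$), I define a partial order on $S_\rr$ by $\sv \preceq \tv \iff \tv - \sv \in B \cdot N$. Assuming $\realcone(D)$ is pointed (the lineality space, if any, is handled by an initial reduction), Dickson's lemma applied to this partial order implies that the set $T_\rr$ of $\preceq$-minimal elements of $S_\rr$ is finite and that $S_\rr = T_\rr + B \cdot N$. I then set
\[
  \Qq_\rr \;:=\; \mathrm{conv}(T_\rr) + \realcone(D),
\]
a rational polyhedron written as a Minkowski sum of a polytope and a polyhedral cone, and computable from $D$ and $\rr$ by enumerating $T_\rr$ through bounded-dimensional integer programming in $\Lambda_\rr^B$.

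For correctness, the forward inclusion $\Lambda_\rr^B \cap \intcone(D) \subseteq \Lambda_\rr^B \cap \Qq_\rr$ is immediate from $S_\rr = T_\rr + B N \subseteq T_\rr + \realcone(D) \subseteq \Qq_\rr$. For the reverse inclusion, given $\myvv \in \Lambda_\rr^B \cap \Qq_\rr$, I would decompose $\myvv = \cc + \ww$ with $\cc \in \mathrm{conv}(T_\rr)$ and $\ww \in \realcone(D)$, and identify via a Carath\'eodory-type argument some $\sv \in T_\rr$ appearing in the convex decomposition of $\cc$ such that $\myvv - \sv \in \realcone(D)$. Combined with the residue constraint $\myvv - \sv \in B\Z^k$ and the lattice-compatibility property of $B$, this forces $\myvv - \sv \in B \cdot N$, whence $\myvv \in \sv + \intcone(D) \subseteq \intcone(D)$. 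The main obstacle I expect is making the choice of $\sv$ precise when the decomposition of $\cc$ genuinely mixes several elements of $T_\rr$: verifying that a lattice-compatible choice of $\sv$ always exists may require additional structural lemmas relating $T_\rr$, the polyhedron $\Qq_\rr$, and the sublattice $B \cdot \lattice(D)$.
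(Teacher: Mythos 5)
Your high-level plan --- blow up by a modulus $B$ so that $B\cdot N \subseteq \intcone(D)$ where $N=\realcone(D)\cap\lattice(D)$, then describe $\Lambda_\rr^B\cap\intcone(D)$ via its finitely many $\preceq$-minimal elements --- is a genuinely different route than the paper's, which goes through Weyl's dual representation $\Ff$ of $\realcone(D)$, a ``Deep-in-the-Cone'' lemma proved via Pottier-style bounds, and a monotone characterization of membership by the set $\Tight(\myvv)=\{\ff\in\Ff : \sca{\ff}{\myvv}=p_\ff\}$ of tight facets. However, your reverse inclusion has a real gap, and the obstacle you flag is indeed where the argument breaks down.

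Concretely: given $\myvv=\cc+\ww\in\Lambda_\rr^B\cap\Qq_\rr$ with $\cc=\sum_i\mu_i\tv_i\in\mathrm{conv}(T_\rr)$ and $\ww\in\realcone(D)$, a Carath\'eodory argument does \emph{not} produce a single $\sv\in T_\rr$ with $\myvv-\sv\in\realcone(D)$. If $\realcone(D)$ is pointed (which you assume), two distinct $\preceq$-minimal elements $\tv_1,\tv_2\in T_\rr$ are incomparable in the cone order, so already for $\myvv=\tfrac12(\tv_1+\tv_2)$ (with $\ww=\zero$) neither $\myvv-\tv_1$ nor $\myvv-\tv_2$ lies in $\realcone(D)$. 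A secondary gap: even if some $\sv$ with $\myvv-\sv\in\realcone(D)$ and $\myvv-\sv\in B\Z^{\tv}$ existed, concluding $\myvv-\sv\in B\cdot N$ requires $\tfrac1B(\myvv-\sv)\in\lattice(D)$, which is exactly the divisibility fact the paper establishes separately (\cref{lem:spaceCramer}) and which your ``lattice-compatibility factors'' gloss over. Finally, the claim that $T_\rr$ can be ``enumerated through bounded-dimensional integer programming'' is asserted without a mechanism for detecting when the enumeration is complete.

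What is true --- and worth pointing out --- is that your polyhedron $\Qq_\rr=\mathrm{conv}(T_\rr)+\realcone(D)$ \emph{is} correct, but the only proof I see goes through the paper's central lemma. From $\myvv=\sum_i\mu_i\tv_i+\ww\in\Lambda_\rr^B\cap\realcone(D)$ one shows, by averaging each inequality $\sca{\ff}{\cdot}\geq p_\ff$ over the convex combination, that $\Tight(\myvv)\subseteq\Tight(\tv_i)$ for every $i$ with $\mu_i>0$; then \cref{cl:tightImplies} (the paper's monotonicity claim) applied to $\tv_i\in\Zz$ and $\myvv$ gives $\myvv\in\Zz$. So the missing ingredient in your proposal is precisely the Deep-in-the-Cone / tightness-monotonicity machinery that the paper develops; your decomposition of $\cc$ into vertices does not replace it.
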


In other words, \cref{thm:reduction-overview} states that if one fixes the remainders of entries modulo $B$, then membership in the integer cone can be equivalently expressed through a finite system of linear inequalities. Before we sketch the proof of \cref{thm:reduction-overview}, let us discuss how to use this to solve two-stage stochastic programs.

\paragraph*{The algorithm.} Consider a two-stage stochastic program $P=(A_i,D_i,\bb_i\colon i\in \{1,\ldots,n\})$ such that blocks $A_i,D_i$ are integer $k\times k$ matrices and all entries of blocks $D_i$ are bounded in absolute value by~$\Delta$. The feasibility problem for $P$ can be understood as the question about satisfaction of the following sentence, where all quantifications range over $\Zn^k$:
\begin{equation}\label{eq:hipopotam}
    \exists_{\xx}\ \left(\bigwedge_{i=1}^n \exists_{\yy_i}\  A_i\xx+D_i\yy_i = \bb_i\right),\quad\textrm{or equivalently,}\quad \exists_{\xx}\ \left(\bigwedge_{i=1}^n\ \bb_i-A_i\xx\in \intcone(D_i)\right).
\end{equation}
Applying \cref{thm:reduction-overview} to each matrix $D_i$ yields a positive integer $B_i$. Note that there are only at most $(2\Delta+1)^{k^2}$ different matrices $D_i$ appearing in $P$, which also bounds the number of different integers $B_i$. By replacing all $B_i$s with their least common multiple, we may assume that $B_1=B_2=\ldots=B_n=B$. Note that $B$ is bounded by a computable function of $\Delta$ and $k$.

Consider a hypothetical solution $\xx,(\yy_i\colon i\in \{1,\ldots,n\})$ to $P$.
We guess, by branching into $B^k$ possibilities, a vector $\rr\in \{0,1,\ldots,B-1\}^k$ such that $\xx\equiv \rr\bmod B$. Having fixed $\rr$, we know how the vectors $\bb_i-A_i\xx$ look like modulo $B$, hence by \cref{thm:reduction-overview}, we may replace the assertion $\bb_i-A_i\xx\in \intcone(D_i)$ with the assertion $\bb_i-A_i\xx \in \Qq_{\rr_i}$, where $\rr_i\in \{0,1,\ldots,B-1\}^k$ is the unique vector such that $\bb_i-A_i\rr\equiv \rr_i\bmod B$. Thus, \eqref{eq:hipopotam} can be rewritten to the sentence
\begin{equation*}
    \bigvee_{\rr\in \{0,1,\ldots,B-1\}^k}
    \exists_{\xx}\ \left(\xx\equiv \rr\bmod B\right) \wedge  \left(\bigwedge_{i=1}^n \bb_i-A_i\xx\in \Qq_{\rr_i}\right),
\end{equation*}
which is equivalent to
\begin{equation}\label{eq:nosorozec}
    \bigvee_{\rr\in \{0,1,\ldots,B-1\}^k}
    \exists_{\xx}\exists_{\zz}\ \left(\xx=B\cdot \zz+\rr\right) \wedge  \left(\bigwedge_{i=1}^n \bb_i-A_i\xx\in \Qq_{\rr_i}\right).
\end{equation}
Verifying satisfiability of~\eqref{eq:nosorozec} boils down to solving $B^k$ integer programs on $2k$ variables $\xx$ and $\zz$ and linearly FPT many constraints, which can be done in linear fixed-parameter time using standard algorithms, for instance that of Kannan~\cite{kannan1987minkowski}.

We remark that the explanation presented above highlights that \cref{thm:reduction-overview} can be understood as a quantifier elimination result in the arithmetic theory of integers. This may be of independent interest, but we do not pursue this direction in this work.

\paragraph*{Reduction to polyhedral constraints.} We are left with sketching the proof of \cref{thm:reduction-overview}. Let $\Zz\coloneqq \Lambda_\rr^B\cap \intcone(D)$. Our goal is to understand that $\Zz$ can be expressed as the points of $\Lambda_\rr^B$ that are contained in some polyhedron $\Qq=\Qq_\rr$.

The first step is to understand $\realcone(D)$ itself as a polyhedron. This understanding is provided by a classic theorem of Weyl~\cite{Weyl35}: given $D$, one can compute a set of integer vectors $\Ff\subseteq \Z^k$ such that
$$\realcone(D)=\{\myvv\in \R^k~|~\sca{\ff}{\myvv}\geq 0\textrm{ for all }\ff\in \Ff\}.$$
Here, $\sca{\cdot}{\cdot}$ denotes the scalar product in $\R^k$. We will identify vectors $\ff\in \Ff$ with their associated linear functionals $\myvv\mapsto \sca{\ff}{\myvv}$. Thus, $\realcone(D)$ comprises all vectors $\myvv$ that have nonnegative evaluations on all functionals in $\Ff$. It is instructive to also think of the elements of $\Ff$ as of the facets of $\realcone(D)$ understood as a polyhedron, where the functional associated with $\ff\in \Ff$ measures the distance from the corresponding~facet.

Recall that in the context of \cref{thm:reduction-overview}, we consider vectors of $\Lambda_\rr^B$, that is, vectors $\myvv\in \Z^k$ such that $\myvv\equiv \rr\bmod B$. Then $\sca{\ff}{\myvv}\equiv \sca{\ff}{\rr}\bmod B$ for every $\ff\in \Ff$, hence we can find a unique integer $p_\ff\in \{0,1,\ldots,B-1\}$, $p_\ff\equiv \sca{\ff}{\rr}\bmod B$, such that $\sca{\ff}{\myvv}\equiv p_\ff\bmod B$ for all $\myvv\in \Lambda_\rr^B$. Now $\sca{\ff}{\myvv}$ is also nonnegative provided $\myvv\in \realcone(D)$, hence
$$ \sca{\ff}{\myvv}\in \{p_\ff,p_\ff+B,p_\ff+2B,\ldots\}\qquad\textrm{for all }\ff\in \Ff\textrm{ and } \myvv\in\Lambda_\rr^B\cap \realcone(D).$$
Now comes the key distinction about the behavior of $\myvv\in \Lambda_{\rr}^B\cap \realcone(D)$ with respect to $\ff\in \Ff$: we say that $\ff$ is {\em{tight}} with respect to $\myvv$ if $\sca{\ff}{\myvv}=p_\ff$, and is {\em{not tight}} otherwise, that is, if $\sca{\ff}{\myvv}\geq p_\ff+B$. Recall that in the context of \cref{thm:reduction-overview}, we are eventually free to choose $B$ to be large enough. Intuitively, this means that if $\ff$ is not tight for $\myvv$, then $\myvv$ lies far from the facet corresponding to $\ff$ and there is a very large slack in the constraint posed by $\ff$ understood as a functional. On the other hand, if $\ff$ is tight with respect to $\myvv$, then $\myvv$ is close to the boundary of $\realcone(D)$ at the facet corresponding to $\ff$, and there is a potential danger of observing Frobenius-like aberrations at $\myvv$.

Thus, the set $\Rr\coloneqq \Lambda_\rr^B\cap \realcone(D)$ can be partitioned into subsets $\{\Rr_\Gg\colon \Gg\subseteq \Ff\}$ defined as follows: $\Rr_{\Gg}$ comprises all vectors $\myvv\in \Rr$ such that $\Gg$ is exactly the set of functionals $\ff\in \Ff$ that are tight with respect to $\myvv$. Our goal is to prove that each set $\Rr_\Gg$ behaves uniformly with respect to $\Zz$: it is either completely disjoint or completely contained in $\Zz$. To start the discussion, let us look at the particular case of $\Rr_\Gg$ for $\Gg=\emptyset$. These are vectors that are deep inside $\realcone(D)$, for which no functional in $\Ff$ is tight. For these vectors, we use the following lemma, which is the cornerstone of our proof.

\begin{lemma}[Deep-in-the-Cone Lemma, simplified version of \cref{lem:cornetto}]\label{lem:dinc-overview}
 There exists a constant $M$, depending only on $D$, such that the following holds.
 Suppose $\myvv\in \realcone(D)\cap \Z^k$ is such that $\sca{\ff}{\myvv}>M$ for all $\ff\in \Ff$. Then $\myvv\in \intcone(D)$ if and only if $\myvv\in \lattice(D)$.
\end{lemma}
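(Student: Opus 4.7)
To prove this, my plan is to dispatch the $(\Rightarrow)$ direction trivially (it follows from $\intcone(D) \subseteq \lattice(D)$) and concentrate on $(\Leftarrow)$. For the hard direction, the strategy is: first produce a \emph{real} nonnegative representation of $\myvv$ in which every coordinate is uniformly bounded away from zero (this is where the ``deep in the cone'' hypothesis is used), and then round it to an \emph{integer} nonnegative representation by shifting by a short vector from the integer kernel of $D$. The assumption $\myvv \in \lattice(D)$ is exactly what makes such a shift possible, and the size of the coordinates from the first step is chosen large enough to absorb the rounding error in the second step.

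For the first step, I would fix once and for all the strictly interior point $\ww \coloneqq D\mathbf{1}$ of $\realcone(D)$; this is genuinely interior after removing any zero columns from $D$ (an operation changing none of $\intcone(D)$, $\realcone(D)$, or $\lattice(D)$) and taking a minimal description $\Ff$. Then $\sca{\ff}{\ww} > 0$ for all $\ff \in \Ff$, and $\mu \coloneqq \max_{\ff \in \Ff} \sca{\ff}{\ww}$ depends only on $D$. Setting $\alpha \coloneqq M/\mu$, every facet functional gives
\[
  \sca{\ff}{\myvv - \alpha \ww} = \sca{\ff}{\myvv} - \alpha \sca{\ff}{\ww} > M - \alpha \mu = 0,
\]
so $\myvv - \alpha \ww \in \realcone(D)$ and we may write $\myvv - \alpha \ww = D \yy'$ for some $\yy' \in \myRn^t$. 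Consequently $\myvv = D\yy$ with $\yy \coloneqq \yy' + \alpha \mathbf{1} \geq \alpha \mathbf{1}$ coordinate-wise.

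For the rounding step, I would use $\myvv \in \lattice(D)$ to pick some $\zz \in \Z^t$ with $D\zz = \myvv$. Then the difference $\yy - \zz$ lies in the real kernel $V \coloneqq \ker_\R(D)$, and the integer kernel $\Lambda \coloneqq \ker(D) \cap \Z^t$ is a full-rank sublattice of $V$. Hence $\Lambda$ admits a finite $\ell_\infty$-covering radius $R$ (measured with the ambient $\ell_\infty$ norm on $\R^t$) that depends only on $D$, and I can choose $\uu \in \Lambda$ with $\|\uu - (\yy - \zz)\|_\infty \leq R$. Then $\zz + \uu \in \Z^t$, $D(\zz + \uu) = \myvv$, and coordinate-wise $\zz + \uu \geq \yy - R\mathbf{1} \geq (\alpha - R)\mathbf{1}$. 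Choosing $M$ so that $\alpha \geq R$, i.e.\ $M \geq R\mu$ (a bound depending only on $D$), forces $\zz + \uu \in \Zn^t$ and yields $\myvv \in \intcone(D)$.

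The main obstacle I anticipate is the quantitative control behind the second step: I need a single constant $R$ that governs lattice approximation for every $\myvv$ satisfying the hypothesis. This reduces to the standard fact that any sublattice of $\Z^t$ has a finite covering radius inside its real span, but one has to be careful to use the restriction of the ambient $\ell_\infty$ norm, since that norm is what governs the nonnegativity constraints coordinate-by-coordinate. A secondary subtlety is the non-full-dimensional case, which I would handle by running the whole argument inside the linear span of the columns of $D$, with the facet functionals $\ff \in \Ff$ reinterpreted relative to that subspace and $\myvv \in \realcone(D)$ automatically lying in it.
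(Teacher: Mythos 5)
Your proof is correct, and it takes a genuinely different route from the paper's. Both arguments share the first step—subtract a multiple of the interior point $D\mathbf{1}$ from $\myvv$ so that a real nonnegative preimage $\yy$ with all coordinates at least $\alpha$ exists—but they diverge in how the rounding is carried out. The paper rounds in the \emph{image} space: it takes the coordinate-wise floor $\yy'$ of the real preimage, observes that $\|\myvv - (\ww + D\yy')\|_\infty \leq t\,\|D\|_\infty$, notes that this difference lies in $\lattice(D)$, and then invokes Pottier's bound (\cref{lem:latticeSmallSol}) to express it as a short \emph{integer} combination $\zz$ of columns, with $\|\zz\|_1 \leq L$ absorbed by choosing $L$ (hence $M$) large. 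You instead round in the \emph{preimage} space: you pick any integer preimage $\zz$ of $\myvv$ (using $\myvv \in \lattice(D)$), note that $\yy - \zz$ lies in the real kernel of $D$, and approximate it by a vector $\uu$ in the integer kernel lattice using the finite covering radius $R$ of that lattice inside its span; the shift $\zz + \uu$ is then integral, maps to $\myvv$, and is nonnegative once $\alpha \geq R$. Both are valid; yours is perhaps cleaner conceptually and avoids Pottier's explicit bound, while the paper's fits the machinery it reuses elsewhere (Pottier's estimate reappears in \cref{lem:GraverDecomp}) and makes the dependence of $M$ on $D$ more transparent via an explicit formula. The caveats you flag—restricting the $\ell_\infty$ norm to the kernel's span and handling the non-full-dimensional cone by working in the column span with a minimal $\Ff$—are exactly the right ones, and your treatment of them is sound.
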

\begin{proof}
 The left-to-right implication is obvious, hence let us focus on the right-to-left implication. Suppose then that $\myvv\in \lattice(D)$.

 Let $\ww=\sum_{\dd\in D} L\cdot \dd$, where the summation is over the columns of $D$ and $L$ is a positive integer to be fixed later. Observe that for every $\ff\in \Ff$, we have $\sca{\ff}{\myvv-\ww}> M-L\cdot \sum_{\dd\in D}\sca{\ff}{\dd}$. Therefore, if we choose $M$ to be not smaller than $L\cdot \max_{\ff\in \Ff}\|\ff\|_1\cdot \|D\|_\infty$, then we are certain that $\sca{\ff}{\myvv-\ww}\geq 0$ for all $\ff\in \Ff$, and hence $\myvv-\ww\in \realcone(D)$. Consequently, we can write $\myvv-\ww=D\yy$ for some $\yy\in \myRn^t$. Let $\yy'\in \Zn^t$ be such that $y'_i=\lfloor y_i\rfloor$ for all $i\in \{1,\ldots,t\}$, and let $\myvv'=\ww+D\yy'$. Then
 $$\|\myvv-\myvv'\|_\infty = \|D(\yy-\yy')\|_\infty\leq t\cdot \|D\|_\infty.$$
 On the other hand, we clearly have $\myvv'\in \intcone(D)$ and by assumption, $\myvv\in \lattice(D)$. It follows that $\myvv-\myvv'\in \lattice(D)$. From standard bounds, see e.g.~\cite{Pottier91}, it follows that there exists $\zz\in \Z^t$ with $\myvv-\myvv'=D\zz$ such that $\|\zz\|_1$ is bounded by a function of $D$ and $\|\myvv-\myvv'\|_\infty$, which in turn is again bounded by a function of $D$ as explained above. (Note here that $t$ is the number of columns of $D$, hence it also depends only on $D$.) This means that if we choose $L$ large enough depending on $D$, we are certain that $\|\zz\|_1\leq L$. Now, it remains to observe that
 $$\myvv=\ww+D\yy'+(\myvv-\myvv')=D(L\cdot \boldsymbol{1}+\yy'+\zz),$$
 where $\boldsymbol{1}$ denotes the vector of $t$ ones, and that all the entries of $L\cdot \boldsymbol{1}+\yy'+\zz$ are nonnegative integers. This proves that $\myvv\in \intcone(D)$.
\end{proof}

We remark that the statement of Lemma~\ref{lem:dinc-overview} actually follows from results present in the literature, concerning the notion of {\em{diagonal Frobenius numbers}}. See the work of Aliev and Henk~\cite{AlievH10} for a broader discussion and pointers to earlier works, as well as the works of Aggrawal et al.~\cite{AggrawalJSW24} and of Bach et al.~\cite{BachERW24} for the newest developments on optimal bounds. As we will discuss in a moment, in this work we actually use a generalization of \cref{lem:dinc-overview}.

Consider any $\uu,\myvv\in \Rr$. Since all the entries of $\uu-\myvv$ are divisible by $B$, it is not hard to prove the following: if we choose $B$ to be a large enough factorial, then $\uu\in \lattice(D)$ if and only if $\myvv\in \lattice(D)$. Hence, from \cref{lem:dinc-overview} it follows that $\Rr_\emptyset$ is either entirely disjoint or entirely contained in $\Zz$.

A more involved reasoning based on the same fundamental ideas, but using a generalization of \cref{lem:dinc-overview}, yields the following lemma, which tackles also the case when some functionals of $\Ff$ are tight with respect to the considered vectors.

\begin{lemma}[see \cref{cl:tightImplies}]\label{lem:upwards-overview}
Suppose $\uu,\myvv\in \Rr$ are such that for every $\ff\in \Ff$, if $\ff$ is tight with respect to $\uu$, then $\ff$ is also tight with respect to $\myvv$. Then $\uu\in \Zz$ implies $\myvv\in \Zz$.
\end{lemma}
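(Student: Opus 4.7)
Plan: I would prove the lemma by adapting the argument behind \cref{lem:dinc-overview} to handle the presence of tight functionals. Let $\Gg$ denote the tight set of $\uu$ and $\Gg'$ the tight set of $\myvv$; by hypothesis $\Gg \subseteq \Gg'$. The key object is the face $F_{\Gg'} = \{\xx \in \realcone(D) : \sca{\ff}{\xx} = 0 \text{ for all } \ff \in \Gg'\}$, which equals $\realcone(D_{\Gg'})$, where $D_{\Gg'}$ consists of the columns of $D$ lying in that face. The remaining columns $D_{\bar{\Gg}'}$ each satisfy $\sca{\ff}{\dd}\geq 1$ for some $\ff \in \Gg'$.

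The first step is to exhibit a nonnegative integer vector $\yy_{\bar{\Gg}'}$ on $D_{\bar{\Gg}'}$, with $\|\yy_{\bar{\Gg}'}\|_\infty$ bounded in terms of $D$ and $B$, such that $\sca{\ff}{D_{\bar{\Gg}'}\yy_{\bar{\Gg}'}} = p_\ff$ for every $\ff \in \Gg'$. Starting from the representation $\uu = D\xx$ with $\xx \geq \zero$ integer, the sub-vector $\xx_{\bar{\Gg}}$ on columns outside $F_\Gg$ is automatically bounded, because tightness of every $\ff \in \Gg$ forces small coefficients on those columns, and $\xx_{\bar{\Gg}}$ already realizes the target projections for every $\ff \in \Gg$. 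For the extra tight functionals $\ff \in \Gg' \setminus \Gg$ I would find a bounded-entry nonnegative integer correction supported on columns in $D_\Gg \setminus D_{\Gg'}$ (those lying in $F_\Gg$ but not in $F_{\Gg'}$); its existence reduces to a small Frobenius-style reachability problem with targets $p_\ff < B$ on a bounded set of columns with bounded entries.

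With $\yy_{\bar{\Gg}'}$ fixed, set $\myvv^* := \myvv - D_{\bar{\Gg}'}\yy_{\bar{\Gg}'}$. By construction $\sca{\ff}{\myvv^*} = 0$ for $\ff \in \Gg'$, so $\myvv^* \in F_{\Gg'} = \realcone(D_{\Gg'})$. Since $\sca{\ff}{\myvv}\geq p_\ff+B$ for $\ff \notin \Gg'$ and $\|\yy_{\bar{\Gg}'}\|_\infty$ is bounded, choosing $B$ sufficiently large relative to the Deep-in-the-Cone threshold of $D_{\Gg'}$ guarantees that $\myvv^*$ is deep with respect to every facet of $\realcone(D_{\Gg'})$. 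A factorial choice of $B$ moreover ensures $\myvv^* \in \lattice(D_{\Gg'})$, using $\uu \in \lattice(D)$ and $\uu \equiv \myvv \pmod{B}$ together with the compatibility of lattice membership under divisible-by-$B$ shifts already invoked in the text preceding the statement. Applying \cref{lem:dinc-overview} to the sub-matrix $D_{\Gg'}$ inside $\spn(D_{\Gg'})$ then yields $\myvv^* \in \intcone(D_{\Gg'})$, and combining with $\yy_{\bar{\Gg}'}$ produces $\myvv \in \intcone(D)$, hence $\myvv \in \Zz$.

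The main obstacle is the first step: producing $\yy_{\bar{\Gg}'}$ with bounded entries and the correct projections onto every $\ff \in \Gg' \setminus \Gg$, where the $\uu$-representation does not directly supply a witness (at such $\ff$, $\uu$ is not tight, so $\sca{\ff}{\uu}$ may be arbitrarily large). Bridging this gap is the technical heart of the argument: it requires exhibiting bounded-entry nonnegative integer combinations of the columns of $D_\Gg \setminus D_{\Gg'}$ that hit prescribed small targets $p_\ff < B$, which is precisely the generalization of \cref{lem:dinc-overview} alluded to in the paper, and leverages that the relevant sub-matrix has bounded entries together with a sufficiently large factorial choice of $B$.
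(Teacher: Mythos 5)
There is a genuine gap, and it sits exactly where you flag it --- your step~1 cannot be made to work along the lines you sketch --- but the reasons are worth spelling out, and they go beyond the one obstacle you name. First, a remark on the statement: as written, the overview lemma has the tightness inclusion reversed relative to what the paper actually proves and uses. \cref{cl:tightImplies} assumes $\Tight(\uu)\supseteq\Tight(\myvv)$ (the ``deeper'' point $\myvv$ inherits membership in $\Zz$), and this is the direction needed for \cref{cl:allornothing}. The version you are proving, $\Tight(\uu)\subseteq\Tight(\myvv)$, is false in general: take $\Dd=\{3,5\}$ in dimension one and $\rr=1$, so $p_\ff=1$; then $\uu=1+B$ (not tight, $\uu\in\intcone(\Dd)$ once $B\geq 7$) and $\myvv=1$ (tight, $\myvv\notin\intcone(\Dd)$) satisfy $\Tight(\uu)\subseteq\Tight(\myvv)$ while the conclusion fails. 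This same example kills your step~1: you need a nonnegative integer combination of the columns outside $\Dd_{\Gg'}$ hitting the target $p_\ff=1$, and none exists. So the ``Frobenius-style reachability'' you defer to is not a technicality to be supplied later; in the presence of genuinely extra tight functionals $\Gg'\setminus\Gg$ it simply does not hold.

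Even after correcting the inclusion so that $\Gg'\subseteq\Gg$ and your step~1 extracts a witness from $\uu$'s representation, the argument still misses the cascading-threshold device that is the technical core of the paper's proof. The coefficients of $\uu$ on columns outside $\Dd_{\Gg'}$ are bounded only by $\max_{\gg\in\Gg'}p_\gg$, which can be as large as $B-1$; meanwhile, for $\ff\notin\Tight(\myvv)$ you only know $\sca{\ff}{\myvv}\geq p_\ff+B$. Subtracting $D_{\bar\Gg'}\yy_{\bar\Gg'}$ can therefore decrease $\sca{\ff}{\cdot}$ by $\Theta(B)$ (times $|\Dd|\,\|\ff\|_1\,\|D\|_\infty$), and ``choosing $B$ sufficiently large'' does not help, because both the correction and the slack scale linearly in $B$. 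The paper resolves this by not working with $\Gg'=\Tight(\myvv)$ at all: it picks, via the maximal $k$ with $\sca{\ff_i}{\myvv}\leq B_i$ for the geometric sequence $B_0<B_1<\cdots<B_\ell$ (ratio $\widehat M$), a subset $\Gg\subseteq\Tight(\myvv)$ with a guaranteed multiplicative gap --- $\sca{\gg}{\myvv}\leq\widehat B$ for $\gg\in\Gg$ and $\sca{\ff}{\myvv}>\widehat M\widehat B$ for $\ff\notin\Gg$. Tight functionals of $\myvv$ that are excluded from $\Gg$ are not zeroed out; they are already ``deep enough'' for \cref{lem:cornetto}. Moreover, the paper does not discard $\uu$'s coefficients on $\Dd_\Gg$ but reduces them modulo $K$, so that $\uu-\uu'$ is divisible by $K$ and hence $\myvv-\uu'\in\Lambda_\zero^K$; this divisibility, fed into \cref{lem:spaceCramer}, is what yields $\myvv-\uu'\in\lattice(\Dd_\Gg)$. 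Your plan instead drops those coefficients entirely and appeals to a ``factorial choice of $B$,'' which does not deliver the required congruence.
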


We remark that the proof of \cref{lem:upwards-overview} actually requires more work and more ideas than those presented in the proof of \cref{lem:dinc-overview}. In essence, one needs to partition functionals that are tight with respect to $\uu$ into those that are very tight (have very small $p_\ff$) and those that are only slightly tight (have relatively large $p_\ff$) in order to create a sufficient gap between very tight and slightly tight functionals. Having achieved this, a delicate variant of the reasoning from the proof of \cref{lem:dinc-overview} can be applied. It is important that whenever a functional $\ff\in \Ff$ is tight with respect to both $\uu$ and $\myvv$, we actually know that $\sca{\ff}{\uu}=\sca{\ff}{\myvv}=p_\ff$. Note that this is exactly the benefit achieved by restricting attention to the vectors of $\Lambda_\rr^B$.

Using \cref{lem:upwards-overview}, we can immediately describe how the structure of $\Zz$ relates to that of~$\Rr$.

\begin{corollary}[see \Cref{cl:allornothing}]\label{cor:structure}
For every $\Gg\subseteq \Ff$, either $\Rr_\Gg\cap \Zz=\emptyset$ or $\Rr_\Gg\subseteq \Zz$. Moreover, if $\Rr_\Gg\subseteq \Zz$ and $\Rr_{\Gg}$ is non-empty, then $\Rr_{\Gg'}\subseteq \Zz$ for all $\Gg'\subseteq \Gg$.
\end{corollary}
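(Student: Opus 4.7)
The plan is to derive both statements directly from \Cref{lem:upwards-overview}, using only the defining property that every vector of $\Rr_\Gg$ has tight set equal to exactly $\Gg$.

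For the all-or-nothing dichotomy, I would take any two vectors $\uu,\myvv\in \Rr_\Gg$. By definition, both share the same tight set $\Gg$, so the hypothesis of \Cref{lem:upwards-overview} holds with $\uu$ and $\myvv$ playing either role. Applying the lemma twice (once in each direction) yields $\uu\in\Zz \iff \myvv\in \Zz$. Thus, membership in $\Zz$ is constant across $\Rr_\Gg$, which gives the claimed dichotomy: either $\Rr_\Gg \cap \Zz = \emptyset$ or $\Rr_\Gg \subseteq \Zz$.

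For the downward-closure statement, assume $\Rr_\Gg\subseteq \Zz$ is non-empty and fix some $\Gg'\subseteq \Gg$ and $\myvv\in \Rr_{\Gg'}$. I would pick an arbitrary $\uu\in \Rr_\Gg$; this $\uu$ exists by assumption and lies in $\Zz$. Since the tight set of $\myvv$ is $\Gg'\subseteq \Gg$, which is the tight set of $\uu$, every functional tight with respect to $\myvv$ is also tight with respect to $\uu$. \Cref{lem:upwards-overview} then propagates membership in $\Zz$ from $\uu$ (which sits on more facets, and is thus closer to the boundary of $\realcone(D)$) to $\myvv$ (which lies deeper in the cone), yielding $\myvv\in\Zz$. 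Since $\myvv\in \Rr_{\Gg'}$ was arbitrary, $\Rr_{\Gg'}\subseteq \Zz$.

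The main---indeed, essentially the only---point to pin down is the direction of propagation in \Cref{lem:upwards-overview}: $\Zz$-membership flows from a vector lying on many tight facets (where Frobenius-style obstructions could arise) to a vector lying on a smaller collection of tight facets (safely deeper in the cone). Once this direction is identified, both parts of \Cref{cor:structure} reduce to essentially one-line invocations of the lemma, with no further combinatorial or arithmetic work required; in particular, the result does not use any finer property of the matrix $D$, the modulus $B$, or the residue vector $\rr$ beyond what is already encapsulated in \Cref{lem:upwards-overview}.
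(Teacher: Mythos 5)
Your proof is correct and follows the paper's intended route: both parts of \cref{cor:structure} are direct consequences of the monotone tight-set lemma, applied exactly as you describe. One thing worth flagging: as written in the overview, \cref{lem:upwards-overview} has the inclusion reversed --- it reads ``if $\ff$ is tight for $\uu$ then $\ff$ is tight for $\myvv$,'' i.e.\ $\Tight(\uu)\subseteq\Tight(\myvv)$, which would make $\myvv$ the vector with \emph{more} tight facets and would in fact make the lemma false (consider $\Dd=\{2,3\}$ in dimension one, $\rr=1$, $\uu$ large and $\myvv=1$). The precise version, \cref{cl:tightImplies}, states the hypothesis correctly as $\Tight(\uu)\supseteq\Tight(\myvv)$; your reasoning silently uses this corrected form, and your stated intuition (membership flows from a vector with more tight facets to one with fewer) is the right one. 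So your proof is sound, but you should cite \cref{cl:tightImplies} rather than the misstated overview lemma to make the inference airtight.
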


\cref{cor:structure} suggests now how to define the polyhedron $\Qq$. Namely, $\Qq$ is defined as the set of all $\myvv\in \R^k$ satisfying the following linear inequalities:
\begin{itemize}[nosep]
\item inequalities $\sca{\ff}{\myvv}\geq 0$ for all $\ff\in \Ff$ that define $\realcone(D)$; and
\item for every $\Gg\subseteq \Ff$ such that $\Rr_\Gg$ is nonempty and $\Rr_\Gg\cap \Zz=\emptyset$, the inequality
$$\sum_{\gg\in \Gg} \sca{\gg}{\myvv}\geq 1+\sum_{\gg\in \Gg} p_{\gg}.$$
\end{itemize}
In essence, the inequalities from the second point ``carve out'' those parts $\Rr_\Gg$ that should not be included in~$\Zz$. We note that computing the inequalities defining $\Qq$ requires solving several auxiliary integer programs to figure out for which $\Gg\subseteq \Ff$ the corresponding inequality should be included.

It is now straightforward to verify, using all the accumulated observations, that indeed $\Zz=\Rr\cap \Qq$ as required. This concludes a sketch of the proof of \cref{thm:reduction-overview}.

\subsection{\texorpdfstring{\nfold{}}{n-fold} programming}

We now give an overview of the proof of \cref{thm:nfold-intro}. For simplicity, we make the following assumptions.
\begin{itemize}[nosep]
\item We focus on the feasibility problem instead of optimization. At the very end, we will remark on what additional ideas are needed to also tackle the optimization problem.
\item We assume that all the diagonal blocks $D_i$ are equal: $D_i=D$ for all $i\in \{1,\ldots,n\}$, where $D$ is a $k\times k$ integer matrix with $\|D\|_\infty\leq \Delta$. This is only a minor simplification because there are only $(2\Delta+1)^{k^2}$ different matrices $D_i$ with $\|D_i\|_\infty\leq \Delta$, and in the general case, we simply treat every such possible matrix ``type'' separately using the reasoning from the simplified case.
\end{itemize}

\paragraph*{Breaking up bricks.} Basic components of the given $n$-fold program $P=(C,D,\av,\bb_i\colon i\in \{1,\ldots,n\})$ are {\em{bricks}}: programs $D\yy_i=\bb_i$ for $i\in \{1,\ldots,n\}$ that encode local constraints on the variables $\yy_i$. While the entries of $D$ are bounded in absolute values by the parameter $\Delta$, we do not assume any bound on the entries of vectors $\bb_i$. This poses an issue, as different bricks may have very different behaviors.

The key idea in our approach is to simplify the program $P$ by iteratively breaking up every brick $D\yy=\bb$ into two bricks $D\yy=\bb'$ and $D\yy=\bb''$ with strictly smaller right-hand sides $\bb',\bb''$, until eventually, we obtain an equivalent $n$-fold program $P'$ in which all right-hand sides have  $\ell_\infty$-norms bounded in terms of the parameters. The following lemma is the crucial new piece of technology used in our proof. (Here, we use the {\em{conformal order}} on $\Z^k$: we write $\uu\cfleq \myvv$ if $|\uu[i]|\leq |\myvv[i]|$ and $\uu[i]\cdot\myvv[i]\geq 0$ for all $i\in \{1,\ldots,k\}$.)

\begin{lemma}[Brick Decomposition Lemma, see \cref{lem:Decomp}]\label{lem:brick-overview}
There exists a function $g(k,\Delta)\in 2^{(k\Delta)^{\Oh(k)}}$ such that the following holds. Let $D$ be an integer matrix with $t$ columns and $k$ rows and all absolute values of its entries bounded by $\Delta$. Further, let $\bb\in \Z^{k}$ be an integer vector such that $\|\bb\|_\infty>g(k,\Delta)$. Then there are non-zero vectors $\bb',\bb''\in \Z^k$ such that:
\begin{itemize}[nosep]
    \item $\bb',\bb''\cfleq \bb$ and $\bb=\bb'+\bb''$; and
    \item for every $\myvv\in \Zn^{\yy}$ satisfying $D\myvv=\bb$, there exist $\myvv',\myvv''\in \Zn^{\yy}$ such that
    $$\myvv=\myvv'+\myvv'',\qquad D\myvv'=\bb',\qquad\textrm{and}\qquad  D\myvv''=\bb''.$$
\end{itemize}
\end{lemma}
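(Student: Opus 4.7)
}
My plan is to pass to a compressed problem at the level of column types, and then to construct a small universal ``peel'' $\bb'$ present in every solution.

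\emph{Step 1 (Type compression).} Since $\|D\|_\infty \le \Delta$, each column of $D$ lies in $\{-\Delta,\ldots,\Delta\}^k$, so there are at most $s \le (2\Delta+1)^k$ distinct column types $T_1,\ldots,T_s$. Each $\myvv \in \Zn^t$ with $D\myvv = \bb$ induces a type-count vector $\tup n(\myvv) \in \Zn^s$ with $\sum_j n_j(\myvv)\,T_j = \bb$, and conversely every element of $\mathcal{N} := \{\tup n \in \Zn^s : \sum_j n_j T_j = \bb\}$ is realized by some solution. The key observation is that a decomposition $\myvv = \myvv' + \myvv''$ with $D\myvv' = \bb'$, $D\myvv'' = \bb''$ exists if and only if $\tup n(\myvv) = \tup n' + \tup n''$ with $T\tup n' = \bb'$, $T\tup n'' = \bb''$, because within each type one may redistribute counts among identical columns arbitrarily in the nonnegative integers. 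Hence it suffices to prove the lemma for the compressed matrix $T := [T_1 \mid \cdots \mid T_s]$ in place of $D$.

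\emph{Step 2 (Reduce to a universal peel).} It will be enough to produce $\bb' \in \Z^k$ with $\zero \ne \bb' \cfleq \bb$ and $\|\bb'\|_\infty \le g(k,\Delta)$, such that for every $\tup n \in \mathcal{N}$ there is $\tup n' \in \Zn^s$ with $\tup n' \le \tup n$ (coordinate-wise) and $T\tup n' = \bb'$. Under the hypothesis $\|\bb\|_\infty > g(k,\Delta)$, the complement $\bb'' := \bb - \bb'$ is then automatically non-zero and conformal with $\bb$, and lifting $\tup n = \tup n' + (\tup n - \tup n')$ back to the variable level produces the desired $\myvv',\myvv''$.

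\emph{Step 3 (Constructing the peel).} Each $\tup n \in \mathcal{N}$ satisfies $\sum_j n_j \ge \|\bb\|_\infty / \Delta$, so by pigeonhole some type is used at least $\|\bb\|_\infty / (s\Delta)$ times---however, the heavy type may depend on $\tup n$. To obtain a universally extractable peel I would combine this pigeonhole with the Graver basis bound for $T$: any two vectors of $\mathcal{N}$ differ by a conformal sum of elements of $\ker(T) \cap \Z^s$ each of $\ell_\infty$-norm $\le G(k,\Delta) \le (2s\Delta+1)^k$. Once $\|\bb\|_\infty$ exceeds an appropriate polynomial in $s$, $\Delta$, and $G(k,\Delta)$, Graver-based proximity forces every $\tup n \in \mathcal{N}$ to contain a common ``core'' $\tup n^\star \le \tup n$ whose image $T\tup n^\star$ is a non-zero conformal sub-vector of $\bb$; restricting attention from the outset to column types compatible with the sign orthant of $\bb$ ensures the conformality $\bb' := T\tup n^\star \cfleq \bb$.

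\emph{Main obstacle.} The hardest case is when different solutions genuinely use disjoint column supports, so that no single column type is required in every $\tup n \in \mathcal{N}$---for instance, $T = [(1,0),(0,1),(1,1)]$ with $\bb = (M,M)$ admits the two extremes $(M,M,0)$ and $(0,0,M)$, with no shared support. In such situations one cannot choose $\bb'$ as a single column and must instead pick a balanced combination of types, exploiting a Steinitz-type reordering of the partial sums of each solution to align a common bounded-norm sub-decomposition across all $\tup n \in \mathcal{N}$. Carrying out this alignment carefully, while bookkeeping the product of $s$, $G(k,\Delta)$, and the Steinitz slack, is the main technical work; this is precisely what produces the bound $g(k,\Delta) \in 2^{(k\Delta)^{\Oh(k)}}$ stated in the lemma.
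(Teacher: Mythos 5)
Your high-level plan has the right shape: you correctly identify the main target (a single peel $\bb'$ that can be extracted from \emph{every} solution), and you correctly pinpoint the central difficulty (distinct solutions may use disjoint column supports, so no single column type works). However, three concrete pieces are missing or wrong, and they are exactly where the paper's proof does its work.

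First, the essential tool you need but do not invoke is Klein's Lemma (\cref{lem:Klein}): given finitely many multisets $T_1,\ldots,T_n$ of bounded-norm vectors with equal sums, one can extract non-empty submultisets $S_i\subseteq T_i$ of size at most $2^{\Oh(k\Delta)^k}$ with $\sum S_1=\cdots=\sum S_n$. This is precisely the ``alignment across all solutions'' you gesture at with Steinitz-style reordering, but it is not a direct consequence of the Steinitz lemma and you cannot obtain it just from Graver proximity. Your Step~3, as written, does not yield a concrete mechanism for producing a common extractable sub-sum; Klein's Lemma is that mechanism.

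Second, the set of solutions $\mathcal{N}$ is generally infinite whenever $\ker(D)\cap\Zn^\yy$ is non-trivial, so you cannot apply Klein's Lemma (or any pigeonhole/alignment argument) to ``all $\tup n\in\mathcal{N}$'' directly. The paper handles this by restricting to the (finite, by Dickson's Lemma) antichain of \emph{conformally minimal} solutions, proving the decomposition for those, and then lifting: for an arbitrary solution $\myvv$ pick a minimal $\whvv\cfleq\myvv$ and set $\myvv'=\whvv'+(\myvv-\whvv)$, $\myvv''=\whvv''$. This finiteness reduction is missing from your sketch and is needed to make the argument well-posed.

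Third, your route to conformality $\bb'\cfleq\bb$ is incorrect: you cannot simply ``restrict attention to column types compatible with the sign orthant of $\bb$.'' Columns with signs incompatible with $\bb$ can and do appear in genuine solutions (they cancel), so discarding them changes the problem. In the paper, conformality is achieved \emph{after} Klein's Lemma via a separate bundling step (\cref{lem:bundling}): the partial sums $\pp_i$ produced by exhaustive application of Klein's Lemma are small, and one groups them into bounded-size bundles whose sums are conformal with $\bb$, then splits the bundles into two non-empty parts. You would need an argument of this kind to get $\bb',\bb''\cfleq\bb$; your sign-orthant restriction does not provide one.
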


In other words, \cref{lem:brick-overview} states that the brick $D\yy=\bb$ can be broken into two new bricks $D\yy'=\bb'$ and $D\yy''=\bb''$ with conformally strictly smaller $\bb',\bb''$ so that {\em{every}} potential solution $\myvv$ to $D\yy=\bb$ can be decomposed into solutions $\myvv',\myvv''$ to the two new bricks. It is easy to see that this condition implies that in $P$, we may replace the brick $D\yy=\bb$ with $D\yy'=\bb'$ and $D\yy''=\bb''$ without changing feasibility or, in the case of the optimization problem, the minimum value of the optimization goal. In the latter setting, both new bricks inherit the optimization vector $\cc_i$ from the original brick.

Before we continue, let us comment on the proof of \cref{lem:brick-overview}. We use two ingredients. The first one is the following fundamental result of Klein~\cite{DBLP:journals/mp/Klein22}. (Here, for a multiset of vectors $A$, by $\sum A$ we denote the sum of all the vectors in $A$.)

\begin{lemma}[Klein Lemma, variant from~\cite{CslovjecsekEPVW21}]\label{lem:klein-overview}
Let $T_1,\ldots,T_n$ be non-empty multisets of vectors in $\Z^k$ such that $\sum T_1=\sum T_2=\ldots=\sum T_n$ and all vectors contained in all multisets $T_1,\ldots,T_n$ have $\ell_\infty$-norm bounded by $\Delta$. Then there are non-empty multisets $S_1\subseteq T_1,\ldots,S_n\subseteq T_n$, each of size at most $2^{\Oh(k\Delta)^k}$, such that $\sum S_1=\sum S_2=\ldots=\sum S_n$.
\end{lemma}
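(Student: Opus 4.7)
This statement is the multiset-variant of the Klein Lemma, originally proved in~\cite{DBLP:journals/mp/Klein22} and reformulated in this way in~\cite{CslovjecsekEPVW21}; my plan is to invoke the argument of those papers, whose structure I sketch in three phases.

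First, I would recast each multiset as an ordered walk. Fix an arbitrary ordering $\vv_{i,1},\ldots,\vv_{i,m_i}$ of each $T_i$ and consider the partial-sum walk $\ww_i^{(j)}=\sum_{\ell\leq j}\vv_{i,\ell}$, which starts at $\zero$ and ends at the common total $\sv=\sum T_i$. A Steinitz-type rearrangement adapted to the $\ell_\infty$ setting allows one to reorder each $T_i$ so that every partial sum $\ww_i^{(j)}$ lies within $\ell_\infty$-distance $\Oh(k\Delta)$ from the straight segment joining $\zero$ to $\sv$. Modulo translation along the direction of $\sv$, the positions visited by each walk are thus confined to a ``slim tube'' containing only $(k\Delta)^{\Oh(k)}$ integer lattice points.

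Second, I would perform a pigeonhole peel-off. If some $T_i$ has more than $N:=2^{\Oh(k\Delta)^k}$ elements, its walk must revisit the same relative position, yielding a non-empty contiguous subblock $B_i\subseteq T_i$ whose sum $\uu$ belongs to the small universe of displacements available inside the slim tube. The key \emph{synchronization} step is to enforce that the \emph{same} vector $\uu$ is realized as a subblock sum in every $T_i$ simultaneously: this is done by a product-walk argument over the joint state space of all $n$ walks, using that they share the endpoint $\sv$ so that coordinated revisits can be matched via a color-coding / canonical-ordering trick, converting existential matches into a single common $\uu$.

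Third, once synchronized subblocks $S_1,\ldots,S_n$ with $\sum S_i=\uu$ and $|S_i|\leq N$ are extracted, we are done: we simply output these $S_i$, without needing to iterate, because the conclusion of the lemma only asks for a single common sub-multiset sum. The main obstacle I anticipate is exactly the synchronization: ensuring that one can pick the common $\uu$ and the common block sizes uniformly across all $n$ walks while keeping $|S_i|$ bounded \emph{independently of $n$}. This is precisely where the bound $2^{\Oh(k\Delta)^k}$ arises, and it is the step where I would follow the technical argument of~\cite{DBLP:journals/mp/Klein22, CslovjecsekEPVW21} rather than attempt to reprove it from scratch.
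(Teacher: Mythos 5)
The paper itself does not prove this lemma; it is imported as a black box, cited as Theorem~4.1 of~\cite{CslovjecsekEPVW21} (which in turn builds on~\cite{DBLP:journals/mp/Klein22}), and restated as \cref{lem:Klein} with adjusted notation. Your plan to invoke the argument from those papers rather than reprove it is therefore exactly what the paper does, so at that level the proposal matches.

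One caveat on the sketch you offer, in case you ever want to turn it into an actual proof: the synchronization step as you literally describe it (``a product-walk argument over the joint state space of all $n$ walks'') cannot work on its own, since the joint state space has size $(\text{tube cross-section})^n$, which would give a bound on $|S_i|$ that grows with $n$ rather than the $n$-independent $2^{\Oh(k\Delta)^k}$ the lemma promises. The real content of Klein's argument is precisely how to extract a \emph{single} common displacement $\uu$ achievable in \emph{every} walk without paying a factor exponential in $n$; this is done by a more structured argument exploiting the fact that all $n$ walks share both endpoints $\zero$ and $\sv$ and live in the same bounded tube, not by a generic product/color-coding trick. You do flag this as the hard step and defer to the literature, which is the right call, but if you ever fill in the sketch you should replace the product-walk idea with the actual mechanism from~\cite{DBLP:journals/mp/Klein22,CslovjecsekEPVW21}.
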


In the context of the proof of \cref{lem:brick-overview}, we apply~\cref{lem:klein-overview} to the family of all multisets~$T$ that consist of columns of $D$ and satisfy $\sum T=\bb$. By encoding multiplicities, such multisets correspond to vectors $\myvv\in \Zn^k$ satisfying $D\myvv=\bb$. (We hide here some technicalities regarding the fact that this family is infinite.) By \cref{lem:klein-overview}, from each such multiset $T$, we can extract a submultiset $S$ of bounded size such that all the submultisets $S$ sum up to the same vector $\bb'$. Denoting $\bb''=\bb-\bb'$, this means that every vector $\myvv\in \Zn^k$ satisfying $D\myvv=\bb$ can be decomposed as $\myvv=\myvv'+\myvv''$ with $\myvv',\myvv''\in \Zn^k$ so that $D\myvv'=\bb'$ and $D\myvv''=\bb''$. Namely, $\myvv'$ corresponds to the vectors contained in $S$ and $\myvv''$ corresponds to the vectors contained in $T\setminus S$, where $T$ is the multiset corresponding to $\myvv$.

There is an issue in the above reasoning: we do not obtain the property $\bb',\bb''\cfleq \bb$, which will be important in later applications of \cref{lem:brick-overview}. To bridge this difficulty, we apply the argument above exhaustively to decompose $\bb$ as $\bb_1+\ldots+\bb_m$, for some integer $m$, so that every vector $\bb_i$ has the $\ell_\infty$-norm bounded by $2^{\Oh(k\Delta)^k}$ and every vector $\myvv\in \Zn^k$ satisfying $D\myvv=\bb$ can be decomposed as $\myvv=\myvv_1+\ldots+\myvv_m$ where $\myvv_i\in \Zn^k$ satisfies $D\myvv_i=\bb_i$. Then, we treat vectors $\bb_1,\ldots,\bb_m$ with the following lemma.

\begin{lemma}[see \cref{lem:bundling}]
    \label{lem:bundling-overview}
    Let $\uu_1,\ldots,\uu_m$ be vectors in $\Z^k$ of $\ell_\infty$-norm bounded by $\Xi$, and let $\bb=\sum_{i=1}^m \uu_i$. Then the vectors $\uu_1,\ldots,\uu_m$ can be grouped into non-empty groups $U_1,\ldots,U_\ell$, each of size at most $ \Oh(\Delta)^{2^{k-1}}$, so that $\sum U_i\cfleq \bb$ for all $i=1,\ldots,\ell$.
\end{lemma}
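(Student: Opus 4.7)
The plan is to prove the lemma by induction on the dimension $k$, aiming for a size bound satisfying a recurrence of the form $g(k) \leq \Oh(\Xi) \cdot g(k-1)^2$ with base value $g(1) = \Oh(\Xi)$, which solves to the claimed double-exponential bound. I will use $\Xi$ throughout (the stated assumption on $\|\uu_i\|_\infty$); the paper's use of $\Delta$ in the size bound appears to refer to the same quantity.

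\textbf{Base case ($k=1$).} Here the vectors are integers $u_1, \ldots, u_m$ with $|u_i| \leq \Xi$ summing to $b$; assuming without loss of generality $b \geq 0$, each group sum must lie in $[0,b]$. This is a standard combinatorial statement: a Steinitz-style reordering keeps the prefix sums of the reordered sequence within $\Xi$ of the linear path from $0$ to $b$; then, since these prefix sums are integers taking values in a bounded set, pigeonhole produces short sub-blocks with sums in $[0,b]$ that can be peeled off as groups of size $\Oh(\Xi)$.

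\textbf{Inductive step ($k{-}1 \to k$).} Let $\pi : \Z^k \to \Z^{k-1}$ project out the last coordinate. Applying the induction hypothesis to the vectors $\pi(\uu_1), \ldots, \pi(\uu_m)$, which sum to $\pi(\bb)$ and have $\ell_\infty$-norm at most $\Xi$, produces meta-groups $V_1, \ldots, V_p$ each of size at most $g(k-1)$ with $\sum \pi(V_j) \cfleq \pi(\bb)$. Consider the last-coordinate sums $\sigma_j := \sum_{u \in V_j} u[k]$; these integers satisfy $|\sigma_j| \leq \Xi \cdot g(k-1)$ and $\sum_j \sigma_j = \bb[k]$. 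Applying the base case to the $\sigma_j$'s with parameter $\Xi \cdot g(k-1)$ delivers super-groups $W_1, \ldots, W_q$ of size $\Oh(\Xi \cdot g(k-1))$, each with $\sum_{j \in W_i} \sigma_j \cfleq \bb[k]$. Setting $U_i := \bigcup_{j \in W_i} V_j$ then yields final groups of size at most $\Oh(\Xi) \cdot g(k-1)^2$, matching the recurrence.

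\textbf{Main obstacle and its resolution.} A priori, combining several meta-groups into a single $U_i$ could destroy the conformal condition on the first $k-1$ coordinates, as several vectors each $\cfleq \pi(\bb)$ can well sum to a vector with absolute value exceeding $|\pi(\bb)|$ in some coordinate. The crucial observation that resolves this is that, by the induction hypothesis, for every coordinate $j' \leq k-1$, each quantity $(\sum V_j)[j']$ has the same sign as $\bb[j']$ (or is zero when $\bb[j'] = 0$). Hence any partial sum $\sum_{j \in W_i} (\sum V_j)[j']$ over a subset $W_i$ has the same sign as $\bb[j']$ as well; and since these same-signed contributions add up to exactly $\bb[j']$ across all meta-groups, the partial sum over $W_i$ has absolute value at most $|\bb[j']|$. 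Combined with the last-coordinate guarantee from the base case, this gives $\sum U_i \cfleq \bb$ for every $i$, closing the induction. Consequently, the nontrivial combinatorial work is concentrated in the base case; once the same-signedness invariant provided by the IH is extracted, the inductive step follows essentially immediately.
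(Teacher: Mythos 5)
Your argument is correct, but it takes a genuinely different route from the paper's actual proof of \cref{lem:bundling}. The paper (after a reviewer's suggestion, as noted in the acknowledgements) extracts each group in one shot via a Graver-basis argument: it augments the multiset $U$ with auxiliary $\{0,\pm 1\}$-vectors $W$ summing to $-\bb$ and sign-compatible with $-\bb$, forms the matrix $A$ whose columns are the elements of $U\cup W$ (so that $A\tup 1 = \zero$), and then uses the $\ell_1$-bound on $\Graver(A)$ to find a nonzero $\gg\cfleq\tup 1$ of $\ell_1$-norm at most $(2k\Delta+1)^k$; the $U$-part of $\gg$ is the group to peel off, and the $W$-part forces the group sum to be $\cfleq\bb$. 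That proof is shorter, avoids any dimension recursion, and gives the singly exponential size bound $(2k\Delta+1)^k$ stated in the main text, rather than the doubly exponential $\Oh(\Delta)^{2^{k-1}}$ appearing in the overview statement (which, incidentally, is a leftover from the earlier proof and is what your recurrence is tracking). Your approach --- project to $k-1$ coordinates, apply the induction hypothesis, then handle the last coordinate via the $1$-dimensional base case applied to the per-meta-group sums with inflated parameter $\Xi\cdot g(k-1)$ --- is exactly the paper's \emph{original}, now-replaced argument, and your ``main obstacle and its resolution'' paragraph is the right observation that makes it go through: when all parts of a partition of $\bb[j']$ share the sign of $\bb[j']$, every sub-sum is automatically bounded by $|\bb[j']|$. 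Two small caveats: your base case is stated a bit loosely --- the standard way to make ``Steinitz plus pigeonhole'' rigorous here is a case split on $|b|\le\Xi$ (greedy sign-balancing order keeps prefix sums in $\{-\Xi,\dots,b+\Xi\}$, pigeonhole on equal prefix sums) versus $|b|>\Xi$ (some prefix sum already lands in $\{1,\dots,\Xi\}$); and the recurrence $g(k)=\Oh(\Xi)g(k-1)^2$ actually solves to $\Oh(\Xi)^{2^k-1}$, not $\Oh(\Xi)^{2^{k-1}}$, though this off-by-one in the exponent is immaterial for the paper's downstream bound $2^{(k\Delta)^{\Oh(k)}}$.
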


More precisely, \cref{lem:bundling-overview} allows us to group vectors $\bb_1,\ldots,\bb_m$ into groups of bounded size so that the sum within each group is sign-compatible with $\bb$. Assuming $\|\bb\|_\infty$ is large enough, there will be at least two groups. Then, any non-trivial partition of the groups translates into a suitable decomposition $\bb=\bb'+\bb''$ with $\bb',\bb''\cfleq \bb$.

The proof of \cref{lem:bundling-overview} is by induction on $k$ and uses arguments similar to standard proofs of Steinitz Lemma. This concludes a sketch of the proof of \cref{lem:brick-overview}.
\medskip

Once \cref{lem:brick-overview} is established, it is natural to use it iteratively: break $\bb$ into $\bb',\bb''$, then break $\bb'$ into two even smaller vectors, and so on. By applying the argument exhaustively, eventually we obtain a collection of vectors $\bb_1,\ldots,\bb_m\cfleq \bb$ such that $\bb=\bb_1+\ldots+\bb_m$,  $\|\bb_i\|_\infty\leq 2^{(k\Delta)^{\Oh(k)}}$ for all $i\in \{1,\ldots,m\}$, and every $\myvv\in \Zn^k$ satisfying $D\myvv=\bb$ can be decomposed as $\myvv=\myvv_1+\ldots+\myvv_m$ with $\myvv_i\in \Zn^k$ and $D\myvv_i=\bb_i$ for all $i\in \{1,\ldots,m\}$. We call such a collection a {\em{faithful decompostion}} of $\bb$ of {\em{order}} $2^{(k\Delta)^{\Oh(k)}}$.

There is an important technical caveat here. Observe that the size $m$ of a faithful decomposition of a right-hand side $\bb$ can be as large as $\Omega(\|\bb\|_1)$, which is {\em{exponential}} in the bitsize of the program~$P$. So we cannot hope to compute a faithful decomposition explicitly within the target time complexity. However, observe that all vectors $\bb_i$ in a faithful decomposition $\Bb$ are bounded in $\ell_\infty$-norm by $\Xi\coloneqq 2^{(k\Delta)^{\Oh(k)}}$, and there are only at most $(2\Xi+1)^k$ different such vectors. Therefore, $\Bb$ can be encoded by storing, for each vector $\bb'$ present in $\Bb$, the multiplicity of~$\bb'$ in~$\Bb$. Thus, describing $\Bb$ takes $2^{(k\Delta)^{\Oh(k)}}\cdot \log \|\bb\|_\infty$ bits.

With this encoding scheme in mind, we show that a faithful decomposition $\Bb$ of a given vector $\bb$ of order at most $\Xi$ can be computed in fixed-parameter time $f(\Delta,k)\cdot (\log \|\bb\|_\infty)^{\Oh(1)}$, for a computable function $f$. For this, we show that one can extract parts of the decomposition in ``larger chunks'', at each step reducing the $\ell_1$-norm of the decomposed vector by a constant fraction; this gives a total number of steps logarithmic in $\|\bb\|_1$. In each step, to extract the next large chunk of the decomposition, we use the fixed-parameter algorithm for optimization problems definable in Presburger arithmetic, due to Kouteck\'y and Talmon~\cite{KouteckyT21}. We remark that in our context, this tool could be also replaced by the fixed-parameter algorithm of Eisenbrand and Shmonin~\cite{DBLP:journals/mor/EisenbrandS08} for $\forall\exists$ integer programming.

\paragraph*{Reduction to (mixed) integer programming with few variables.} With faithful decompositions understood, we can compute, for every right-hand side $\bb_i$ part of $P$, a faithful decomposition $\{\bb^1_i,\ldots,\bb_i^{m_i}\}$ of $\bb_i$. This allows us to construct an equivalent (in terms of feasibility and optimization) $n$-fold program~$P'$ by replacing each brick $D\yy_i=\bb_i$ with bricks $D\yy_i^j=\bb_i^j$ for $j\in \{1,\ldots,m_i\}$. Thus, the program $P'$ has an exponential number of bricks, but can be computed and described concisely: all right-hand sides are bounded in the $\ell_\infty$-norm by at most $\Xi$, so for every potential right-hand side $\bb$, we just write the multiplicity in which $\bb$ appears in $P'$. We remark that such {\em{high-multiplicity encoding}} of $n$-fold integer programs has already been studied by Knop et al.~\cite{KnopKLMO23}.

For convenience, let $\RHS\coloneqq \{-\Xi,\ldots,\Xi\}^k$ be the set of all possible right-hand sides, and for $\bb\in \RHS$, by $\cntTypes[\bb]$ we denote the multiplicity of $\bb$ in $P'$.

It is now important to better understand the set of solutions to a single brick $D\yy=\bb$ present in $P'$. Here comes a key insight stemming from the theory of Graver bases: as (essentially) proved by Pottier~\cite{Pottier91}, every solution $\ww\in \Zn^k$ to $D\ww=\bb$ can be decomposed as $\ww=\whww+\gg_1+\ldots+\gg_\ell$, where
\begin{itemize}[nosep]
\item $\whww\in \Zn^k$ is a {\em{base solution}} that also satisfies $D\whww=\bb$, but $\|\whww\|_\infty$ is bounded by a function of $\Delta$ and~$\|\bb\|_\infty$, and
\item $\gg_1,\ldots,\gg_\ell\in \Zn^k$ are elements of the Graver basis of $D$.
\end{itemize}
Here, the {\em{Graver basis}} of $D$ consists of all conformally-minimal non-zero vectors $\gg$ satisfying $D\gg=\mathbf{0}$. In particular, it is known that the Graver basis is always finite and consists of vectors of $\ell_\infty$ norm bounded by $(2k\Delta+1)^k$~\cite{EisenbrandHK18}. The decomposition explained above will be called a {\em{Graver decomposition}} of $\ww$.

For $\bb\in \RHS$, let $\Base[\bb]$ be the set of all possible base solutions $\whww$ to $D\yy=\bb$. As $\|\bb\|_\infty\leq \Xi$ and $\Xi$ is bounded by a function of the parameters under consideration, it follows that $\Base[\bb]$ consists only of vectors of bounded $\ell_\infty$-norms, and therefore it can be efficiently constructed.

Having this understanding, we can write an integer program $M$ with few variables that is equivalent to $P'$. The variables are as follows:
\begin{itemize}[nosep]
    \item For every $\bb\in \RHS$ and $\whww\in \Base[\bb]$, we introduce a variable $\zeta^\bb_\whww\in \Zn$ that signifies how many times in total $\whww$ is used in the Graver decompositions of solutions to individual~bricks.
    \item For every nonnegative vector $\gg$ in the Graver basis of $D$, we introduce a variable $\delta_\gg\in \Zn$ signifying how many times in total $\gg$ appears in the Graver decompositions of solutions to individual bricks.
\end{itemize}
Note that since program $P'$ is uniform, the guessed base solutions and elements of the Graver basis can be assigned to {\em{any}} brick with the same effect on the linking constraints of $P'$. Hence, it suffices to verify the cardinalities and the total effect on the linking constrains of $P'$, yielding the following constraints of~$M$:
\begin{itemize}[nosep]
    \item the translated linking constraints: $\sum_{\bb\in \RHS}\sum_{\whww\in \Base[\bb]} \zeta^{\bb}_\whww\cdot C\whww+\sum_{\gg\in \Graver(D), \gg\geq 0} \delta_\gg\cdot C\gg =\av$.
    \item for every $\bb\in \RHS$, the cardinality constraint $\sum_{\whww\in \Base[\bb]}\zeta^\bb_\whww=\cntTypes[\bb]$.
\end{itemize}
Noting that the number of variables of $M$ is bounded in terms of the parameters, we may apply any fixed-parameter algorithm for integer programming parameterized by the number of variables, for instance that of Kannan~\cite{kannan1987minkowski}, to solve $M$. This concludes the description of the algorithm for the feasibility problem.

In the case of the optimization problem, there is an issue that the optimization vectors $\cc_i$ may differ between different bricks, and there may be as many as $n$ different such vectors. While the Graver basis elements can be always greedily assigned to bricks in which their contribution to the optimization goal is the smallest, this is not so easy for the base solutions, as every brick may accommodate only one base solution.
We may enrich $M$ by suitable assignment variables $\omega^{\bb,i}_{\whww}$ to express how many base solutions of each type are assigned to bricks with different optimization vectors; but this yields as many as $\Omega(n)$ additional variables. Fortunately, we observe that in the enriched program $M$, if one fixes any integral valuation of variables $\zeta^{\bb}_{\whww}$ and $\delta_\gg$, the remaining problem on variables $\omega^{\bb,i}_{\whww}$ corresponds to a flow problem, and hence its constraint matrix is totally unimodular. Thus, we may solve $M$ as a mixed integer program where variables $\omega^{\bb,i}_{\whww}$ are allowed to be fractional. The number of integral variables is bounded in terms of parameters, so we may apply the fixed-parameter algorithm for mixed integer programming of Lenstra~\cite{Lenstra1983}.

\section{Preliminaries}

We write $\Zn$ and $\myRn$ for the sets of nonnegative integers and nonnegative reals, respectively.

\paragraph*{Linear algebra.} In the technical part of this paper we choose to use a somewhat unorthodox notation for linear algebra, which we introduce now. The correspondence between this notation and the standard notation used in the previous sections is straightforward.

A {\em{tuple of variables}} is just a finite set of variable names. We use the convention that tuples of variables are denoted $\xx,\yy,\zz,$ etc., while individual variable names are $x,y,z,$ etc. For a set $A$ and a tuple of variables~$\xx$, an {\em{$\xx$-vector}} over $A$ is a function $\uu\colon \xx\to A$.
The value of $\uu$ on a variable $x\in \xx$ will be denoted by $\uu[x]$, and we call it the {\em{$x$-entry}} of $\uu$.
Vectors will be typically denoted by $\av,\bb,\cc,\uu,\myvv,\ww$ and so on. We write $A^{\xx}$ for the set of all $\xx$-vectors over $A$.
We can also apply arithmetics to vectors coordinate-wise in the standard manner, whenever the set $A$ is endowed with a structure of a ring. When $A=\R$, as usual we write
$\|\uu\|_\infty\coloneqq \max_{x\in \xx} |\uu[x]|$ and $\|\uu\|_1\coloneqq \sum_{x\in \xx} |\uu[x]|$.

For a ring $R$ (typically $\Z$ or $\R$) and tuples of variables $\tup x$ and $\tup y$, an {\em{$\tup x\times \tup y$-matrix}} is an $\tup x$-vector of $\tup y$-vectors, that is, an element of $(R^{\tup y})^{\tup x}$; we denote the latter set as $\matrices{R}{\xx}{\yy}$ for convenience.
The reader should think of the individual $\yy$-vectors comprised in a matrix as of the columns of the said matrix. We use capital letters $A,B,C,$ etc. for matrices. For $x\in \xx$ and $y\in \yy$, we write $A[x,y]\coloneqq (A[x])[y]$, that is, $A[x,y]$ is the $y$-entry of the column of $A$ corresponding to variable~$x$. As for vectors, by $\|A\|_{\infty}\coloneqq \max_{x\in \xx,y\in \yy} |A[x,y]|$ we denote the maximum absolute value of an entry of $A$.

For a matrix $A\in \matrices{R}{\xx}{\yy}$ and a vector $\uu\in R^{\xx}$, $A\uu$ is the vector $\myvv\in R^{\yy}$ satisfying
$$\myvv[y]=\sum_{x\in \xx} A[x,y]\cdot \uu[x]\qquad\textrm{for all }y\in \yy.$$
For vectors $\uu,\myvv\in R^{\xx}$, the {\em{inner product}} of $\uu$ and $\myvv$ is
$$\sca{\uu}{\myvv}\coloneqq \sum_{x\in \xx} \uu[x]\cdot \myvv[x].$$
All-zero and all-one vectors are denoted by $\zero$ and $\tup 1$, respectively, and their domains will be always clear from the context.

\paragraph*{Conformal order.} Let $\xx$ be a tuple of variables. Vectors $\uu,\myvv\in \Z^{\xx}$ are {\em{sign-compatible}} if $\uu[x]\cdot \myvv[x]\geq 0$ for all $x\in \xx$; that is, on every coordinate $\uu$ and $\myvv$ must have the same sign, where $0$ is assumed to be compatible with both signs.
The {\em{conformal order}} $\cfleq$ on vectors in $\Z^{\xx}$ is defined as follows: For two vectors $\uu,\myvv\in \Z^{\xx}$, we have $\uu\cfleq \myvv$ if $\uu$ and $\myvv$ are sign-compatible and
$$|\uu[x]|\leq |\myvv[x]|\qquad \textrm{for all }x\in \xx.$$
Note that thus, $\cfleq$ is a partial order on $\Z^{\xx}$.

\paragraph*{Collections.} A {\em{collection}} over a set $A$ is a function $C\colon I\to A$, where $I$ is the {\em{index set}} of $C$.
We often use notation $C=\mset{c_i\colon i\in I}$ to enumerate the elements of a collection with the elements of its index set, where formally $c_i$ is the value of $C$ on $i$. Note that syntactically, a collection with index set $I$  is just an $I$-vector, but we will use collections and vectors in different ways. Typically, when speaking about collections we are not really interested in the index set, and the reader can always assume it to be a prefix of natural numbers. However, in notation it will be convenient to consider various index sets.

Standard set theory notation can be used in the context of  collections in the natural manner, by applying it to the index sets. In particular, a {\em{subcollection}} of a collection $C=\mset{c_i\colon i\in I}$ is any collection $C'=\mset{c_i\colon i\in I'}$ for $I'\subseteq I$.

\paragraph*{Block-structured integer programming.} We now introduce the two variants of block-structured integer programming problems that we are interested in: \stage{} integer programming and \nfold{} integer programming.

A {\em{\stage{} integer program}} consists of three collections:
\begin{itemize}[nosep]
 \item $\mset{A_i\colon i\in I}$ is a collection of matrices in $\matrices{\Z}{\xx}{\tv}$;
 \item $\mset{D_i\colon i\in I}$ is a collection of matrices in $\matrices{\Z}{\yy}{\tv}$; and
 \item $\mset{\bb_i\colon i\in I}$ is a collection of vectors in $\Z^{\tv}$.
\end{itemize}
Here, $\xx,\yy,\tv$ are tuples of variables and $I$ is the index set of the program. A {\em{solution}} to such a program consists of a vector $\uu\in \Zn^\xx$ and vectors $\mset{\myvv_i\colon i\in I}$ in $\Zn^{\yy}$ such that
\begin{gather*}
A_i\uu+D_i\myvv_i = \bb_i \quad \textrm{for all }i\in I.
\end{gather*}
In the \stageFeas{} problem, we are given a \stage{} integer program in the form described above, and the task is to decide whether this program has a solution.

%In the \stageOpt{} problem we are additionally given a vector $\dd\in \Z^{\xx}$ and vectors $\mset{\cc_i\colon i\in I}$ in $\Z^{\yy}$, and the goal is to minimize
%$\sca{\dd}{\uu}+\sum_{i\in I}\sca{\cc_i}{\myvv_i}$
%among the solutions. We may add the prefix {\sc{Uniform}} to the names of these problems to specify that we restrict attention to uniform programs.

An {\em{\nfold{} integer program}} consists of the following components:
\begin{itemize}[nosep]
 \item $\mset{C_i\colon i\in I}$ is a collection of matrices in $\matrices{\Z}{\yy}{\sv}$;
 \item $\mset{D_i\colon i\in I}$ is a collection of matrices in $ \matrices{\Z}{\yy}{\tv}$;
 \item $\av$ is a vector in $\Z^{\sv}$; and
 \item $\mset{\bb_i\colon i\in I}$ is a collection of vectors in $\Z^{\tv}$.
\end{itemize}
Again, $\yy,\sv,\tv$ are tuples of variables and $I$ is the index set of the program. A {\em{uniform}} program is one where $C_i=C_j$ for all $i,j\in I$.
A {\em{solution}} to the program described above consists of vectors $\mset{\myvv_i\colon i\in I}$ in $\Zn^\yy$ such that
$$\sum_{i\in I}C_i\myvv_i=\av\qquad\textrm{and}\qquad D_i\myvv_i=\bb_i\quad\textrm{for all }i\in I.$$
The \nfoldFeas{} problem asks for the existence of a solution to a given \nfold{} integer program, while in the \nfoldOpt{} problem, the task is to minimize $\sum_{i\in I}\sca{\cc_i}{\myvv_i}$ among the solutions, for additionally given vectors $\mset{\cc_i\colon i\in I}$ in~$\Z^\yy$. A prefix {\sc{Uniform}} may be added to the problem name to specify that we speak about uniform programs. (Note that only the matrices $C_i$ have to be equal, and not necessarily the optimization goal vectors $\cc_i$.)

\medskip

For an integer program $P$, be it a \stage{} or an \nfold{} program, by $\|P\|$ we denote the total bitsize of the encoding of $P$, where all numbers are encoded in binary.
\section{Two-stage stochastic integer programming}

Our main result for \stage{} integer programs is captured in the following statement, which is \cref{thm:2stage-intro} with adjusted notation.

\begin{theorem}\label{thm:2stage-main}
An instance $P=\mset{A_i,D_i,\bb_i\colon i\in I}$ of \stageFeas{} can be solved in time $f(\Delta,|\xx|,|\tv|)\cdot \|P\|$ for some computable function $f$, where $\Delta=\max_{i\in I} \|D_i\|_\infty$.
\end{theorem}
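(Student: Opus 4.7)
I would reduce the two-stage stochastic feasibility problem to a bounded-dimension integer program by invoking the Reduction to Polyhedral Constraints (\cref{thm:reduction-overview}) on every matrix $D_i$. The key observation is that since $\|D_i\|_\infty \le \Delta$ and $D_i \in \matrices{\Z}{\yy}{\tv}$, every column of every $D_i$ belongs to $\{-\Delta,\ldots,\Delta\}^{\tv}$, and duplicating columns does not change $\intcone(D_i)$. Hence in a preprocessing phase linear in $\|P\|$, I can deduplicate columns of each $D_i$ so that each (de-duplicated) matrix is drawn from a fixed finite family $\Dd$ whose cardinality is bounded by a function of $\Delta$ and $|\tv|$ only.

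\textbf{Branching and reformulation.} Applying \cref{thm:reduction-overview} to every $D\in\Dd$ produces a modulus $B_D$ and, for each residue $\rr\in\{0,\ldots,B_D-1\}^{\tv}$, a polyhedron $\Qq^D_\rr$ with $\Lambda^{B_D}_\rr \cap \intcone(D) = \Lambda^{B_D}_\rr \cap \Qq^D_\rr$. Setting $B$ to be the least common multiple of all $B_D$ --- still bounded in terms of $\Delta$ and $|\tv|$ --- and refining the residue partitions accordingly, I branch over the $B^{|\xx|}$ possible residues $\rr^*$ of the global vector $\uu$ modulo $B$. After fixing $\rr^*$ and writing $\uu = B\zz + \rr^*$ for $\zz\in\Z^{\xx}$, the residue $\rr_i \coloneqq (\bb_i - A_i\rr^*)\bmod B$ is determined independently of $\zz$, so the condition $\bb_i - A_i\uu\in\intcone(D_i)$ becomes equivalent to the purely linear condition $\bb_i - A_i\uu\in\Qq^{D_i}_{\rr_i}$. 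The feasibility question turns into an integer program in the $2|\xx|$ variables $\xx\cup\zz$ with $O(n)$ linear constraints, whose coefficients may be as large as those appearing in $A_i$ and $\bb_i$.

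\textbf{Solving and achieving linearity in $\|P\|$.} Each of the $B^{|\xx|}$ integer programs has a number of variables bounded by a function of the parameters, and can be solved in polynomial time via Kannan's algorithm~\cite{kannan1987minkowski}; this already yields a running time of the form $f(\Delta,|\xx|,|\tv|)\cdot \|P\|^{O(1)}$. The main obstacle to the claimed bound is squeezing the dependence on $\|P\|$ down to \emph{linear}. My plan is to exploit that the $O(n)$ constraints produced after branching split into only $f(\Delta,|\tv|)$ classes according to the value of the pair $(D_i,\rr_i)$, so the resulting IP is effectively a bounded-dimension system on $O(n)$ half-spaces of total bitsize $O(\|P\|)$; one can then invoke a fixed-dimension IP solver whose running time is linear in the total bitsize of the constraints (augmenting Lenstra's method with the appropriate preprocessing of the constraint list), completing the proof within the target bound $f(\Delta,|\xx|,|\tv|)\cdot \|P\|$.
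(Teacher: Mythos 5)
Your proposal takes essentially the same route as the paper: apply the Reduction to Polyhedral Constraints to each (deduplicated) column set $\Dd_i$, take $B$ to be the lcm of the resulting moduli, branch over the $B^{|\xx|}$ residues of the global vector modulo $B$, rewrite each membership condition $\bb_i-A_i\uu\in\intcone(\Dd_i)$ as membership in $\Qq^{D_i}_{\rr_i}$, and finish by solving a $2|\xx|$-variable integer program (variables $\uu$ and $\zz$ with $\uu = B\zz + \rr^*$) via a fixed-dimension IP solver such as Kannan's algorithm. One small inaccuracy: the claim that the $O(n)$ constraints ``split into only $f(\Delta,|\tv|)$ classes according to $(D_i,\rr_i)$'' is not correct as stated, since the constraints also depend on $\bb_i$ and $A_i$ and hence remain genuinely distinct; fortunately this observation is also unnecessary, and the argument goes through exactly as in the paper by just noting that the final IP has a bounded number of variables and total bitsize $g(\Delta,|\tv|)\cdot\|P\|$.
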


Note that importantly, the absolute values of the entries of matrices $A_i$ are {\em{not}} assumed to be bounded in terms of the considered parameters. Also, in \cref{thm:2stage-main} the parameters do not include $|\yy|$, the  number of columns of each block $D_i$. This is because by removing equal columns in those blocks (which does not affect the feasibility of the program), we may always assume $|\yy|\leq (2\Delta+1)^{|\tv|}$.

The proof of \cref{thm:2stage-main} spans the entirety of this section.

\subsection{Cones}

In our proof we will rely on the geometry of polyhedral and integer cones. For this, we introduce the following notation. Let $\tv$ be a tuple of variables and $\Dd\subseteq \Z^\tv$ be a finite set of vectors.
For $S\in \{\R,\myRn,\Z,\Zn\}$, we define the {\em{$S$-span of $\Dd$}} as
$$\spn^S(\Dd)\coloneqq \left\{\sum_{\dd\in \Dd} \lambda_\dd\cdot  \dd~\colon~\mset{\lambda_\dd\colon \dd\in \Dd} \textrm{ is a collection of elements of }S\right\}\subseteq \R^{\tv}.$$
In other words, $\spn^S(\Dd)$ are all vectors that can be expressed as a linear combination of vectors in $\Dd$ with coefficients belonging to $S$.
For clarity, we write
\begin{align*}
\spn(\Dd)&\coloneqq \spn^{\R}(\Dd), &\qquad \realcone(\Dd)&\coloneqq \spn^{\myRn}(\Dd),\\
\lattice(\Dd)&\coloneqq \spn^{\Z}(\Dd), &\qquad \intcone(\Dd)&\coloneqq \spn^{\Zn}(\Dd).
\end{align*}
Thus, $\spn(\Dd)$ is just the usual linear space spanned of $\Dd$. Next, $\realcone(\Dd)$ is the standard {\em{polyhedral cone}}: it consists of all vectors expressible as nonnegative linear combinations of vectors in $\Dd$. Similarly, $\intcone(\Dd)$ is the {\em{integer cone}}, where we restrict attention to nonnegative integer combinations of vectors in $\Dd$. Finally, $\lattice(\Dd)$ is the {\em{integer lattice}} consisting of all vectors that can be reached from $\zero$ by adding and subtracting vectors of $\Dd$.

We remark that while the reader may think of $\Dd$ as of a matrix whose columns are the vectors of~$\Dd$, ordered arbitrarily, in the notation we will consistently treat $\Dd$ as a set of vectors. Consequently, $|\Dd|$ denotes the number of vectors in $\Dd$ (or, the number of columns of $\Dd$ treated as a matrix), and the same also applies to all subsets of $\Dd$.

\subsubsection{Dual representation of cones}

We will use the classic result of Weyl about representing polyhedral cones as intersections of half-spaces.

\begin{theorem}[Weyl,~\cite{Weyl35}]\label{thm:weyl}
 For every set of vectors $\Dd\subseteq \Z^\tv$ there exists a finite set of vectors $\Ff\subseteq \Z^\tv$ such~that
 $$\realcone(\Dd)=\{\myvv\in \R^{\tv}~|~\sca{\ff}{\myvv}\geq 0\textrm{ for all }\ff\in \Ff\}.$$
 Moreover, such a set $\Ff$ can be computed given $\Dd$.
\end{theorem}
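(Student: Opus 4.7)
The plan is to prove Weyl's theorem via Fourier--Motzkin elimination applied to the natural representation of $\realcone(\Dd)$ as a projection. Concretely, by definition $\myvv \in \realcone(\Dd)$ if and only if there exists a nonnegative vector $\lambda \in \myRn^{\Dd}$ such that $\myvv = \sum_{\dd \in \Dd} \lambda_\dd\, \dd$. Viewing $(\myvv,\lambda)$ jointly as a tuple of real variables, this condition is a finite system $\Sigma$ of linear constraints with integer coefficients: for each coordinate $t \in \tv$, the equality $\myvv[t] - \sum_{\dd \in \Dd} \lambda_\dd\, \dd[t] = 0$ (which we split into two inequalities $\geq 0$ and $\leq 0$), together with $\lambda_\dd \geq 0$ for every $\dd \in \Dd$. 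Crucially, every constraint in $\Sigma$ is \emph{homogeneous} (no constant term), and $\realcone(\Dd)$ equals the projection of the solution set of $\Sigma$ onto the $\myvv$-coordinates.

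The key step is to eliminate the auxiliary variables $\lambda_\dd$ one at a time using Fourier--Motzkin elimination. To eliminate a chosen $\lambda_\dd$, partition the inequalities of $\Sigma$ into those in which the coefficient of $\lambda_\dd$ is positive, negative, or zero. For every pair consisting of one positive and one negative inequality, take the unique nonnegative combination that cancels $\lambda_\dd$; together with the inequalities that were already independent of $\lambda_\dd$, these combinations form a new system $\Sigma'$ in one fewer variable whose solution set is exactly the projection of the solution set of $\Sigma$ along the $\lambda_\dd$-axis. Iterating this $|\Dd|$ times eliminates all auxiliary variables and yields a finite system $\Sigma^\star$ of inequalities purely in $\myvv$ whose solution set equals $\realcone(\Dd)$.

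It remains to verify two structural properties of $\Sigma^\star$. First, homogeneity is preserved throughout: every input inequality in $\Sigma$ is homogeneous, and each Fourier--Motzkin combination is a nonnegative linear combination of two homogeneous inequalities and hence homogeneous. Therefore every constraint in $\Sigma^\star$ has the form $\sca{\ff}{\myvv} \geq 0$ for some $\ff \in \Q^{\tv}$. Second, integrality is easily maintained: the combining coefficients are rational, so each new inequality has rational data, and by multiplying through by the least common denominator of its coefficients we may replace $\ff \in \Q^\tv$ by an integer multiple without altering the set it cuts out. Thus the final list of integer vectors $\Ff \subseteq \Z^{\tv}$ satisfies $\realcone(\Dd) = \{\myvv \in \R^{\tv} \colon \sca{\ff}{\myvv} \geq 0 \text{ for all } \ff \in \Ff\}$, and the construction is plainly algorithmic.

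The proof is essentially textbook, so there is no real obstacle; the only item to keep in mind is the standard caveat that Fourier--Motzkin may blow up the number of inequalities quadratically per elimination step, hence doubly exponentially in total. Since the theorem requires neither a cardinality bound on $\Ff$ nor a running time, this is harmless for our purposes.
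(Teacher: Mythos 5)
The paper does not prove this statement; it cites Weyl~\cite{Weyl35} and invokes the result as a known classical fact. Your Fourier--Motzkin argument is the standard textbook proof of this direction of the Minkowski--Weyl theorem, and it is correct and complete: writing $\realcone(\Dd)$ as the projection of a homogeneous integer system, observing that each elimination step takes nonnegative combinations of homogeneous inequalities and hence preserves homogeneity, and clearing denominators at the end to land in $\Z^{\tv}$, all work exactly as you say. The one thing you correctly flag and dismiss is the doubly exponential blow-up in the number of inequalities, which is irrelevant since the theorem asks for no bound on $|\Ff|$ and no running-time guarantee beyond computability. A minor remark on integrality: if you always use the combining multipliers $|a|$ and $|b|$ (the absolute values of the two coefficients of the eliminated variable) rather than arbitrary rational ones, the system stays integral throughout and no clearing of denominators is even needed, though this is purely cosmetic.
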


A set $\Ff$ satisfying the outcome of \cref{thm:weyl} will be called a {\em{dual representation}} of $\realcone(\Dd)$. Here is a simple observation about how elements of the dual representation relate to the elements of $\Dd$.

\begin{observation}\label{obs:dualPositive}
 Suppose $\Ff$ is a dual representation of $\realcone(\Dd)$ for a set of vectors $\Dd\subseteq \Z^\tv$. Then
 $$\sca{\ff}{\dd}\geq 0\quad \textrm{ for all }\ \ff\in \Ff \textrm{ and }\ \dd\in \Dd.$$
\end{observation}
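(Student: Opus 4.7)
The statement is essentially a one-line consequence of unfolding the two definitions involved, so the proof will be very short and there is no real obstacle. The plan is to show that every generator $\dd \in \Dd$ belongs to $\realcone(\Dd)$, and then invoke the defining property of a dual representation to conclude nonnegativity on all functionals $\ff \in \Ff$.

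\textbf{Key step.} For each fixed $\dd_0 \in \Dd$, consider the collection of coefficients $\mset{\lambda_\dd \colon \dd \in \Dd}$ defined by $\lambda_{\dd_0} = 1$ and $\lambda_\dd = 0$ for all $\dd \neq \dd_0$. This is a collection of nonnegative reals, so by definition of $\realcone(\Dd) = \spn^{\myRn}(\Dd)$, the resulting combination $\sum_{\dd \in \Dd} \lambda_\dd \cdot \dd = \dd_0$ lies in $\realcone(\Dd)$. Hence $\Dd \subseteq \realcone(\Dd)$.

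\textbf{Conclusion.} Now apply the dual representation property from \cref{thm:weyl}: every vector $\myvv \in \realcone(\Dd)$ satisfies $\sca{\ff}{\myvv} \geq 0$ for all $\ff \in \Ff$. Specializing to $\myvv = \dd$ for an arbitrary $\dd \in \Dd$ yields the claim.
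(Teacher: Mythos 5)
Your proof is correct and matches the paper's argument exactly: the paper likewise observes that $\dd \in \realcone(\Dd)$ and then applies the defining property of a dual representation. You simply spell out the trivial inclusion $\Dd \subseteq \realcone(\Dd)$ in slightly more detail.
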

\begin{proof}
 Clearly $\dd\in \realcone(\Dd)$, so $\sca{\ff}{\dd}\geq 0$ because $\Ff$ is a dual representation of $\realcone(\Dd)$.
\end{proof}

For a subset $\Gg\subseteq \Ff$, we define
$$\Dd_\Gg \coloneqq \{\dd\in \Dd~|~\sca{\gg}{\dd}=0\textrm{ for all }\gg\in \Gg\}.$$
In other words, $\Dd_\Gg$ consists of those vectors of $\Dd$ that are orthogonal to all vectors in $\Gg$. The next lemma expresses the following intuition: the facet of $\realcone(\Dd)$ corresponding to $\Gg$ is spanned by the vectors of~$\Dd_\Gg$.

\begin{lemma}\label{lem:facetRepresentation}
 Suppose $\Ff$ is a dual representation of $\realcone(\Dd)$ for a set of vectors $\Dd\subseteq \Z^\tv$. Then for every subset $\Gg\subseteq \Ff$ we have
 $$\realcone(\Dd_\Gg) = \realcone(\Dd)\cap \{\myvv\in \R^{\tv}~|~\sca{\gg}{\myvv}=0\textrm{ for all }\gg\in \Gg\}.$$
\end{lemma}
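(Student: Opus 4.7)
\textbf{Proof plan for \cref{lem:facetRepresentation}.}
The plan is to prove the two inclusions separately, using only the definition of $\Dd_\Gg$ for the easy direction and the nonnegativity supplied by \cref{obs:dualPositive} for the other.

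For the inclusion $\realcone(\Dd_\Gg)\subseteq \realcone(\Dd)\cap\{\myvv~|~\sca{\gg}{\myvv}=0\text{ for all }\gg\in\Gg\}$, I would simply take $\myvv=\sum_{\dd\in\Dd_\Gg}\lambda_\dd\dd$ with $\lambda_\dd\geq 0$. Containment in $\realcone(\Dd)$ follows from $\Dd_\Gg\subseteq \Dd$, while $\sca{\gg}{\myvv}=\sum_\dd \lambda_\dd\sca{\gg}{\dd}=0$ by the very definition of $\Dd_\Gg$.

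The interesting direction is $\supseteq$. Let $\myvv\in\realcone(\Dd)$ with $\sca{\gg}{\myvv}=0$ for every $\gg\in\Gg$, and write $\myvv=\sum_{\dd\in\Dd}\lambda_\dd\dd$ with all $\lambda_\dd\geq 0$. Fix any $\gg\in\Gg$. Then
\[
0=\sca{\gg}{\myvv}=\sum_{\dd\in\Dd}\lambda_\dd\cdot\sca{\gg}{\dd}.
\]
By \cref{obs:dualPositive}, each summand is nonnegative, so the sum being zero forces $\lambda_\dd\cdot\sca{\gg}{\dd}=0$ for every $\dd\in\Dd$. Consequently, for every $\dd\in\Dd$ with $\lambda_\dd>0$ we have $\sca{\gg}{\dd}=0$ for every $\gg\in\Gg$, i.e.\ $\dd\in\Dd_\Gg$. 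We may therefore restrict the combination to $\Dd_\Gg$, witnessing $\myvv\in\realcone(\Dd_\Gg)$.

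I do not expect a genuine obstacle here: the whole argument is driven by the sign rigidity of a sum of nonnegative terms equal to zero, and the only external input is the nonnegativity guarantee from \cref{obs:dualPositive}. The one subtlety to watch is that the statement concerns \emph{every} $\gg\in\Gg$ simultaneously, so when concluding $\dd\in\Dd_\Gg$ for the active indices, one should apply the above argument to each $\gg\in\Gg$ separately and then intersect the resulting index sets; since $\Gg$ is finite this causes no trouble.
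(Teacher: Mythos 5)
Your proof is correct and takes essentially the same route as the paper: the $\subseteq$ direction is immediate from the definitions, and the $\supseteq$ direction uses \cref{obs:dualPositive} to argue that a sum of nonnegative terms equal to zero forces each active coefficient to sit on a column orthogonal to all of $\Gg$. The paper phrases the argument by fixing $\ee\in\Dd\setminus\Dd_\Gg$ and choosing a witness $\gg$ with $\sca{\gg}{\ee}>0$, while you fix $\gg$ and sweep over $\dd$; these are equivalent presentations of the same observation.
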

\begin{proof}
 Note that $\Dd_\Gg\subseteq \Dd$ entails $\realcone(\Dd_\Gg)\subseteq \realcone(\Dd)$. Further, since $\sca{\gg}{\dd}=0$ for all $\gg\in \Gg$ and all $\dd\in \Dd_\Gg$, the same also holds for all linear combinations of vectors in $\Dd_\Gg$, implying that $\sca{\gg}{\myvv}=0$ for all $\gg\in \Gg$ and $\myvv\in \realcone(\Dd_\Gg)$. This proves inclusion $\subseteq$.

 For the converse inclusion, consider any $\myvv\in \realcone(\Dd)$ such that $\sca{\gg}{\myvv}=0$ for all $\gg\in \Gg$. As $\myvv\in \realcone(\Dd)$, there is a collection of nonnegative reals $\mset{\lambda_\dd\colon \dd\in \Dd}$ such that
 $$\myvv = \sum_{\dd\in \Dd} \lambda_\dd\cdot \dd.$$
 Consider any $\ee\in \Dd\setminus \Dd_\Gg$. By the definition of $\Dd_\Gg$ and \cref{obs:dualPositive}, there exists $\gg\in \Gg$ such that $\sca{\gg}{\ee}>0$. As all coefficients $\mset{\lambda_\dd\colon \dd\in \Dd}$ are nonnegative, by \cref{obs:dualPositive} we have
 $$0=\sca{\gg}{\myvv}=\sum_{\dd\in \Dd} \lambda_\dd\cdot \sca{\gg}{\dd} \geq \lambda_\ee\cdot \sca{\gg}{\ee}.$$
 As $\sca{\gg}{\ee}>0$, we necessarily have $\lambda_\ee=0$; and this holds for every $\ee\in \Dd\setminus \Dd_\Gg$. So in fact $\myvv$ is a nonnegative linear combination of vectors in $\Dd_\Gg$, or equivalently $\myvv\in \realcone(\Dd_\Gg)$. This proves inclusion~$\supseteq$.
\end{proof}

\subsubsection{Membership in integer cones and in integer lattices}

First, we need a statement that membership in an integer lattice can be witnessed by a vector with small entries. The following lemma follows from standard bounds given by Pottier~\cite[Corollary~4]{Pottier91} (c.f. \cref{lem:GraverDecomp}).

\begin{lemma}\label{lem:latticeSmallSol}
For every set of vectors $\Dd\subseteq \Z^\tv$ and a vector $\myvv\in \lattice(\Dd)$, there exist integer coefficients $\mset{\lambda_\dd\colon \dd\in \Dd}$ such that
$$\myvv=\sum_{\dd\in \Dd} \lambda_\dd\cdot \dd\qquad \textrm{and}\qquad \sum_{\dd\in \Dd} |\lambda_\dd|\leq \left(2+\max_{\dd\in \Dd} \|\dd\|_\infty+\|\myvv\|_\infty\right)^{2|\tv|}.$$
\end{lemma}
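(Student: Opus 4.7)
The plan is to reduce the membership question for $\lattice(\Dd)$ to a membership question in an integer cone, and then invoke Pottier's bound~\cite{Pottier91} directly. Define the enlarged family
$$\Dd' \coloneqq \{\dd \colon \dd\in \Dd\}\cup\{-\dd \colon \dd\in \Dd\}\ \subseteq\ \Z^\tv,$$
keeping the two copies $\dd$ and $-\dd$ as distinct indices (this is harmless if $\dd=\zero$ or if $-\dd$ already appears in $\Dd$). By splitting each integer coefficient into its positive and negative parts, one immediately obtains $\lattice(\Dd) = \intcone(\Dd')$. In particular, since $\myvv\in\lattice(\Dd)$, there exist nonnegative integer coefficients $\mset{\mu_\dd^+,\mu_\dd^- \colon \dd\in\Dd}$ such that
$$\myvv \;=\; \sum_{\dd\in\Dd}\mu_\dd^+\cdot \dd \;-\; \sum_{\dd\in\Dd}\mu_\dd^-\cdot \dd.$$

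Next, I would apply Pottier's Corollary~4 to the integer cone membership problem just constructed. Viewed as a linear system in the $2|\Dd|$ unknowns $\mu_\dd^\pm\in\Zn$, it has $|\tv|$ equations and an extended coefficient ``matrix'' whose columns are precisely the vectors of $\Dd'$; each such column has $\ell_\infty$ norm at most $\max_{\dd\in\Dd}\|\dd\|_\infty$, and the right-hand side $\myvv$ has $\ell_\infty$ norm $\|\myvv\|_\infty$. Pottier's result then guarantees the existence of a nonnegative integer witness $\mset{\mu_\dd^+,\mu_\dd^- \colon \dd\in\Dd}$ satisfying the above equality whose $\ell_1$ norm is bounded by
$$\sum_{\dd\in\Dd}(\mu_\dd^+ + \mu_\dd^-) \;\leq\; \bigl(2+\max_{\dd\in\Dd}\|\dd\|_\infty+\|\myvv\|_\infty\bigr)^{2|\tv|}.$$

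Finally, I would translate back to $\lattice(\Dd)$ by setting $\lambda_\dd \coloneqq \mu_\dd^+ - \mu_\dd^-$ for every $\dd\in\Dd$. By construction $\myvv=\sum_{\dd\in\Dd}\lambda_\dd\cdot \dd$ with $\lambda_\dd\in\Z$, and the triangle inequality $|\lambda_\dd|\leq \mu_\dd^+ + \mu_\dd^-$ combined with the previous display gives
$$\sum_{\dd\in\Dd}|\lambda_\dd| \;\leq\; \sum_{\dd\in\Dd}(\mu_\dd^+ + \mu_\dd^-) \;\leq\; \bigl(2+\max_{\dd\in\Dd}\|\dd\|_\infty+\|\myvv\|_\infty\bigr)^{2|\tv|},$$
which is exactly the bound claimed.

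The only potential obstacle is matching the precise constants and the exponent $2|\tv|$ to Pottier's statement: Pottier's bound typically takes the shape $(c+\Delta+\|\myvv\|_\infty)^{O(|\tv|)}$ for a system with $|\tv|$ rows and entries of magnitude at most $\Delta$, so one must check that passing from $\Dd$ to $\Dd'$ does not inflate the entry bound (it does not, since $\|-\dd\|_\infty=\|\dd\|_\infty$) and that the exponent provided by Pottier is at most $2|\tv|$ with additive slack $2$. Both points are routine, and everything else amounts to bookkeeping between the lattice and integer-cone views.
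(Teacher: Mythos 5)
Your proposal is correct and takes essentially the same approach as the paper, which simply invokes Pottier's Corollary~4 without elaboration; the $\pm$-doubling of $\Dd$ to reduce lattice membership to integer-cone membership, followed by the triangle inequality $|\mu^+_\dd-\mu^-_\dd|\leq\mu^+_\dd+\mu^-_\dd$, is the standard way to unpack that citation, and the identity $\lattice(\Dd)=\intcone(\Dd\cup(-\Dd))$ you use is exact. You rightly flag that the additive slack $2$ and the exponent $2|\tv|$ need to be matched against Pottier's precise statement, but since the paper delegates that same check to the reference (noting only ``c.f.\ \cref{lem:GraverDecomp}'', which hints at the alternative of augmenting with a column $-\myvv$ and extracting a Graver element with last coordinate~$1$, staying with signed coefficients throughout), this is not a gap relative to the paper.
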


We now proceed to a technical statement that is crucial in our approach. Intuitively, it says that if a vector lies ``deep'' in the polyhedral cone, then its membership in the integer cone is equivalent to the membership in the integer lattice.

\begin{lemma}[Deep-in-the-Cone Lemma]\label{lem:cornetto}
 Let $\Dd\subseteq \Z^\tv$ be a set of vectors and $\Ff\subseteq \Z^\tv$ be a dual representation of $\realcone(\Dd)$. Then there is a positive integer $M$, computable from $\Dd$ and $\Ff$, such that for every $\Gg\subseteq \Ff$ and every $\myvv\in \realcone(\Dd_\Gg)$ satisfying $\sca{\ff}{\myvv}\geq M$ for all $\ff\in \Ff\setminus \Gg$, we have
 $$\myvv\in \lattice(\Dd_\Gg)\quad\textrm{if and only if}\quad \myvv\in \intcone(\Dd_\Gg).$$
\end{lemma}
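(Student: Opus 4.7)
The plan is to generalize the proof of the simplified deep-in-the-cone lemma from the overview to arbitrary faces of $\realcone(\Dd)$. The forward direction $\intcone(\Dd_\Gg)\subseteq \lattice(\Dd_\Gg)$ is immediate, so the heart of the argument is the reverse direction: starting from $\myvv\in \lattice(\Dd_\Gg)\cap\realcone(\Dd_\Gg)$ that is far from every facet of $\realcone(\Dd)$ not containing it, reconstruct $\myvv$ as a nonnegative integer combination of vectors in $\Dd_\Gg$. I will choose two integers $L$ and $M$ in this order: first fix $L$ large enough to absorb a Pottier-type rounding correction, then fix $M$ large enough to absorb a constant times $L$. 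Since $\Ff$ is finite, taking the maximum over all $\Gg\subseteq \Ff$ of the resulting thresholds yields a uniform $M$ computable from $\Dd$ and $\Ff$.

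The key construction is a ``buffer'' vector $\ww\coloneqq L\cdot\sum_{\dd\in \Dd_\Gg}\dd$, which lies in $\intcone(\Dd_\Gg)$ by construction. I plan to show that $\myvv-\ww\in \realcone(\Dd_\Gg)$ using \cref{lem:facetRepresentation}: for every $\gg\in \Gg$, both $\myvv$ and $\ww$ are orthogonal to $\gg$ (the former because $\myvv\in \realcone(\Dd_\Gg)$ and the latter because $\ww$ is a combination of vectors in $\Dd_\Gg$), while for every $\ff\in \Ff\setminus \Gg$, \cref{obs:dualPositive} bounds $\sca{\ff}{\ww}$ above by $L\cdot |\Dd|\cdot |\tv|\cdot \|\Ff\|_\infty\cdot \|\Dd\|_\infty$, so taking $M$ at least one more than this quantity ensures $\sca{\ff}{\myvv-\ww}\geq 0$.

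Writing $\myvv-\ww = \sum_{\dd\in \Dd_\Gg}\lambda_\dd\cdot \dd$ with $\lambda_\dd\geq 0$ and setting $\myvv'\coloneqq \ww+\sum_{\dd\in \Dd_\Gg}\lfloor\lambda_\dd\rfloor\cdot\dd$, I immediately get $\myvv'\in \intcone(\Dd_\Gg)$ and $\|\myvv-\myvv'\|_\infty\leq |\Dd|\cdot\|\Dd\|_\infty$, a bound depending only on $\Dd$. Since $\myvv\in \lattice(\Dd_\Gg)$ and $\myvv'\in \intcone(\Dd_\Gg)\subseteq \lattice(\Dd_\Gg)$, the difference $\myvv-\myvv'$ lies in $\lattice(\Dd_\Gg)$. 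Applying \cref{lem:latticeSmallSol} to the set $\Dd_\Gg$ produces integer coefficients $\mset{z_\dd\colon \dd\in \Dd_\Gg}$ representing $\myvv-\myvv'$ whose $\ell_1$-norm is bounded by an explicit quantity $L_0$ depending only on $|\tv|$, $\|\Dd\|_\infty$, and the fixed bound on $\|\myvv-\myvv'\|_\infty$. Choosing $L\geq L_0$ then gives $\myvv=\sum_{\dd\in \Dd_\Gg}(L+\lfloor\lambda_\dd\rfloor+z_\dd)\cdot\dd$ with nonnegative integer coefficients, proving $\myvv\in \intcone(\Dd_\Gg)$.

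The main obstacle, compared to the sketch for $\Gg=\emptyset$ in \cref{lem:dinc-overview}, is forcing the whole argument to stay within the face $\realcone(\Dd_\Gg)$ rather than slipping into the ambient cone $\realcone(\Dd)$; this is exactly what \cref{lem:facetRepresentation} provides, and it is the reason for building $\ww$ out of vectors in $\Dd_\Gg$ instead of all of $\Dd$. A secondary technical point is that the same $L$ and $M$ must work for every $\Gg\subseteq \Ff$ simultaneously. This is handled at the end by simply taking the maximum over the finitely many subsets of $\Ff$ of the parameters required by the argument above; the resulting $M$ is then a function of $\Dd$ and $\Ff$ alone, as required.
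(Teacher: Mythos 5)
Your proposal is correct and follows essentially the same route as the paper's proof: the same buffer vector $\ww=L\sum_{\dd\in\Dd_\Gg}\dd$, the same appeal to \cref{lem:facetRepresentation} to deduce $\myvv-\ww\in\realcone(\Dd_\Gg)$, the same floor-rounding to $\myvv'$, and the same use of \cref{lem:latticeSmallSol} to correct the residual $\myvv-\myvv'$, with $L$ chosen before $M$. The only cosmetic differences are that you bound $\sca{\ff}{\ww}$ via $|\tv|\cdot\|\ff\|_\infty$ rather than $\|\ff\|_1$ (a harmless over-estimate) and that you suggest maximizing over $\Gg$ at the end, which is unnecessary since the paper's choice of $L$ and $M$ already depends only on $\Dd$ and $\Ff$ and hence is uniform over all $\Gg$.
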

\begin{proof}
 We set
 $$M\coloneqq L\cdot |\Dd|\cdot \max_{\ff\in \Ff}\|\ff\|_1\cdot \max_{\dd\in \Dd} \|\dd\|_\infty,\qquad \textrm{where}\qquad L\coloneqq \left(2+(|\Dd|+1)\cdot \max_{\dd\in \Dd} \|\dd\|_\infty\right)^{2|\tv|}$$
 is the bound provided by \cref{lem:latticeSmallSol} for vectors of $\ell_\infty$-norm bounded by $|\Dd|\cdot \max_{\dd\in \Dd} \|\dd\|_\infty$.

 As $\intcone(\Dd_\Gg)\subseteq \lattice(\Dd_\Gg)$, it suffices to prove the following: if $\myvv\in \realcone(\Dd_\Gg)\cap \lattice(\Dd_\Gg)$ satisfies $\sca{\ff}{\myvv}\geq M$ for all $\ff\in \Ff\setminus \Gg$, then in fact $\myvv\in \intcone(\Dd_\Gg)$. Let
 $$\ww\coloneqq \sum_{\dd\in \Dd_\Gg} L\cdot \dd.$$
 As both $\myvv$ and $\ww$ are linear combinations of vectors in $\Dd_\Gg$, we have
 $$\sca{\gg}{\myvv-\ww}=0\qquad\textrm{for all }\gg\in \Gg.$$
 Further, for each $\ff\in \Ff\setminus \Gg$ we have
 $$\sca{\ff}{\myvv-\ww}=\sca{\ff}{\myvv}-\sca{\ff}{\ww}\geq M-L\cdot \sum_{\dd\in \Dd} |\sca{\ff}{\dd}|\geq M-L\cdot |\Dd|\cdot \max_{\ff\in \Ff}\|\ff\|_1\cdot \max_{\dd\in \Dd} \|\dd\|_\infty=0.$$
 As $\Ff$ is a dual representation of $\realcone(\Dd)$, the two assertions above together with \cref{lem:facetRepresentation} imply that $\myvv-\ww\in \realcone(\Dd_\Gg)$. Consequently, there is a collection of nonnegative reals $\mset{\lambda_\dd\colon \dd\in \Dd_\Gg}$ such that
 $$\myvv-\ww=\sum_{\dd\in \Dd_\Gg} \lambda_\dd\cdot \dd.$$
 Now, consider the vector
 $$\myvv'\coloneqq \ww+\sum_{\dd\in \Dd_\Gg} \lfloor\lambda_\dd\rfloor\cdot \dd=\sum_{\dd\in \Dd_\Gg} (L+\lfloor\lambda_\dd\rfloor)\cdot \dd.$$
 Clearly $\myvv'\in \lattice(\Dd_\Gg)$. As $\myvv\in \lattice(\Dd_\Gg)$ by assumption, we have $\myvv-\myvv'\in \lattice(\Dd_\Gg)$ as well. Furthermore,
 $$\|\myvv-\myvv'\|_\infty = \left\|\sum_{\dd\in \Dd_\Gg} (\lambda_\dd-\lfloor \lambda_\dd\rfloor)\cdot \dd\right\|_\infty\leq |\Dd_\Gg|\cdot \max_{\dd\in \Dd_\Gg} \|\dd\|_\infty\leq |\Dd|\cdot \max_{\dd\in \Dd} \|\dd\|_\infty.$$
 By \cref{lem:latticeSmallSol} we conclude that there exist integer coefficients $\mset{\mu_\dd\colon \dd\in \Dd_\Gg}$ such that
 $$\myvv-\myvv'=\sum_{\dd\in \Dd_\Gg} \mu_\dd\cdot \dd\qquad \textrm{and}\qquad \sum_{\dd\in \Dd_\Gg} |\mu_\dd|\leq L.$$
 Hence,
 $$\myvv=\myvv'+(\myvv-\myvv')=\sum_{\dd\in \Dd_\Gg} (L+\lfloor\lambda_\dd\rfloor+\mu_\dd)\cdot \dd.$$
 It now remains to observe that each coefficient $L+\lfloor\lambda_\dd\rfloor+\mu_\dd$ is a nonnegative integer, because $\lambda_\dd\geq 0$ and $|\mu_\dd|\leq L$. This shows that $\myvv\in \intcone(\Dd_\Gg)$, thereby completing the proof.
\end{proof}

\subsection{Regular lattices}

As mentioned in \cref{sec:intro}, the key idea in our approach is to guess the remainders of the entries of the sought solution modulo some large integer. To describe this idea formally, we consider regular lattices defined as follows.

Let $K$ be a positive integer. For a vector of residues $\rr\in \{0,1,\ldots,K-1\}^\tv$, where $\tv$ is a tuple of variables, we define the {\em{regular lattice}} $\Lambda^K_{\rr}$:
$$\Lambda^K_{\rr}\coloneqq \{\myvv\in \Z^{\tv}~|~\myvv(t)\equiv \rr(t)\bmod K\textrm{ for all }t\in \tv\}.$$
In other words, $\Lambda^K_\rr$ comprises all integer vectors in which the remainders mod $K$ on all coordinates are exactly as specified in $\rr$. Note that regular lattices are affine rather than linear. In particular, $\zero\in \Lambda^K_\rr$ if and only if $\rr=\zero$, and hence a regular lattice $\Lambda^K_\rr$ is not the form $\lattice(\Dd)$ for a multiset of vectors $\Dd$, unless $\rr=\zero$.

We will need the following simple claim about fractionality of solutions to systems of equations.

\begin{lemma}\label{lem:fractionality}
 Let $\Dd\subseteq \Z^\tv$ be a set of vectors, where $\tv$ is a tuple of variables. Then there is a positive integer~$C$, computable from $\Dd$, satisfying the following: for every $\myvv\in \spn(\Dd)\cap \Z^\tv$ there exist coefficients $\mset{\lambda_\dd\colon \dd\in \Dd}$ such that
 $$\myvv=\sum_{\dd\in \Dd} \lambda_\dd\cdot \dd\qquad \textrm{and}\qquad C\lambda_\dd\textrm{ is an integer for every }\dd\in \Dd.$$
\end{lemma}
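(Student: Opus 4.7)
The plan is to reduce this to Cramer's rule applied to a fixed basis of $\spn(\Dd)$. First, I will extract a maximal linearly independent subset $\Bb = \{\bb_1,\ldots,\bb_r\} \subseteq \Dd$ (computable from $\Dd$ by Gaussian elimination), so that $\spn(\Bb) = \spn(\Dd)$ and $\Bb$ is a basis of this real vector space. Every $\myvv \in \spn(\Dd)$ therefore has a unique representation $\myvv = \sum_{i=1}^r \mu_i \bb_i$ with $\mu_i \in \R$, and by setting $\lambda_{\bb_i} \coloneqq \mu_i$ and $\lambda_\dd \coloneqq 0$ for $\dd \in \Dd \setminus \Bb$, the existence of the combination in the statement is immediate. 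All that remains is to produce a single integer $C$, depending only on $\Dd$, such that each $\mu_i$ becomes integral after multiplication by $C$, whenever the underlying $\myvv$ is integral.

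The key step is to pick $r$ coordinates $t_1,\ldots,t_r \in \tv$ such that the $r\times r$ integer matrix $M$ with entries $M[j,i] \coloneqq \bb_i[t_j]$ is invertible; such a choice exists because $\bb_1,\ldots,\bb_r$ are linearly independent in $\Z^\tv$, and it can be found by the same Gaussian elimination routine that picks $\Bb$. Restricting the identity $\myvv = \sum_i \mu_i \bb_i$ to coordinates $t_1,\ldots,t_r$ gives a square linear system $M\boldsymbol{\mu} = \ww$, where $\ww \in \R^r$ records the entries of $\myvv$ on these $r$ coordinates. When $\myvv \in \Z^\tv$, the vector $\ww$ is integral, and Cramer's rule yields $\mu_i = \det(M_i)/\det(M)$ for integer matrices $M_i$ obtained from $M$ by replacing its $i$-th column with $\ww$.

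I will then set $C \coloneqq |\det(M)|$, which is a positive integer determined solely by $\Dd$ (through the deterministic choices of $\Bb$ and $t_1,\ldots,t_r$). Cramer's rule immediately gives $C \mu_i = \pm\det(M_i) \in \Z$ for each $i$, so $C\lambda_\dd$ is integer for every $\dd \in \Dd$, as required.

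There is no real obstacle here: the argument is a routine application of linear algebra over $\Q$, and the crucial observation is simply that the denominator $|\det(M)|$ coming out of Cramer's rule depends only on the choice of basis and coordinates, not on the particular integer vector $\myvv$ being expanded. The only minor point to be careful about is that the $\lambda$'s produced are supported on $\Bb$ rather than on all of $\Dd$, but this is harmless since the statement merely requires \emph{some} collection of coefficients indexed by $\Dd$.
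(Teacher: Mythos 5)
Your proof is correct and follows essentially the same strategy as the paper's: reduce to a square non-singular integer subsystem of $D\boldsymbol{\lambda}=\myvv$ and apply Cramer's rule, taking $C$ to be the absolute value of the determinant. The only cosmetic difference is that you first select a column basis $\Bb\subseteq\Dd$ and then restrict to $r$ rows, producing an $r\times r$ system, whereas the paper keeps all columns and instead pads the linearly independent rows with extra $\{0,1\}$ rows (and zero right-hand sides) to reach a $|\Dd|\times|\Dd|$ non-singular system --- both variants yield the same conclusion.
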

\begin{proof}
 Let $\zz$ be a tuple of variables containing one variable $z_\dd$ for each $\dd\in \Dd$, and let $D$ be the $\zz\times \tv$ matrix such that the column of $D$ corresponding to $z_\dd\in \zz$ is $\dd$. Thus, collections of coefficients $\mset{\lambda_\dd\colon \dd\in \Dd}$ as in the lemma statement correspond to solutions $\boldsymbol{\lambda}\in \Z^\zz$ of the system of equations $D\zz=\myvv$. By first restricting the rows of $D$ to a linearly independent set of rows, and then adding some $\{0,1\}$ row vectors together with zeroes on the right hand side, we can obtain a system of equations $\widetilde{D}\zz=\widetilde{\myvv}$ such that every solution to $\widetilde{D}\zz=\widetilde{\myvv}$ is also a solution to $D\zz=\myvv$, and $\widetilde{D}$ is a non-singular square matrix. It now follows from Cramer rules that if $\boldsymbol{\lambda}\in \Z^\zz$ is the unique solution to $\widetilde{D}\zz=\widetilde{\myvv}$, then $\det \widetilde{D}\cdot {\boldsymbol{\lambda}}$ is an integer. Therefore, we may set $C\coloneqq \det \widetilde{D}$, where $\widetilde{D}$ is any non-singular square matrix that can be obtained from $D$ by first restricting it to a linearly independent set of rows, and then extending this set of rows to a row basis using $\{0,1\}$ row vectors. Note that $C$ is clearly computable from $D$.
\end{proof}

We now use \cref{lem:fractionality} to prove the following statement that relates regular lattices with integer lattices generated by sets of vectors.

\begin{lemma}\label{lem:spaceCramer}
 Let $\Dd\subseteq \Z^\tv$ be a set of vectors, where $\tv$ is a tuple of variables. Then there is a positive integer~$K$, computable from $\Dd$, satisfying the following: for every positive integer $K'$ divisible by $K$ and every vector of residues $\rr\in \{0,1,\ldots,K'-1\}^\tv$,
 $$\spn(\Dd)\cap \Lambda^{K'}_\rr \subseteq \lattice(\Dd)\qquad\textrm{or}\qquad \lattice(\Dd)\cap \Lambda^{K'}_\rr=\emptyset.$$
\end{lemma}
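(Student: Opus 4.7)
\textbf{Proof plan for \cref{lem:spaceCramer}.}
The plan is to take $K$ to be the constant $C$ provided by \cref{lem:fractionality} applied to $\Dd$, and then show that the dichotomy holds: if the lattice intersects $\Lambda^{K'}_{\rr}$ at some point $\myvv$, then the whole slice $\spn(\Dd)\cap \Lambda^{K'}_{\rr}$ is already contained in $\lattice(\Dd)$. The key observation driving the argument is that any two vectors $\uu,\myvv\in \Lambda^{K'}_{\rr}$ differ by an integer vector divisible by $K'$ on every coordinate, so $(\uu-\myvv)/K'$ is again integral.

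Concretely, fix $K'$ divisible by $K$ and assume $\lattice(\Dd)\cap \Lambda^{K'}_{\rr}$ contains some $\myvv$. Pick any $\uu\in \spn(\Dd)\cap \Lambda^{K'}_{\rr}$. First, $\uu-\myvv \in \spn(\Dd)$ because both $\uu$ and $\myvv$ lie in $\spn(\Dd)$. Second, $\uu-\myvv \in K'\Z^{\tv}$ because $\uu$ and $\myvv$ have identical coordinates modulo $K'$. Consequently, $\ww \coloneqq (\uu-\myvv)/K'$ is a vector in $\Z^{\tv}\cap \spn(\Dd)$. Apply \cref{lem:fractionality}: there exist coefficients $\mset{\mu_\dd\colon \dd\in \Dd}$ such that $\ww=\sum_{\dd\in \Dd}\mu_\dd\cdot \dd$ and $C\mu_\dd\in \Z$ for every $\dd$. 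Multiplying through by $K'$ yields
\[
\uu-\myvv \;=\; \sum_{\dd\in \Dd} (K'\mu_\dd)\cdot \dd.
\]
Since $K'$ is divisible by $K=C$, write $K'=Cm$ for a positive integer $m$; then $K'\mu_\dd = m\cdot (C\mu_\dd)\in \Z$ for every $\dd\in \Dd$. Hence $\uu-\myvv\in \lattice(\Dd)$, and since $\myvv\in \lattice(\Dd)$ by assumption, we conclude $\uu\in \lattice(\Dd)$, as required.

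The only subtle point — and the one the divisibility condition on $K'$ is designed to handle — is ensuring that the fractional coefficients produced by \cref{lem:fractionality} become integral after being rescaled by $K'$. Taking $K$ to be (a multiple of) the denominator bound $C$ from \cref{lem:fractionality} guarantees that the factor $K'$ absorbs the denominators $C$ cleanly. I do not expect any real obstacle beyond correctly threading the divisibility: the statement is essentially a repackaging of the Cramer-type bound of \cref{lem:fractionality} into the language of regular lattices.
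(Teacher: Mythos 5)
Your proof is correct and follows essentially the same argument as the paper: take $K$ to be the constant from \cref{lem:fractionality}, note that two points of the slice differ by a vector in $\spn(\Dd)$ all of whose coordinates are divisible by a multiple of $K$, and use the fractionality bound to turn that difference into an integer combination of $\Dd$. The only cosmetic deviation is that you divide by $K'$ rather than by $K$ before invoking \cref{lem:fractionality}; either way the rescaled coefficients come out integral, so the argument goes through unchanged.
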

\begin{proof}
 We set $K\coloneqq C$, where $C$ is the integer provided by \cref{lem:fractionality} for $\Dd$. Fix any positive integer $K'$ divisible by $K$ and $\rr\in \{0,1,\ldots,K'-1\}^\tv$.
 It suffices to prove that if there exists $\uu\in \lattice(\Dd)\cap \Lambda^{K'}_\rr$, then in fact $\spn(\Dd)\cap \Lambda^{K'}_\rr \subseteq \lattice(\Dd)$. Consider any $\myvv\in \spn(\Dd)\cap \Lambda^{K'}_\rr$. Since $\uu,\myvv\in \Lambda^{K'}_\rr$, every entry of the vector $\uu-\myvv$ is divisible by $K'$, so also by $K$. It follows that
 $$\uu-\myvv = K\cdot \ww\quad \textrm{for some }\ww\in \Z^{\tv}.$$
 Since $\uu,\myvv\in \spn(\Dd)$, we also have $\ww\in \spn(\Dd)$. By \cref{lem:fractionality}, there exist coefficients $\mset{\lambda_\dd\colon \dd\in \Dd}$ such that
 $$\ww=\sum_{\dd\in \Dd} \lambda_\dd\cdot \dd\qquad \textrm{and}\qquad K\lambda_\dd\textrm{ is an integer for all }\dd\in \Dd.$$
 Therefore, we have
 $$\uu-\myvv = K\cdot \ww = \sum_{\dd\in \Dd} (K\lambda_\dd)\cdot \dd,$$
 where coefficients $K\lambda_\dd$ are integral. So $\uu-\myvv\in \lattice(\Dd)$. As $\uu\in \lattice(\Dd)$ by assumption, we conclude that $\myvv\in \lattice(\Dd)$ as well; this concludes the proof.
\end{proof}

\subsection{Reduction to polyhedral constraints}

We proceed to the cornerstone of our approach: the theorem stated below, which was presented in \cref{sec:overview} as \cref{thm:reduction-overview}. Intuitively, it says that under fixing residues modulo a large integer, membership in an integer cone is equivalent to membership in a carefully crafted polyhedron.

\begin{theorem}[Reduction to Polyhedral Constraints]\label{thm:polyhedron}
 Let $\Dd\subseteq \Z^\tv$ be a set of vectors, where $\tv$ is a tuple of variables. Then there exists a positive integer $B$, computable from $\Dd$, satisfying the following: for every vector of residues $\rr\in \{0,1,\ldots,B-1\}^\tv$, there is a finite set $\Qq\subseteq \Z^\tv\times \Z$ such that
 $$\Lambda^B_\rr\cap \intcone(\Dd) = \Lambda^B_\rr\cap \left\{\myvv\in \R^\tv~|~\sca{\qq}{\myvv}\geq a\textrm{ for all }(\qq,a)\in \Qq\right\}.$$
 Moreover, there is an algorithm that given $\Dd$ and $\rr$, computes $\Qq$ satisfying the above.
\end{theorem}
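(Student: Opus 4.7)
The plan is to execute the blueprint sketched in the overview. Let $\Ff\subseteq\Z^\tv$ be a dual representation of $\realcone(\Dd)$ obtained from \cref{thm:weyl}, let $M$ be the constant from \cref{lem:cornetto} applied to $(\Dd,\Ff)$, and let $K$ be the least common multiple of the integers produced by \cref{lem:spaceCramer} applied to $\Dd$ and to each $\Dd_\Gg$ for $\Gg\subseteq\Ff$. I would then pick $B$ to be a multiple of $K$ satisfying $B>M$, large enough that the arguments below go through. With $\rr\in\{0,\ldots,B-1\}^\tv$ fixed, set $\Rr\coloneqq\Lambda^B_\rr\cap\realcone(\Dd)$ and $\Zz\coloneqq\Lambda^B_\rr\cap\intcone(\Dd)$. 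For each $\ff\in\Ff$, let $p_\ff\in\{0,\ldots,B-1\}$ be the unique integer with $p_\ff\equiv\sca{\ff}{\rr}\pmod B$; then $\sca{\ff}{\myvv}\in\{p_\ff,p_\ff+B,p_\ff+2B,\ldots\}$ for every $\myvv\in\Rr$. Call $\ff$ \emph{tight at $\myvv$} when $\sca{\ff}{\myvv}=p_\ff$, and denote by $\mathrm{tight}(\myvv)\subseteq\Ff$ the set of all such functionals. The sets $\Rr_\Gg\coloneqq\{\myvv\in\Rr:\mathrm{tight}(\myvv)=\Gg\}$ partition $\Rr$ indexed by $\Gg\subseteq\Ff$.

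The central step is the following monotonicity principle: whenever $\uu,\myvv\in\Rr$ satisfy $\mathrm{tight}(\myvv)\subseteq\mathrm{tight}(\uu)$, one has $\uu\in\Zz\Rightarrow\myvv\in\Zz$. I would prove it in two stages. First, since $\uu,\myvv\in\Lambda^B_\rr$ agree componentwise modulo $K$ (as $K\mid B$) and both lie in $\realcone(\Dd)\subseteq\spn(\Dd)$, \cref{lem:spaceCramer} applied to $\Dd$ yields $\uu\in\lattice(\Dd)\iff\myvv\in\lattice(\Dd)$; together with $\uu\in\intcone(\Dd)\subseteq\lattice(\Dd)$ this forces $\myvv\in\lattice(\Dd)$. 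Second, to upgrade lattice-membership to integer-cone-membership for $\myvv$, I would take $\Gg^\star\subseteq\Ff$ to be the set of \emph{very tight} functionals at $\myvv$ (those $\ff$ that are tight at $\myvv$ with $p_\ff=0$), so that $\sca{\gg}{\myvv}=0$ for all $\gg\in\Gg^\star$ and hence $\myvv\in\realcone(\Dd_{\Gg^\star})$ via \cref{lem:facetRepresentation}; the hypothesis $\mathrm{tight}(\myvv)\subseteq\mathrm{tight}(\uu)$ would be used to argue that the remaining \emph{slightly tight} functionals at $\myvv$ (tight with $0<p_\ff<M$) can be relegated to $\Ff\setminus\Gg^\star$ without losing the slack bound $\sca{\ff}{\myvv}\geq M$. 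A further application of \cref{lem:spaceCramer} to $\Dd_{\Gg^\star}$ lifts $\myvv\in\lattice(\Dd)$ to $\myvv\in\lattice(\Dd_{\Gg^\star})$, after which \cref{lem:cornetto} gives $\myvv\in\intcone(\Dd_{\Gg^\star})\subseteq\intcone(\Dd)$.

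Once monotonicity is in place, each $\Rr_\Gg$ is \emph{homogeneous}: either $\Rr_\Gg\subseteq\Zz$ or $\Rr_\Gg\cap\Zz=\emptyset$, by applying monotonicity to any two elements of $\Rr_\Gg$ (whose tight sets coincide). I would then define $\Qq$ to contain the pair $(\ff,0)$ for every $\ff\in\Ff$, enforcing $\sca{\ff}{\myvv}\geq 0$, together with a cut
\[
\Bigl(\sum_{\gg\in\Gg}\gg,\;1+\sum_{\gg\in\Gg}p_\gg\Bigr)
\]
for every non-empty $\Gg\subseteq\Ff$ with $\Rr_\Gg\neq\emptyset$ and $\Rr_\Gg\cap\Zz=\emptyset$. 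A case analysis on $\mathrm{tight}(\myvv)=\Gg'$ shows that this cut is violated precisely when $\Gg\subseteq\Gg'$ (each $\gg\in\Gg$ then contributes exactly $p_\gg$), and otherwise has slack at least $B$. Combined with the fact that $\Rr_\Gg\cap\Zz=\emptyset$ propagates upward to every $\Rr_{\Gg'}$ with $\Gg\subseteq\Gg'$ (another instance of monotonicity, in contrapositive form), this verifies the equality $\Lambda^B_\rr\cap\intcone(\Dd)=\Lambda^B_\rr\cap\{\myvv:\sca{\qq}{\myvv}\geq a\text{ for all }(\qq,a)\in\Qq\}$. Algorithmically, deciding whether $\Rr_\Gg\cap\Zz$ is empty reduces to testing integer-cone membership of a concrete representative of $\Rr_\Gg$, which is a bounded-size ILP feasibility task whose data are computable from $\Dd$ and $\rr$; thus $\Qq$ can be assembled in a time depending only on $\Dd$ and $\rr$.

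The main obstacle I expect lies in the second stage of the monotonicity proof, specifically in handling the ``slightly tight'' functionals at $\myvv$. The naive choice of $\Gg^\star$ as the full tight set of $\myvv$ fails because a tight constraint $\sca{\ff}{\myvv}=p_\ff$ does not place $\myvv$ on the face $\sca{\ff}{\cdot}=0$ unless $p_\ff=0$, whereas a functional with large $p_\ff$ but still tight at $\myvv$ contributes only a small positive slack. The resolution sketched above uses the inclusion $\mathrm{tight}(\myvv)\subseteq\mathrm{tight}(\uu)$ to transfer structural information from $\uu$ to $\myvv$ and will likely require a mild strengthening of \cref{lem:cornetto}, tolerating a bounded number of medium-sized slacks in exchange for enlarging $B$ accordingly. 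Once this technical variant is in place, the rest of the argument is routine bookkeeping.
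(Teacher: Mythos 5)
Your high-level framework matches the paper's: the dual representation $\Ff$, the residues $p_\ff$, the tight/non-tight dichotomy, the monotonicity principle, the homogeneity of each $\Rr_\Gg$, and the form of the cuts in $\Qq$ are all essentially right. The first stage of your monotonicity argument (transferring lattice membership from $\uu$ to $\myvv$) is also sound. The gap you identify is real, and it is the heart of the matter.

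The proposed fix, however, would not work. Your dichotomy between ``very tight'' ($p_\ff=0$) and ``slightly tight'' ($0<p_\ff<M$) cannot be repaired by enlarging $B$ or by tolerating ``medium'' slacks in \cref{lem:cornetto}: if $\ff$ is tight at $\myvv$ with $0<p_\ff<M$, then $\sca{\ff}{\myvv}=p_\ff$ stays below $M$ no matter how large $B$ is, so the slack is permanently small, not merely medium. The paper does not strengthen the Deep-in-the-Cone Lemma; it applies it as-is to a \emph{shifted} vector $\myvv-\uu'$. Concretely, one writes $\uu=\sum_\dd\lambda_\dd\dd$ with nonnegative integers $\lambda_\dd$, and defines bounded coefficients $\mu_\dd$ (equal to $\lambda_\dd$ off $\Dd_\Gg$, and to $\lambda_\dd\bmod K$ on $\Dd_\Gg$), setting $\uu'=\sum_\dd\mu_\dd\dd\in\intcone(\Dd)$. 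Then $\sca{\gg}{\myvv-\uu'}=0$ exactly for $\gg\in\Gg$ (using tightness of $\gg$ for both $\uu$ and $\myvv$ so that $\sca{\gg}{\uu}=\sca{\gg}{\myvv}=p_\gg$), while $\myvv-\uu'\in\lattice(\Dd_\Gg)$ and $\myvv-\uu'\in\realcone(\Dd_\Gg)$. One further ingredient you did not anticipate is the choice of $\Gg$: it is not $\{\ff: p_\ff=0\}$ but is determined by a chain-of-scales argument with thresholds $B_0<B_1<\cdots<B_{|\Ff|}$, where $B_{i}=\widehat M\cdot B_{i-1}$. Sorting $\Ff$ by $\sca{\ff}{\myvv}$ and taking the longest prefix with $\sca{\ff_i}{\myvv}\le B_i$ produces a multiplicative gap, so that the slack of every $\ff\notin\Gg$ exceeds $\widehat M\widehat B$ and survives the subtraction of $\uu'$ (whose coefficients are bounded by $\widehat B$). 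Both the shift by $\uu'$ and the gap argument are essential; without them, \cref{lem:cornetto} is simply not applicable, and no amount of tuning $B$ fixes the slightly-tight constraints.
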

\begin{proof}
 Let $\Ff$ be a dual representation of $\realcone(\Dd)$, computed from $\Dd$ using \cref{thm:weyl}. Further, let
 \begin{itemize}[nosep]
  \item $K$ be the least common multiple of all integers $K_\Gg$ obtained by applying \cref{lem:spaceCramer} to $\Dd_\Gg$ for every $\Gg\subseteq \Ff$, and
  \item $M$ be the integer obtained by applying \cref{lem:cornetto} to $\Dd$ and $\Ff$.
 \end{itemize}
 We define
 $$\widehat{M}\coloneqq M+|\Dd|\cdot \max_{\ff\in \Ff} \|\ff\|_1\cdot \max_{\dd\in \Dd} \|\dd\|_\infty,$$
and set
 $$B_0\coloneqq K,\qquad B_i\coloneqq \widehat{M}\cdot B_{i-1}\quad\textrm{for }i=1,2,\ldots,|\Ff|,\qquad \textrm{and}\qquad B\coloneqq 2\cdot B_{|\Ff|}.$$
We are left with constructing a suitable set $\Qq$ for a given $\rr\in \{0,1,\ldots,B-1\}^{\tv}$.

For convenience, denote
$$\Rr\coloneqq \Lambda_\rr^B\cap \realcone(\Dd)\qquad\textrm{and}\qquad \Zz\coloneqq \Lambda_\rr^B\cap \intcone(\Dd).$$
For every $\ff\in \Ff$, let $p_\ff$ be the unique integer in $\{0,1,\ldots,B-1\}$ such that
$$p_\ff\equiv \sca{\ff}{\rr}\mod B.$$
Observe the following.

\begin{techclaim}\label{cl:discreteSteps}
 For every $\ff\in \Ff$ and $\myvv\in \Rr$, we have
 $$\sca{\ff}{\myvv}\in \{p_\ff,p_\ff+B,p_\ff+2B,\ldots\}.$$
\end{techclaim}
\begin{subproof}
 As $\myvv\in \realcone(\Dd)$ and $\Ff$ is a dual representation of $\realcone(\Dd)$, we have $\sca{\ff}{\myvv}\geq 0$. As $\myvv\in \Lambda_\rr^B$, we also have $\sca{\ff}{\myvv}\equiv \sca{\ff}{\rr}\equiv p_\ff \bmod B$. The claim follows.
\end{subproof}

For a given vector $\myvv\in \Rr$, call
$\ff\in \Ff$ {\em{tight}} for $\myvv$ if $\sca{\myvv}{\ff}=p_\ff$. Importantly, \cref{cl:discreteSteps} implies the following: if $\ff$ is not tight for $\myvv$, then $\sca{\ff}{\myvv}\geq p_\ff+B$. Further, let
$$\Tight(\myvv)\coloneqq \{\ff\in \Ff~|~\ff\textrm{ is tight for }\myvv\}.$$
The next claim is the key step. Its proof relies on a carefully crafted application of the Deep-in-the-Cone Lemma (\cref{lem:cornetto}).

\begin{techclaim}\label{cl:tightImplies}
 Suppose $\uu\in \Zz$ and $\myvv\in \Rr$ are such that $\Tight(\uu)\supseteq \Tight(\myvv)$. Then $\myvv\in \Zz$ as well.
\end{techclaim}
\begin{subproof}
 Denote $\ell\coloneqq |\Ff|$ for brevity.
 Enumerate $\Ff$ as $\{\ff_1,\ff_2,\ldots,\ff_{\ell}\}$ so that
 $$\sca{\ff_1}{\myvv} \leq \sca{\ff_2}{\myvv}\leq \sca{\ff_3}{\myvv}\leq \ldots \leq \sca{\ff_{\ell}}{\myvv},$$
 and recall that all these values are nonnegative integers due to $\myvv$ being an integer vector in $\realcone(\Dd)$ and $\Ff$ being the dual representation of $\realcone(\Dd)$. Let $k\in \{0,1,\ldots,\ell\}$ be the largest index such that
 $$\sca{\ff_i}{\myvv}\leq B_i\qquad\textrm{for all }i\in \{1,2,\ldots,k\},$$
 and denote
 $$\Gg\coloneqq \{\ff_1,\ff_2,\ldots,\ff_k\}\qquad\textrm{and}\qquad \widehat{B}\coloneqq B_k.$$
 By the maximality of $k$ it follows that either $k=\ell$, or $k<\ell$ and $\sca{\ff_{k+1}}{\myvv}>B_{k+1}$. In any case, by the choice of the sequence $B_0,B_1,\ldots,B_\ell$ we have
 \begin{equation}\label{eq:wydra}
 \sca{\gg}{\myvv} \leq \widehat{B}\quad\textrm{for all }\gg\in \Gg\qquad\textrm{and}\qquad \sca{\ff}{\myvv}>\widehat{M}\widehat{B}\quad \textrm{for all }\ff\in \Ff\setminus \Gg.\end{equation}

 Since $\widehat{B}<B$, all vectors of $\Gg$ are tight for $\myvv$; see \cref{cl:discreteSteps}. As $\Tight(\uu)\supseteq \Tight(\myvv)$, these vectors are tight for $\uu$ as well. We conclude that
 \begin{equation}\label{eq:kapibara}
 \sca{\gg}{\uu}=\sca{\gg}{\myvv}=p_{\gg}\qquad \textrm{for all }\gg\in \Gg.
 \end{equation}

 Since $\uu\in \Zz$ by assumption, there is a collection of nonnegative integers $\mset{\lambda_\dd\colon \dd\in \Dd}$ such that
 $$\uu=\sum_{\dd\in \Dd} \lambda_\dd\cdot \dd.$$
 Define now a collection of nonnegative integers $\mset{\mu_\dd\colon \dd\in \Dd}$ as follows:
 \begin{itemize}[nosep]
  \item if $\dd\notin \Dd_\Gg$, then $\mu_\dd=\lambda_\dd$; and
  \item otherwise, if $\dd\in \Dd_\Gg$, then $\mu_\dd$ is the unique integer in $\{0,1,\ldots,K-1\}$ such that $\lambda_\dd\equiv \mu_\dd\bmod K$.
 \end{itemize}
 Consider now any $\ee\in \Dd\setminus \Dd_\Gg$. By the definition of $\Dd_\Gg$, there exists $\gg\in \Gg$ such that $\sca{\gg}{\ee}>0$, which by integrality entails $\sca{\gg}{\ee}\geq 1$. By \cref{obs:dualPositive},~\cref{eq:wydra}, and~\cref{eq:kapibara}, we have
 $$\widehat{B}\geq \sca{\gg}{\myvv}=\sca{\gg}{\uu}=\sum_{\dd\in \Dd} \lambda_\dd\cdot \sca{\gg}{\dd} \geq \lambda_{\ee}\cdot \sca{\gg}{\ee}\geq \lambda_\ee=\mu_\ee.$$
 Together with the straightforward inequality $\mu_\dd\leq K\leq \widehat{B}$ for all $\dd\in \Dd_\Gg$, we conclude that
 \begin{equation}\label{eq:nutria}
  \mu_\dd\leq \widehat{B}\qquad \textrm{for all }\dd\in \Dd.
 \end{equation}

 Let
 $$\uu'\coloneqq \sum_{\dd\in \Dd} \mu_\dd\cdot \dd.$$
 Clearly $\uu'\in \intcone(\Dd)$. Our goal is to show that \cref{lem:cornetto} can be applied to $\myvv-\uu'$.

 First, consider any $\ff\in \Ff\setminus \Gg$. Then by \cref{eq:wydra} and \cref{eq:nutria}, we have
 \begin{align}
\sca{\ff}{\myvv-\uu'} & =     \sca{\ff}{\myvv}-\sca{\ff}{\uu'} > \widehat{M}\widehat{B}-\sum_{\dd\in \Dd} \mu_\dd\cdot \sca{\ff}{\dd}\nonumber\\
                    & \geq  \widehat{M}\widehat{B}-|\Dd|\cdot \max_{\ff\in \Ff} \|\ff\|_1\cdot \max_{\dd\in \Dd} \|\dd\|_\infty\cdot \widehat{B} =  M\widehat{B} \geq M.\label{eq:mors}
 \end{align}
 Next, consider any $\gg\in \Gg$. Then
 \begin{equation}\label{eq:foka0}
\sca{\gg}{\myvv-\uu'}=\sum_{\dd\in \Dd} (\lambda_\dd-\mu_\dd)\cdot\sca{\gg}{\dd}=\sum_{\dd\in \Dd_\Gg} (\lambda_\dd-\mu_\dd)\cdot\sca{\gg}{\dd} + \sum_{\dd\in \Dd\setminus \Dd_\Gg} (\lambda_\dd-\mu_\dd)\cdot\sca{\gg}{\dd}.
 \end{equation}
 Note that for $\dd\in \Dd\setminus \Dd_\Gg$ we have $\lambda_\dd=\mu_\dd$, while for $\dd\in \Dd_\Gg$ we have $\sca{\gg}{\dd}=0$. So both summands on the right hand side of~\cref{eq:foka0} are equal to $0$, implying that
 \begin{equation}\label{eq:foka1}
\sca{\gg}{\myvv-\uu'}=0\qquad\textrm{for all }\gg\in \Gg.
 \end{equation}
 Observe that since $\Ff$ is the dual representation of $\realcone(\Dd)$, \cref{eq:mors} and \cref{eq:foka1} together imply that
 \begin{equation}\label{eq:morswin}
  \myvv-\uu'\in \realcone(\Dd).
 \end{equation}
 Then by combining \cref{eq:foka1}, \cref{eq:morswin}, and \cref{lem:facetRepresentation} we conclude that
 \begin{equation}\label{eq:foka}
  \myvv-\uu'\in \realcone(\Dd_\Gg).
 \end{equation}

 Finally, observe that for every $\dd\in \Dd$, $\lambda_\dd-\mu_\dd$ is an integer divisible by $K$. Therefore, $\uu-\uu'=K\cdot \ww$ for some integer vector $\ww$, implying that $\uu-\uu'\in \Lambda_{\zero}^{K}$.
 As $\uu,\myvv\in \Lambda_{\rr}^B$ and $K$ divides~$B$, we also have $\myvv-\uu\in \Lambda_{\zero}^{K}$. Combining these two observations yields
 \begin{equation}\label{eq:pizmak}
  \myvv-\uu'\in \Lambda_{\zero}^{K}.
 \end{equation}
 Noting that $\zero \in \lattice(\Dd_\Gg)\cap \Lambda_\zero^{K}$ and $K$ is divisible by $K_\Gg$, by applying \cref{lem:spaceCramer} to $\Dd_\Gg$ we infer that
 $$\spn(\Dd_\Gg)\cap \Lambda_{\zero}^{K}\subseteq \lattice(\Dd_\Gg).$$
 Combining this with \cref{eq:foka} and \cref{eq:pizmak} yields
 \begin{equation}\label{eq:manat}
\myvv-\uu'\in \lattice(\Dd_\Gg).
 \end{equation}

 Now, assertions \cref{eq:mors}, \cref{eq:foka}, and \cref{eq:manat} show that we can use \cref{lem:cornetto} to conclude that
 $$\myvv-\uu'\in \intcone(\Dd_\Gg).$$
 Since $\uu'=\sum_{\dd\in \Dd} \mu_\dd\cdot \dd$ and $\mset{\mu_\dd\colon \dd\in \Dd}$ are nonnegative integers, we have $\uu'\in \intcone(\Dd_\Gg)$ as well. So as a sum of two vectors from $\intcone(\Dd_\Gg)$, $\myvv$ also belongs to $\intcone(\Dd_\Gg)$, and we are done.
\end{subproof}

For every $\Gg\subseteq \Ff$, let
$$\Rr_\Gg\coloneqq \{\myvv\in \Rr~|~\Tight(\myvv)=\Gg\}.$$
In other words, $\Rr_\Gg$ comprises all vectors in $\Rr$ for which $\Gg$ is exactly the set of tight vectors in $\Ff$. Thus, $\{\Rr_\Gg\colon \Gg\subseteq \Ff\}$ is a partition of $\Rr$.

Let $\Ss$ be the family of subsets of $\Ff$ consisting of all $\Gg\subseteq \Ff$ such that $\Rr_\Gg\cap \Zz\neq \emptyset$. Further, let $\Ss^{\downarrow}$ be the downward closure of $\Ss$: a set $\Gg\subseteq \Ff$ belongs to $\Ss^\downarrow$ if and only if there exists $\Gg'\supseteq \Gg$ such that $\Gg'\in \Ss$.
The following statement is an immediate corollary of \cref{cl:tightImplies}.

\begin{techclaim}\label{cl:allornothing}
Suppose $\Gg\subseteq \Ff$. If $\Gg\in \Ss^{\downarrow}$ then $\Rr_\Gg\subseteq \Zz$, and if $\Gg\notin \Ss^{\downarrow}$ then $\Rr_\Gg\cap \Zz=\emptyset$. Consequently,
$$\Zz=\bigcup_{\Gg\in \Ss^\downarrow} \Rr_\Gg.$$
\end{techclaim}

Next, we show that the characterization of \cref{cl:allornothing} can be expressed through linear inequalities.

\begin{techclaim}\label{cl:constructionQ}
 For every $\myvv\in \Lambda_\rr^B$, the following conditions are equivalent:
 \begin{enumerate}[label=(\arabic*),ref=(\arabic*),nosep]
  \item\label{c:inZ} $\myvv\in \intcone(\Dd)$; and
  \item\label{c:inQ} $\myvv\in \realcone(\Dd)$ and
  $$\sum_{\gg\in \Gg} \sca{\gg}{\myvv} \geq 1+\sum_{\gg\in \Gg} p_\gg\qquad\textrm{for all }\Gg\subseteq \Ff\textrm{ such that }\Gg\notin \Ss^\downarrow.$$
 \end{enumerate}
\end{techclaim}
\begin{subproof}
We first prove implication \ref{c:inZ}$\Rightarrow$\ref{c:inQ}. Since $\myvv\in \Lambda_\rr^B\cap \intcone(\Dd)=\Zz$, by \cref{cl:allornothing} we have that $\myvv\in \Rr_\Gg$ for some $\Gg\in \Ss^\downarrow$. In particular, for every $\Gg'\subseteq \Ff$ with $\Gg'\notin \Ss^\downarrow$, $\Gg'$ is not a subset of $\Gg$, implying that there exists some $\hh\in \Gg'\setminus \Gg$. As $\Gg=\Tight(\myvv)$, $\hh$ is not tight for $\myvv$, hence $\sca{\hh}{\myvv}\geq p_\hh+B$ by \cref{cl:discreteSteps}. Hence, by  \cref{cl:discreteSteps} again,
$$\sum_{\gg\in \Gg'} \sca{\gg}{\myvv}\geq B+\sum_{\gg\in \Gg'} p_\gg\geq 1+\sum_{\gg\in \Gg'} p_\gg;$$
and this holds for each $\Gg'\subseteq \Ff$ with $\Gg'\notin \Ss^{\downarrow}$.
As $\myvv\in \intcone(\Dd)$ entails $\myvv\in \realcone(\Dd)$, this proves~\ref{c:inQ}.

We now move to the implication \ref{c:inQ}$\Rightarrow$\ref{c:inZ}. Since $\myvv\in  \Lambda_\rr^B\cap \realcone(\Dd)=\Rr$, by \cref{cl:allornothing} it suffices to show that the unique $\Gg\subseteq \Ff$ for which $\myvv\in \Rr_\Gg$ satisfies $\Gg\in \Ss^\downarrow$. Note $\myvv\in \Rr_\Gg$ is equivalent to $\Gg=\Tight(\myvv)$. Therefore $\sca{\gg}{\myvv}=p_\gg$ for all $\gg\in \Gg$, implying that
$$\sum_{\gg\in \Gg} \sca{\gg}{\myvv} = \sum_{\gg\in \Gg} p_\gg.$$
Hence we necessarily must have $\Gg\in \Ss^\downarrow$, for otherwise \ref{c:inQ} would not be satisfied.
\end{subproof}

We may now define $\Qq$ to be the set of the following pairs:
\begin{itemize}[nosep]
 \item $(\ff,0)$ for all $\ff\in \Ff$; and
 \item $\left(\sum_{\gg\in \Gg} \gg,1+\sum_{\gg\in \Gg} p_\gg\right)$ for all $\Gg\subseteq \Ff$ with $\Gg\notin \Ss^\downarrow.$
\end{itemize}
As $\Ff$ is a dual representation of $\realcone(\Dd)$, we have $\realcone(\Dd)=\{\myvv\in \R^\tv~|~\sca{\ff}{\myvv}\geq 0\textrm{ for all }\ff\in \Ff\}$. Hence, \cref{cl:constructionQ} directly implies that
$$\Lambda_\rr^B\cap \intcone(\Dd)=\Lambda_\rr^B\cap \left\{\myvv\in \R^\tv~|~\sca{\qq}{\myvv}\geq a\textrm{ for all }(\qq,a)\in \Qq\right\},$$
as required.

\medskip

It remains to argue that the set $\Qq$ can be computed given $\Dd$ and $\rr$. For this, it suffices to distinguish which sets $\Gg\subseteq \Ff$ belong to $\Ss$ and which not, because based on this knowledge we may compute $\Ss^{\downarrow}$ and then construct $\Qq$ right from the definition. (Recall here that by \cref{thm:weyl}, $\Ff$ can be computed from~$\Dd$.) Testing whether given $\Gg\subseteq \Ff$ belongs to $\Ss$ boils down to verifying whether there exists $\myvv\in \Z^\tv$ such that
\begin{itemize}[nosep]
 \item $\myvv\in \Lambda_\rr^B$, or equivalently, $\myvv=B\cdot \ww+\rr$ for some $\ww\in \Z^\tv$;
 \item $\sca{\gg}{\myvv}=p_\gg$ for all $\gg\in \Gg$; and
 \item $\sca{\gg}{\myvv}\geq p_\gg+1$ for all $\gg\in \Ff\setminus \Gg$; and
 \item $\myvv\in \lattice(\Dd)$, or equivalently, there exist nonnegative integer coefficients $\mset{\lambda_\dd\colon \dd\in \Dd}$ such that $\myvv=\sum_{\dd\in \Dd} \lambda_\dd\cdot \dd$.
\end{itemize}
These conditions form an integer program with $2|\tv|+|\Dd|$ variables: $|\tv|$ variables for $\myvv$, $|\tv|$ variables for $\ww$, and $|\Dd|$ variables for the coefficients $\mset{\lambda_\dd\colon \dd\in \Dd}$. Hence we may just check the feasibility of this program using any of the standard algorithms for integer programming, e.g., the algorithm of Kannan~\cite{kannan1987minkowski}.
\end{proof}

We remark that while the statement of \cref{thm:polyhedron} does not specify any concrete upper bound on the value of $B$, it is not hard to trace all the estimates used throughout the reasoning to see that $B$ is bounded by an elementary function (that is, a constant-height tower of exponents) of the relevant parameters $|\tv|$, $|\Dd|$, and $\max_{\dd\in \Dd}\|\Dd\|_\infty$. We did not attempt to optimize the value of $B$ in our proof, hence finding tighter estimates is left to future work.

\subsection{Algorithm for two-stage stochastic integer programming}

With all the tools prepared, we are ready to give a proof of \cref{thm:2stage-main}.

\begin{proof}[Proof of Theorem~\ref{thm:2stage-main}]
 For each matrix $D_i$, let $\Dd_i$ be the set of columns of $D_i$. Since every member of $\Dd_i$ is a vector in $\Z^\tv$ with all entries of absolute value at most $\Delta$, and there are $(2\Delta+1)^{|\tv|}$ different such vectors, we have $|\Dd_i|\leq (2\Delta+1)^{|\tv|}$. In particular, there are at most $2^{(2\Delta+1)^{|\tv|}}$ distinct sets~$\Dd_i$.

 Now, the feasibility of the program $P$ is equivalent to the following assertion: there exist $\uu\in \Zn^\xx$ such that
 \begin{equation}\label{eq:szynszyla}
   \bb_i-A_i\uu\in \intcone(\Dd_i)\qquad\textrm{for all }i\in I.
 \end{equation}
 For each $i\in I$ apply \cref{thm:polyhedron} to $\Dd_i$, yielding a positive integer $B_i$, computable from $\Dd_i$. Let $B$ be the least common multiple of all integers $B_i$ for $i\in I$. Since the number of distinct sets $\Dd_i$ is bounded by~$2^{(2\Delta+1)^{|\tv|}}$, it follows that $B$ is bounded by a computable function of $\Delta$ and $|\tv|$.

 Towards verification of assertion~\cref{eq:szynszyla}, the algorithm guesses, by iterating through all possibilities, the vector $\rr\in \{0,1,\ldots,B-1\}^{\xx}$ such that $\uu[x]\equiv \rr[x]\bmod B$ for each $x\in \xx$; equivalently $\uu\in \Lambda_\rr^B$. This uniquely defines, for each $i\in I$, a vector $\rr_i\in \{0,1,\ldots,B_i-1\}^{\tv}$ such that
 $$(\bb_i-A_i\uu)[t]\equiv \rr_i[t]\mod B_i\qquad\textrm{for all }t\in \tv.$$
 Or equivalently, $\bb_i-A_i\uu\in \Lambda^{B_i}_{\rr_i}$.

 Apply the algorithm of \cref{thm:polyhedron} to $\Dd_i$ and $\rr_i$, yielding a set of pairs $\Qq_i\subseteq \Z^\tv\times \Z$ such that
 $$\Lambda^{B_i}_{\rr_i}\cap \intcone(\Dd_i) = \Lambda^{B_i}_{\rr_i}\cap \{\myvv\in \R^\tv~|~\sca{\qq}{\myvv}\geq a\textrm{ for all }(\qq,a)\in \Qq_i\}.$$
 Hence, under the supposition that indeed $\uu\in \Lambda_\rr^B$, assertion~\cref{eq:szynszyla} boils down to verifying the existence of $\uu\in \Z^{\xx}$ satisfying the following:
 \begin{itemize}[nosep]
  \item $\uu[x]\geq 0$ for all $x\in \xx$;
  \item $\uu\in \Lambda_\rr^B$, or equivalently, $\uu=B\cdot \ww+\rr$ for some $\ww\in \Z^\xx$;
  \item for each $i\in I$ and each $(\qq,a)\in \Qq_i$, we have
  $$\sca{\qq}{\bb_i-A_i\uu}\geq a.$$
 \end{itemize}
 The conditions above form an integer program with $2|\xx|$ variables ($|\xx|$ variables for $\uu$ and $|\xx|$ variables for $\ww$) whose total bitsize is bounded by $g(\Delta,|\tv|)\cdot \|P\|$ for some computable function $g$. Hence, we can solve this program (and thus verify the existence of a suitable $\uu$) in time $h(\Delta,|\xx|,|\tv|)\cdot \|P\|$ for a computable $h$ using, for instance, the algorithm of Kannan~\cite{kannan1987minkowski}. Iterating through all $\rr\in \{0,1,\ldots,B-1\}^\xx$ adds only a multiplicative factor that depends in a computable manner on $\Delta,|\xx|,|\tv|$, yielding in total a time complexity bound of $f(\Delta,|\xx|,|\tv|)\cdot \|P\|$ for a computable function $f$, as claimed.
\end{proof}

\section{\texorpdfstring{\nfold{}}{n-fold} integer programming}

Our main result for \nfold{} integer programming is presented in the following statement, which is just \cref{thm:nfold-intro} with adjusted notation.

\begin{theorem}\label{thm:nfold-main}
An instance $P=(C,\av,\mset{D_i,\bb_i,\cc_i\colon i\in I})$ of {\textsc{Uniform}} \nfoldOpt{} can be solved in time $f(\Delta,|\yy|,|\tv|)\cdot \|P\|^{\Oh(1)}$ for some computable function $f$, where $\Delta=\max_{i\in I}\|D_i\|_\infty$.
\end{theorem}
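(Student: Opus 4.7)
The plan is to follow the road map sketched in \cref{sec:overview}: simplify $P$ into an equivalent uniform instance whose right-hand sides are parameter-bounded (but where the number of bricks may become exponential), and then reduce that instance to mixed integer programming in a parameter-bounded number of integer variables. First I would handle the heterogeneity of the diagonal blocks. Because $\|D_i\|_\infty \le \Delta$, the number of distinct $D_i$ is at most $(2\Delta+1)^{|\yy|\cdot|\tv|}$, so up to a multiplicative blow-up by a function of the parameters one may assume a single fixed $D$ and treat the general case by grouping bricks by their $D$-type. I would also assume for the moment that $P$ is a feasibility instance and revisit costs at the end.

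Next I would invoke the Brick Decomposition Lemma (\cref{lem:brick-overview}) iteratively: every right-hand side $\bb_i$ admits a \emph{faithful decomposition} $\bb_i=\bb_i^1+\dots+\bb_i^{m_i}$ with $\|\bb_i^j\|_\infty\le\Xi\coloneqq 2^{(|\tv|\Delta)^{\Oh(|\tv|)}}$, and every solution $\myvv$ to $D\yy=\bb_i$ splits as $\myvv=\myvv^1+\dots+\myvv^{m_i}$ with $D\myvv^j=\bb_i^j$. Replacing each brick of $P$ by its sub-bricks yields an equivalent uniform program $P'$. The count $m_i$ can be exponential in $\|P\|$, so I would never materialise the decomposition: since every $\bb_i^j$ lies in the finite set $\RHS\coloneqq\{-\Xi,\dots,\Xi\}^{\tv}$, it suffices to store, for each $\bb\in\RHS$, the multiplicity $\cntTypes[\bb]$ of $\bb$ in $P'$. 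I would compute this high-multiplicity encoding chunk-by-chunk: each chunk is a large submultiset of the decomposition extracted by solving an auxiliary Presburger-arithmetic query via the Koutecký--Talmon algorithm, designed so that it reduces $\|\bb_i\|_1$ by a constant factor. This gives $\Oh(\log\|\bb_i\|_\infty)$ chunks per original brick and an overall running time $f(\Delta,|\yy|,|\tv|)\cdot \|P\|^{\Oh(1)}$.

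Once $P'$ is compactly described, I would exploit the Pottier-type decomposition recalled in the overview: every solution to a brick $D\yy=\bb$ with $\bb\in\RHS$ can be written as $\whww+\gg_1+\dots+\gg_\ell$ where $\whww$ is a ``base solution'' of bounded $\ell_\infty$-norm (ranging over a precomputed set $\Base[\bb]$ of parameter-bounded size) and each $\gg_j$ belongs to the nonnegative part of the Graver basis of $D$ (again finite, of parameter-bounded size, explicitly computable). This motivates an aggregated ILP with integer variables $\zeta^\bb_{\whww}\in\Zn$ counting the global usage of each base solution in bricks of type $\bb$, and $\delta_\gg\in\Zn$ counting the global usage of each nonnegative Graver element. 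The constraints are the translated linking constraint
\[
\sum_{\bb\in\RHS}\sum_{\whww\in\Base[\bb]}\zeta^\bb_{\whww}\cdot C\whww+\sum_{\gg}\delta_\gg\cdot C\gg=\av,
\]
together with the cardinality constraints $\sum_{\whww\in\Base[\bb]}\zeta^\bb_{\whww}=\cntTypes[\bb]$ for each $\bb\in\RHS$. Uniformity of $C$ is crucial here, since it makes the contribution of each $\whww$ or $\gg$ to the linking constraint independent of which brick it is assigned to. The number of integer variables is bounded by a function of $\Delta,|\yy|,|\tv|$, so Kannan's algorithm solves the feasibility version.

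For the optimisation version I would add continuous ``assignment'' variables $\omega^{\bb,i}_{\whww}\in\myRn$ that redistribute the $\zeta^\bb_{\whww}$ base solutions among the $\cntTypes[\bb]$ bricks of type $\bb$ according to their per-brick cost vectors $\cc_i$, plus analogous variables for Graver elements. After fixing the integer values of $\zeta^\bb_{\whww}$ and $\delta_\gg$, the residual problem on the $\omega$'s is a transportation/flow LP whose constraint matrix is totally unimodular, so an optimal integral assignment always exists and equals the optimum of the LP relaxation. Hence the whole aggregated problem is a mixed integer program with only parameter-many integer variables and polynomially many continuous variables, solvable in the required time via Lenstra's algorithm. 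I expect the main obstacle to be establishing the Brick Decomposition Lemma with the conformal bound $\bb',\bb''\cfleq\bb$, which does not follow from Klein's lemma alone and requires the Bundling Lemma (\cref{lem:bundling-overview}) applied on top of a preliminary coarse split; a secondary subtle point is designing the Presburger query so that faithful decomposition is computed in logarithmically many rounds rather than in $\Omega(\|\bb\|_1)$ steps.
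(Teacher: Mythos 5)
Your proposal is correct and follows essentially the same route as the paper: faithful decomposition of right-hand sides via the Brick Decomposition Lemma, a high-multiplicity encoding computed in logarithmically many Presburger-query rounds, aggregation into a program over base solutions and nonnegative Graver elements, and a total-unimodularity argument allowing the per-brick assignment variables to be relaxed to continuous so that Lenstra's algorithm applies. The one small divergence is your plan to add continuous assignment variables for the Graver elements as well; the paper instead observes that since a Graver element can be placed in any brick (with the matching diagonal block) without violating any brick constraint, one may simply price each $\delta^D_{\gg}$ by the greedy minimum $\rev^D_{\gg}=\min_{i:\,D_i=D}\sca{\cc_i}{\gg}$ and keep the objective linear with no extra variables — your version also works but is slightly less economical.
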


Again, importantly, we do not impose any upper bound on the absolute values of the entries of the matrix $C$. Note also that as mentioned in \cref{sec:intro}, the considered parameters {\em{do not}} include $|\sv|$, the number of linking constraints, but {\em{do}} include $|\yy|$, the number of local variables in each block.

The remainder of this section is devoted to the proof of \cref{thm:nfold-main}.

\subsection{Graver bases}\label{sec:Gravers}

One ingredient that we will repeatedly use in our proofs is the notion of the Graver basis of a matrix, which consists of $\cfleq$-minimal integral elements of the kernel. Formally, the {\em{integer kernel}} of a matrix $D \in \matrices{\Z}{\yy}{\tv}$, denoted $\ker^\Z(D)$, is the set of all integer vectors~$\uu \in \Z^{\tv}$ such that $D\uu = \mathbf{0}$. The {\em{Graver basis}} of $D$ is the set $\Graver(D)$ of all $\cfleq$-minimal non-zero elements of $\ker^\Z(D)$. In other words, a vector $\uu\in \ker^\Z(D)\setminus \{\mathbf{0}\}$ belongs to $\Graver(D)$ if there is no $\uu'\cfleq \uu$, $\uu'\notin \{\mathbf{0},\uu\}$, such that $\uu'\in \ker^\Z(D)$ as~well.

Note that since $\Graver(D)$ is an antichain in the $\cfleq$ order, which is a well quasi-order on $\Z^\tv$ by Dickson's Lemma, $\Graver(D)$ is always finite. There are actually many known bounds on the norms of the elements of the Graver basis under different assumptions about the matrix. We will use the following one tailored to matrices with few rows.

\begin{lemma}[Lemma~2 of \cite{EisenbrandHK18}]\label{lem:GraverBound}
Let $D \in \matrices{\Z}{\yy}{\tv}$ be an integer matrix, and let $\Delta=\|D\|_\infty$. Then for every $\gg\in \Graver(D)$, it holds that
\begin{align*}
    \|\gg\|_1 \leq (2|\tv|\Delta+1)^{|\tv|}
\end{align*}
\end{lemma}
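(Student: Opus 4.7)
The plan is to argue by contradiction using the Steinitz lemma combined with a pigeonhole argument on partial sums. Suppose $\gg\in\Graver(D)$ has $\|\gg\|_1=m>(2|\tv|\Delta+1)^{|\tv|}$. The first step is to view the relation $D\gg=\zero$ as an equation about a balanced multiset of column vectors: for each $y\in\yy$ with $\gg[y]\neq 0$, include $|\gg[y]|$ copies of $\mathrm{sign}(\gg[y])\cdot D[y]$. Call this multiset $U=\{u_1,\ldots,u_m\}$. Each $u_i\in\Z^{\tv}$ satisfies $\|u_i\|_\infty\leq\Delta$, and by construction $\sum_{i=1}^m u_i=D\gg=\zero$.

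Next, I would invoke the Steinitz lemma: there is an ordering of $U$ such that every partial sum $s_k=\sum_{i=1}^k u_i$ satisfies $\|s_k\|_\infty\leq |\tv|\cdot\Delta$. Since each $s_k$ is integral, it lies in the box $\{-|\tv|\Delta,\ldots,|\tv|\Delta\}^{\tv}$, which contains exactly $(2|\tv|\Delta+1)^{|\tv|}$ lattice points. Because we have $m+1\geq (2|\tv|\Delta+1)^{|\tv|}+2$ partial sums $s_0,s_1,\ldots,s_m$ but only $(2|\tv|\Delta+1)^{|\tv|}$ possible values, pigeonhole forces two coincidences; since $s_0=s_m=\zero$ accounts for at most one of them, we can find indices $0\leq i<j\leq m$ with $(i,j)\neq(0,m)$ and $s_i=s_j$.

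Then $\sum_{k=i+1}^j u_k=\zero$, so the sub-multiset $U'=\{u_{i+1},\ldots,u_j\}$ of size $j-i\in\{1,\ldots,m-1\}$ is a non-empty proper subset of $U$ summing to zero. Translating back, this sub-multiset defines a vector $\gg'\in\Z^{\yy}$ by letting $\gg'[y]$ equal $\mathrm{sign}(\gg[y])$ times the number of times $\mathrm{sign}(\gg[y])\cdot D[y]$ appears in $U'$. By construction $\gg'$ is sign-compatible with $\gg$ and its coordinates are bounded in absolute value by those of $\gg$, so $\gg'\cfleq\gg$. Moreover, $D\gg'=\sum_{k=i+1}^j u_k=\zero$, so $\gg'\in\ker^\Z(D)$, and since $1\leq j-i\leq m-1$ we have $\gg'\neq\zero$ and $\gg'\neq\gg$. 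This contradicts the $\cfleq$-minimality of $\gg$ in $\Graver(D)$, so $\|\gg\|_1\leq(2|\tv|\Delta+1)^{|\tv|}$.

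The only real subtlety is the bookkeeping in the last step, verifying that extracting a sub-multiset of the signed-column decomposition yields a genuine conformal strict sub-element of $\gg$; the rest is a direct combination of Steinitz and pigeonhole. Since the Steinitz lemma itself is a well-known statement (and its $\ell_\infty$-bound of $|\tv|\cdot\Delta$ for integer vectors with entries bounded by $\Delta$ is standard), I would cite it rather than reprove it.
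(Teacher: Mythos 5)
The paper does not prove this lemma; it cites it directly from Eisenbrand, Hunkenschr\"oder, and Klein (reference \cite{EisenbrandHK18}, their Lemma~2), so there is no in-paper proof to compare against. Your argument---pass to the multiset of signed columns summing to $\zero$, reorder via the Steinitz lemma so that all partial sums lie in the box $\{-|\tv|\Delta,\ldots,|\tv|\Delta\}^{\tv}$ of $(2|\tv|\Delta+1)^{|\tv|}$ lattice points, apply pigeonhole on the partial sums $s_0,\ldots,s_{m-1}$ to extract a non-trivial proper sub-multiset summing to $\zero$, and translate that back into a nonzero $\gg'\cfleq\gg$, $\gg'\neq\gg$, in $\ker^\Z(D)$, contradicting Graver-minimality---is correct and is essentially the same Steinitz-plus-pigeonhole proof given in the cited reference. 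Two minor points worth tightening: (i) the pigeonhole step is cleanest if you note that $m>(2|\tv|\Delta+1)^{|\tv|}$ already forces a collision among $s_0,\ldots,s_{m-1}$ (there are $m$ of them in a box of $<m$ points), and any such collision automatically has $j\leq m-1$, hence $(i,j)\neq(0,m)$; and (ii) in the translation back to $\gg'$, one should label each copy in $U$ by the index $y\in\yy$ it came from (distinct $y$'s may contribute identical signed columns), but as you observe this is pure bookkeeping and does not affect correctness.
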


We will also use the fact that every solution to an integer program can be decomposed into some solution of bounded norm and a multiset consisting of elements of the Graver basis. This fact was observed by Pottier in~\cite[Corollary~1]{Pottier91}, we recall his proof for completeness.

\begin{lemma}\label{lem:GraverDecomp}
Let $D \in \matrices{\Z}{\yy}{\tv}$ be an integer matrix and $\bb \in \Zn^{\tv}$ be an integer vector. Then for every $\ww\in \Zn^{\yy}$ such that $D\ww=\bb$, there exists a vector $\whww\in \Zn^{\yy}$ and a multiset $\Gg$ consisting of nonnegative elements of $\Graver(D)$ such that $$D\whww=\bb, \qquad\|\whww\|\leq (2|\tv|(\|D\|_\infty+\|\bb\|_\infty)+1)^{|\tv|},\qquad \textrm{and}\qquad \ww=\whww+\sum_{\gg\in \Gg} \gg.$$
\end{lemma}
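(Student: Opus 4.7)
The plan is a standard iterative-reduction argument. Starting from the given solution $\ww$, I will repeatedly strip off nonnegative Graver basis elements of $D$ that are conformally below the current solution. The resulting ``irreducible'' remainder will be $\whww$, and the norm bound on $\whww$ will come from viewing $(\whww, 1)$ as an element of the Graver basis of an extended matrix, to which \cref{lem:GraverBound} applies.

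More concretely, I would first define a greedy process: maintain a vector $\uu$, initially $\uu = \ww$, and a multiset $\Gg$, initially empty. While there exists $\gg \in \Graver(D)$ with $\gg \geq \mathbf{0}$ and $\gg \cfleq \uu$, replace $\uu$ by $\uu - \gg$ and add $\gg$ to $\Gg$. Since $\gg \cfleq \uu$ and $\gg \neq \mathbf{0}$, the replacement preserves $\uu \in \Zn^{\yy}$ and strictly decreases $\|\uu\|_1$, so the process terminates. Preservation of $D\uu = \bb$ is immediate because $D\gg = \mathbf{0}$. Setting $\whww$ to be the terminal value of $\uu$, we obtain $\ww = \whww + \sum_{\gg \in \Gg} \gg$, $\whww \in \Zn^{\yy}$, and $D\whww = \bb$.

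The main step is bounding $\|\whww\|$. Consider the extended matrix $D' \in \matrices{\Z}{\yy \cup \{e\}}{\tv}$ obtained from $D$ by appending one additional column $-\bb$ indexed by a fresh variable $e$, and let $\tup{h} \in \Z^{\yy \cup \{e\}}$ be defined by $\tup{h}[y] = \whww[y]$ for $y \in \yy$ and $\tup{h}[e] = 1$. Then $D'\tup{h} = D\whww - \bb = \mathbf{0}$. I claim $\tup{h} \in \Graver(D')$. Indeed, suppose toward contradiction that there is $\tup{h}' \cfleq \tup{h}$ with $\tup{h}' \notin \{\mathbf{0}, \tup{h}\}$ and $D'\tup{h}' = \mathbf{0}$. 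Writing $\tup{h}' = (\uu', k')$, conformality with $\tup{h}$ forces $\uu' \geq \mathbf{0}$ and $k' \in \{0, 1\}$. If $k' = 0$, then $\uu' \in \ker^\Z(D)$ is a nonzero nonnegative vector with $\uu' \cfleq \whww$, so any $\gg \in \Graver(D)$ with $\gg \cfleq \uu'$ satisfies $\gg \geq \mathbf{0}$ and $\gg \cfleq \whww$, contradicting termination. If $k' = 1$, then $D\uu' = \bb$ and $\uu' \cfleq \whww$, $\uu' \neq \whww$; hence $\whww - \uu'$ is a nonzero nonnegative element of $\ker^\Z(D)$ with $\whww - \uu' \cfleq \whww$, and again contains some $\gg \in \Graver(D)$ conformally below $\whww$, a contradiction.

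Having established $\tup{h} \in \Graver(D')$, I apply \cref{lem:GraverBound} to $D'$. Since $D'$ still has $|\tv|$ rows and $\|D'\|_\infty \leq \|D\|_\infty + \|\bb\|_\infty$, the bound gives
\[
\|\whww\|_\infty \leq \|\tup{h}\|_1 \leq (2|\tv|(\|D\|_\infty + \|\bb\|_\infty) + 1)^{|\tv|},
\]
which is the required inequality. The only subtlety I anticipate is the case analysis for $k'$ in the minimality argument; everything else is routine bookkeeping.
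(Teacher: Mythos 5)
Your proof is correct, and it shares the key structural idea with the paper's proof --- extend $D$ by the column $-\bb$ to form $D'$, observe that the vector of interest augmented with a $1$ in the new coordinate lies in $\ker^\Z(D')$, and apply \cref{lem:GraverBound} to $D'$ to get the norm bound via $\|D'\|_\infty \leq \|D\|_\infty + \|\bb\|_\infty$ --- but the two proofs reach the decomposition differently. The paper takes $\ww' = (\ww,1) \in \ker^\Z(D')$ and invokes the standard decomposition theorem (every integer kernel element is a sign-compatible sum of Graver basis elements) to write $\ww' = \sum_{\gg'\in\Gg'}\gg'$ with $\Gg'\subseteq\Graver(D')$; sign-compatibility forces exactly one summand to carry last coordinate $1$, which becomes $\whww'$ and yields the norm bound, while the remaining summands have last coordinate $0$ and hence project to elements of $\Graver(D)$. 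You instead build $\whww$ first by greedily peeling off nonnegative $\Graver(D)$ elements from $\ww$ until none remain conformally below, and then prove that $(\whww,1)\in\Graver(D')$ by a direct conformal-minimality argument with a two-case analysis on the last coordinate. Your route trades the one-shot decomposition theorem for the weaker (and equally standard) fact that every nonzero kernel element has a Graver element conformally below it, at the cost of the explicit minimality verification, which you carry out correctly: in the $k'=0$ case the nonzero kernel element $\uu'\cfleq\whww$ contradicts termination of the greedy process, and in the $k'=1$ case the nonzero kernel element $\whww-\uu'\cfleq\whww$ does. Both proofs then conclude identically.
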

\begin{proof}
Let $z$ be a fresh variable that does not belong to $\yy$ and let $\zz\coloneqq \yy\cup \{z\}$. Further, let $D'\in \matrices{\Z}{\zz}{\tv}$ be the matrix obtained from $D$ by adding column $D'[z]=-\bb$, and $\ww'\in \Zn^{\zz}$ be the vector obtained from $\ww$ by adding the entry $\ww'[z]=1$. Observe that $D\ww=\bb$ entails $D'\ww'=\mathbf{0}$, which means that $\ww'\in \ker^\Z(D')$. Every element of the integer kernel can be decomposed into a sign-compatible sum of elements of the Graver basis; see e.g.~\cite[Lemma~3.2.3]{JesusBook}. Therefore, there is a multiset $\Gg'$ consisting of nonnegative elements of $\Graver(D')$ such that $\sum_{\gg'\in \Gg'}\gg'=\ww'$.
Due to the sign-compatibility, exactly one element $\whww' \in \Gg'$ has $\whww[z]=1$, and all the other elements $\gg'\in \Gg'\setminus \{\whww'\}$ satisfy $\gg'[z]=0$. We can now obtain $\whww$ and $\Gg$ from $\whww'$ and $\Gg'\setminus \{\whww'\}$, respectively, by stripping every vector from the coordinate corresponding to variable $z$ (formally, restricting the domain to $\yy$). As $\gg'[z]=0$ for all $\gg'\in \Gg'\setminus \{\whww'\}$, it follows that all elements of $\Gg$ belong to $\Graver(D)$. Finally, from \cref{lem:GraverBound} applied to $D'$ we conclude that $\|\whww\|_\infty\leq \|\whww'\|_\infty\leq (2|\tv|(\|D\|_\infty+\|\bb\|_\infty)+1)^{|\tv|}$.
\end{proof}

\subsection{Decomposing bricks}

As explained in \cref{sec:overview}, the key idea in the proof of \cref{thm:nfold-main} is to decompose right-hand sides of the program (that is, vectors $\bb_i$) into smaller and smaller vectors while preserving the optimum value of the program. The next lemma is the key observation that provides a single step of the decomposition.

\begin{lemma}[Brick Decomposition Lemma]\label{lem:Decomp}
There exists a function $g(k,\Delta)\in 2^{(k\Delta)^{\Oh(k)}}$ such that the following holds.
Let $D\in \matrices{\Z}{\yy}{\tv}$ be a matrix and $\bb\in \Z^{\tv}$ be a vector such that $\|\bb\|_\infty>g(k,\Delta)$,
 where $k=|\tv|$ and $\Delta=\|D\|_\infty$. Then there are non-zero vectors $\bb',\bb''\in \Z^\tv$ such that the following conditions hold:
\begin{itemize}[nosep]
    \item $\bb',\bb''\cfleq \bb$ and $\bb=\bb'+\bb''$; and
    \item for every $\myvv\in \Zn^{\yy}$ satisfying $D\myvv=\bb$, there exist $\myvv',\myvv''\in \Zn^{\yy}$ such that
    $$\myvv=\myvv'+\myvv'',\qquad D\myvv'=\bb',\qquad\textrm{and}\qquad  D\myvv''=\bb''.$$
\end{itemize}
\end{lemma}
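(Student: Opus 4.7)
The plan is to follow the two-step strategy suggested by the overview: first decompose $\bb$ into many small pieces via the Klein Lemma (\cref{lem:klein-overview}) whose individual $\ell_\infty$-norms are bounded by $N := 2^{(k\Delta)^{\Oh(k)}}$, and then use the Bundling Lemma (\cref{lem:bundling-overview}) to regroup those pieces into two conformally $\cfleq \bb$ chunks that play the roles of $\bb'$ and $\bb''$.

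For the first step, I would establish that $\bb$ admits a decomposition $\bb = \bb_1 + \ldots + \bb_m$ with $\|\bb_i\|_\infty \leq N$ and with the property that every solution $\myvv \in \Zn^\yy$ to $D\myvv = \bb$ splits compatibly as $\myvv = \sum_i \myvv_i$ with $\myvv_i \in \Zn^\yy$ and $D\myvv_i = \bb_i$. The argument is an iterated application of Klein's Lemma. A nonnegative integer solution $\myvv$ corresponds to a multiset $T_\myvv$ of columns of $D$ with $\sum T_\myvv = \bb$. At each step, I would identify a vector $\bb^*$ with $\|\bb^*\|_\infty \leq N$ that can be extracted uniformly from every solution's multiset as a bounded-size sub-multiset; such a $\bb^*$ exists by applying Klein's Lemma to arbitrary finite subfamilies of solutions and then invoking the finite-intersection property on the finite set of candidate small sums of columns of $D$. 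Once $\bb^*$ is extracted, the target is updated to $\bb - \bb^*$ and each solution to its residue. To ensure termination, I would fix one reference solution $\myvv_0$ at the start: each extraction strictly decreases $\|\myvv_0^{(\mathrm{res})}\|_1$ because the sub-multiset extracted from $T_{\myvv_0}$ is non-empty, and after at most $\|\myvv_0\|_1$ steps the reference residue becomes zero, forcing the residual target to vanish as well.

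For the second step, I would feed the vectors $\bb_1, \ldots, \bb_m$ into the Bundling Lemma to obtain non-empty groups $U_1, \ldots, U_\ell$ of size at most $\Oh(\Delta)^{2^{k-1}}$ with $\sum U_j \cfleq \bb$ for each $j$. A coordinate-wise check then shows that any partial sum $\sum_{j \in J} \sum U_j$ over $J \subseteq \{1, \ldots, \ell\}$ is also $\cfleq \bb$: wherever $\bb[t] \geq 0$, all summands contribute non-negatively and sum to $\bb[t]$, so any partial sum lies in $[0, \bb[t]]$, and symmetrically when $\bb[t] \leq 0$. I would then set $g(k, \Delta) := \Oh(\Delta)^{2^{k-1}} \cdot N$, which guarantees that whenever $\|\bb\|_\infty > g(k, \Delta)$, at least two groups must have non-zero sum (otherwise $\bb$ would equal the sum of a single group and hence satisfy $\|\bb\|_\infty \leq g(k, \Delta)$). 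Partitioning the groups into two non-empty subfamilies, each containing a non-zero-sum group, defines $\bb'$ and $\bb''$ with the required properties; the solution decomposition is inherited directly from how each $\myvv_i$ is assigned to its group.

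The main obstacle, and the step I expect to require most care, is the compactness in the iterated Klein step: the Klein Lemma gives equal sub-multisets for finite collections of solutions, but I need a single $\bb^*$ extractable uniformly from all (potentially infinitely many) solutions to the current target. The argument hinges on the finiteness of the pool of candidate small vectors in $\Z^\tv$, which is exactly what lets the finite-intersection property promote the finite-case statement to the uniform one. A secondary concern is handling possibly-zero extracted pieces, which I would simply discard at the end of the iteration, as they contribute nothing to the decomposition of $\bb$ while still progressing the reference-solution potential.
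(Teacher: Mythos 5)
Your high-level plan is the same as the paper's: iterate Klein's Lemma to carve $\bb$ into pieces of bounded $\ell_\infty$-norm that decompose \emph{every} solution compatibly, then regroup those pieces with the Bundling Lemma into conformally $\cfleq \bb$ chunks and split. The one place where you diverge substantively is in how you cope with the fact that $D\myvv=\bb$ may have infinitely many solutions, and here the paper's device is both cleaner and does strictly more work than yours.

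The paper restricts attention to the set $V$ of \emph{conformally minimal} nonnegative solutions, which is finite by Dickson's Lemma, so Klein's Lemma applies directly without any compactness argument. Crucially, minimality also gives \cref{cl:minimal}: no non-empty sub-multiset of $T_\myvv$ (for $\myvv\in V$) can sum to $\zero$. That single claim is what guarantees (i) the extracted common sum $\bb^*$ is never the zero vector, and (ii) the residual multisets $T'_\myvv$ become empty \emph{simultaneously} across all $\myvv\in V$, so termination is clean and every piece $\pp_i$ is non-zero. Arbitrary solutions are handled afterward in one line: for a solution $\myvv$, take a minimal $\whvv \cfleq \myvv$ and add the kernel vector $\myvv-\whvv$ to one of the pieces of the decomposition of $\whvv$. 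Your finite-intersection-property argument on the finite pool of candidate small sums does correctly establish the uniform extraction (and is a genuinely different, if heavier, way to do it), but because you do not restrict to minimal solutions you still have to deal with two loose ends that the paper's \cref{cl:minimal} eliminates at the source: extracted pieces with $\bb^*=\zero$, and a reference residual that empties while other residuals are still non-empty (and sum to $\zero$). Both can be patched by absorbing the leftover zero-$D$-image pieces into any other piece, as you sketch, but you should state this absorption explicitly; as written, ``forcing the residual target to vanish'' leaves the non-reference residuals dangling. A minor bookkeeping slip: after Klein the vectors you feed into the Bundling Lemma have $\ell_\infty$-norm bounded by $N=\Delta\Xi$, not $\Delta$, so the group-size bound should be stated in terms of $N$ (and note the paper's final \cref{lem:bundling} actually proves the bound $(2kN+1)^k$ via a Graver-basis argument, not the $\Oh(\Delta)^{2^{k-1}}$ of the overview); the asymptotics $2^{(k\Delta)^{\Oh(k)}}$ still come out the same.
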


Throughout this section, for a multiset $V$ of vectors or numbers, we denote by $\sum V$ the sum of the elements in $V$.

The crucial ingredient to prove \cref{lem:Decomp} is the following result of Klein~\cite{DBLP:journals/mp/Klein22}. We use the variant from Cslovjecsek et al.~\cite{CslovjecsekEPVW21}, which compared to the original version of Klein~\cite{DBLP:journals/mp/Klein22} has better bounds and applies to vectors with possibly negative coefficients.

\begin{lemma}[Klein Lemma, Theorem~4.1 of~\cite{CslovjecsekEPVW21} for $\epsilon=1$ and with adjusted notation]\label{lem:Klein}
Let $\tv$ be a tuple of variables with $|\tv|=k$ and let $T_1,\ldots,T_n$ be non-empty multisets of vectors in $\Z^{\tv}$ such that for all $i\in \{1,\ldots,n\}$ and $\uu\in T_i$, we have $\|\uu\|_\infty\leq \Delta$, and the sum of all elements in each multiset is the same:
\begin{align*}
    \sum T_1 =\sum T_2 = \ldots = \sum T_n.
\end{align*}
Then there exist non-empty submultisets $S_1 \subseteq T_1,S_2 \subseteq T_2,\ldots,S_n \subseteq T_n$, each of size bounded  by $2^{\Oh(k\Delta)^k}$, such that
\begin{align*}
    \sum S_1 =\sum S_2 = \ldots = \sum S_n.
\end{align*}
\end{lemma}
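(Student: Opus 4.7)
The plan is to prove Klein's Lemma via the Steinitz Lemma combined with a pigeonhole argument on partial sums, following the approach of Klein~\cite{DBLP:journals/mp/Klein22} refined in~\cite{CslovjecsekEPVW21}.

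The first step is to reorder each multiset. By the integer Steinitz Lemma applied to each $T_i$ individually, we obtain an ordering $T_i = (t_{i,1},\ldots,t_{i,m_i})$ under which every partial sum $P_{i,j}\coloneqq t_{i,1}+\ldots+t_{i,j}$ lies within $\ell_\infty$-distance $k\Delta$ of the line segment from $\mathbf{0}$ to the common endpoint $B=\sum T_i$. In particular, the \emph{transverse} component of $P_{i,j}$ (i.e., the component after projecting away the direction of $B$) takes at most $(2k\Delta+1)^k$ distinct values.

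The second step extracts coincident partial sums. For the two-multiset case ($n=2$), consider the concatenated round-trip walk that traverses the reordered $T_1$ forward and then the reordered $T_2$ backward. This walk starts and ends at $\mathbf{0}$, and by Steinitz its partial sums are $\ell_\infty$-bounded, hence take at most $(2k\Delta+1)^k$ distinct values. A pigeonhole on a window of that length produces two coincident partial sums whose difference decomposes as a submultiset of $T_1$ minus a submultiset of $T_2$ with equal sum and bounded size. For general $n$, the idea is to iterate/propagate this pairwise argument together with a boosting construction: using the refined Steinitz estimates, one locates \emph{common levels} along the $n$ walks at which one can simultaneously harvest short submultisets $S_i\subseteq T_i$ with matching sums. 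Concretely, by partitioning the (normalized) time interval into boundedly many pieces, by pigeonhole there exists a time at which all $n$ walks are at exactly the same transverse offset; further controlling the longitudinal coordinate via a Frobenius-type adjustment yields a point through which all walks pass, and the prefixes up to that point provide the~$S_i$.

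The third step is the size bound. The $2^{\Oh(k\Delta)^k}$ bound arises from the iteration depth of the pairwise-to-$n$-wise boosting: a single pigeonhole gives a bound of $(2k\Delta+1)^k$, but aligning all $n$ multisets simultaneously costs an additional exponential in the number of state values, giving the stated bound. Finally, non-emptiness of $S_i$ is guaranteed by choosing the pigeonhole window to exclude the trivial coincidence at the initial partial sum $\mathbf{0}$; if the extracted interval is vacuous on some side, one can instead argue directly that the corresponding multiset contains a short zero-sum subcollection, by the pigeonhole on its own partial sums within the Steinitz tube.

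The main obstacle is the coordination step for $n>2$: the two-multiset Steinitz--pigeonhole argument does not directly generalize, and one needs either a Sevastyanov-style refinement of Steinitz or a boosting argument that blows up the size bound to $2^{\Oh(k\Delta)^k}$. This is precisely the quantitative content of Klein's original result and its strengthening in~\cite{CslovjecsekEPVW21}, which we invoke here as a black box via \cref{lem:Klein}.
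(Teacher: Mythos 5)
The paper does not prove this statement at all: \cref{lem:Klein} is imported verbatim (up to notation and $\epsilon=1$) from Theorem~4.1 of~\cite{CslovjecsekEPVW21}, and is used as a black box in the proof of \cref{lem:Decomp}. So if your intent was merely to cite the result, that matches the paper's treatment. However, read as a proof attempt, your sketch has genuine gaps. First, in the $n=2$ step, the claim that the round-trip walk's partial sums ``take at most $(2k\Delta+1)^k$ distinct values'' is false: the Steinitz reordering only controls the distance of partial sums to the segment from $\mathbf{0}$ to $B=\sum T_i$, i.e.\ the transverse offset; the longitudinal coordinate along $B$ ranges over $\Omega(\|B\|)$ values, which is not bounded in $k$ and $\Delta$, so the pigeonhole as stated does not produce coincident partial sums. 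Second, and more seriously, the coordination step for general $n$ is exactly the hard content of the lemma, and your proposed resolution --- ``by pigeonhole there exists a time at which all $n$ walks are at exactly the same transverse offset'' --- is not a pigeonhole statement: with $n$ unbounded, each walk may occupy a different offset at every common time, and nothing forces simultaneous agreement of all $n$ walks (pigeonhole only yields coincidences between two walks, or between two times of one walk). The subsequent ``Frobenius-type adjustment'' of the longitudinal coordinate is likewise unsubstantiated, and the size bound $2^{\Oh(k\Delta)^k}$ is asserted rather than derived.

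Finally, your closing sentence invokes \cref{lem:Klein} itself ``as a black box,'' which makes the argument circular as a proof of that very lemma. In summary: as a citation your treatment coincides with the paper's; as a proof it is incomplete, with the simultaneous extraction of equal-sum submultisets from all $n$ multisets --- the quantitative core of Klein's theorem and of its strengthening in~\cite{CslovjecsekEPVW21} --- left unproved.
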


We also need the following lemma about partitioning a multiset of vectors into  small submultisets with sign-compatible sums.

\begin{lemma}\label{lem:bundling}
    Let $U$ be a multiset of vectors in $\Z^\tv$ with $\|\uu\|_\infty\leq \Delta$ for all $\uu\in U$, where $\tv$ is a tuple of variables with $|\tv|=k$. Further, let $\bb=\sum U$. Then one can partition $U$ into a collection of non-empty submultisets~$\mset{U_i\colon i\in I}$ so that for every $i\in I$, we have $|U_i|\leq (2k\Delta+1)^{k}$ and $\sum U_i \cfleq \bb$.
\end{lemma}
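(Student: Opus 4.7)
We prove the lemma by induction on $k = |\tv|$, following the Steinitz-style strategy hinted in the overview. The base case $k = 0$ is trivial: we partition $U$ into singletons, each of size $1 \leq 1 = (2 \cdot 0 \cdot \Delta + 1)^0$, with a vacuous sum condition since $\bb \in \Z^{\emptyset}$.

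For the inductive step, we first reduce to the case $\bb \geq \zero$ coordinate-wise by sign-flipping each coordinate of every $\uu \in U$ (and of $\bb$) where $b_i < 0$. This operation simultaneously preserves the $\ell_\infty$-norm of each vector, the identity $\sum U = \bb$, and the conformal order relation. Next, we handle the first coordinate via a one-dimensional instance of the lemma: viewing the first components of $U$ as integers in $[-\Delta, \Delta]$ summing to $b_1 \geq 0$, a pigeonhole argument on prefix sums extracts small zero-first-coordinate subsets of $U$, producing a partition into building blocks $\{G_\alpha\}$ of size at most $2\Delta + 1$ whose first-coordinate sums all lie in $[0, b_1]$. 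By refining the extraction, we arrange that all but one block have first-coordinate sum exactly $0$, with a single designated residual block carrying the net value $b_1$. Applying the inductive hypothesis in dimension $k - 1$ to the collection $\{G_\alpha\}$, viewed as vectors in $\Z^{k-1}$ (after dropping the first coordinate, with $\ell_\infty$-bound $(2\Delta + 1)\Delta$), yields super-groups $\{H_j\}$ whose sums are $\cfleq (b_2,\ldots,b_k)$ on the last $k - 1$ coordinates. Because each super-group contains at most one building block with nonzero first-coord sum, the first coordinate of $\sum H_j$ lies in $\{0, b_1\} \subseteq [0, b_1]$, and hence $\sum H_j \cfleq \bb$ in all coordinates.

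The main obstacle is attaining the tight size bound $(2k\Delta + 1)^k$. A naive compounding of per-level sizes $(2\Delta+1) \cdot (2(k-1)(2\Delta+1)\Delta + 1)^{k-1}$ yields only the weaker doubly-exponential bound $\Oh(\Delta)^{2^{k-1}}$ reported in the overview, which already suffices for the downstream applications in our approach to \nfold{} programming. Matching the stronger stated bound requires more efficient bookkeeping: one can exploit that at most one building block per super-group carries nonzero first-coord sum (so the multiplication by $(2\Delta + 1)$ at the outermost level applies only to zero-sum blocks that are structurally lighter), or alternatively recast the partition task as a Graver-style decomposition of the all-ones indicator with respect to a suitable augmented matrix and directly apply the bound of \cref{lem:GraverBound}. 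The delicate technical heart of the proof is this accounting, which must simultaneously meet the size bound and preserve sign-compatibility across all $k$ coordinates in each super-group.
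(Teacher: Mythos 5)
Your induction-on-$k$ sketch is essentially the approach that the authors originally used (it survives as a commented-out proof in the source, and as the weaker bound $\Oh(\Delta)^{2^{k-1}}$ quoted in the overview), but it does \emph{not} establish the stated bound $(2k\Delta+1)^k$, and you openly concede as much. The gap is real: the bottleneck is the $\ell_\infty$-norm inflation across induction levels. At each level you pass from vectors of norm $\Delta_\ell$ to sums of groups of size $\Theta(\Delta_\ell)$, so the norm roughly squares at each step, $\Delta_{\ell+1}=\Theta(\Delta_\ell^2)$, which is what forces the doubly-exponential $\Oh(\Delta)^{2^{k-1}}$. Your first proposed fix --- tracking that at most one building block per super-group has nonzero first-coordinate sum --- does not touch this norm-squaring mechanism, so it cannot repair the exponent. (A separate, smaller issue: the $2\Delta+1$ size bound you quote for the one-dimensional step does not actually follow from the prefix-sum pigeonhole when $0<b_1\le\Delta$; the elementary argument gives $3\Delta+1$.)

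Your second proposed fix is, in fact, precisely the paper's one-shot proof, and it avoids induction on $k$ entirely. Augment $U$ with a multiset $W$ of vectors $\ww$ satisfying $\|\ww\|_\infty\le 1$, $\ww\cfleq -\bb$, and $\sum W=-\bb$; form a matrix $A$ whose columns are the elements of $U\cup W$, so that $A\mathbf 1=\zero$. Since $\mathbf 1$ is a non-zero integer kernel element, there is some $\gg\in\Graver(A)$ with $\gg\cfleq\mathbf 1$, hence a $\{0,1\}$-vector, and \cref{lem:GraverBound} gives $\|\gg\|_1\le(2k\Delta+1)^k$ directly. Letting $U_1$, $W_1$ be the selected columns, one gets $|U_1|\le(2k\Delta+1)^k$ and $\sum U_1=-\sum W_1\cfleq\bb$ (because $W_1\subseteq W$, the elements of $W$ are coordinate-wise sign-compatible, and their total is $-\bb$), and then iterates on $U\setminus U_1$. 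So the idea you name in passing is exactly the missing ingredient; the proposal as written only proves a strictly weaker statement.
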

\begin{proof}
    Let $W$ be a multiset of vectors in $\Z^{\tv}$ such that
    \begin{itemize}[nosep]
        \item $\ww\cfleq -\bb$ and $\|\ww\|_\infty\leq 1$ for each $\ww\in W$, and
        \item $\sum \ww = -\bb$.
    \end{itemize}
    Let $\xx$ and $\yy$ be tuples of variables enumerating the elements of $U$ and $W$, respectively, so that we can set up a matrix $A\in \Z^{(\xx\cup \yy)\times \tv}$ with the elements of $U$ as the columns corresponding to $\xx$ and the elements of $W$ as the columns corresponding to $\yy$. Thus, we have
    \begin{align*}
        A\mathbf{1}=\zero.
    \end{align*}
    By \cref{lem:GraverBound}, there exists $\gg\in \Graver(A)$ such that
    \begin{align*}
        \|\gg\|_1\leq (2k\Delta+1)^{k}\qquad\textrm{and}\qquad \gg\cfleq \mathbf{1}.
    \end{align*}
    Let $U_1$ be the multisets of those elements of $U$ that correspond to the variables $x\in \xx$ with $\gg(x)=1$. Similarly, let $W_1$ be the multiset of those elements of $W$ that correspond to the variables $y\in \yy$ with $\gg(y)=1$. Thus, we have $|U_1|+|W_1|=\|\gg\|_1\leq (2k\Delta+1)^{k}$. Moreover, as $A\gg=\zero$, $\sum W=-\bb$, and $\ww\cfleq -\bb$ for each $\ww\in W_1$, we have
    \[\sum U_1=-\sum W_1\cfleq \bb.\]
    Therefore, we can set $U_1$ to be the first multiset in the sought collection. It now remains to apply the same reasoning inductively to the multiset $U\setminus U_1$ in order to extract the next elements of the collection, until the multiset under consideration becomes empty.
\end{proof}

We are now ready to prove \cref{lem:Decomp}.

\begin{proof}[Proof of Lemma~\ref{lem:Decomp}]
Let $\Xi\in 2^{\Oh(k\Delta)^k}$ be the bound provided by \cref{lem:Klein} for parameters $k$ and $\Delta$.

Call a vector $\myvv\in \Zn^\yy$ a {\em{solution}} if $D\myvv=\bb$. Further, a solution $\myvv$ is {\em{minimal}} if it is conformally minimal among the solutions: there is no solution $\myvv'$ such that $\myvv'\neq \myvv$ and $\myvv'\cfleq \myvv$.
Let $V$ be the set of all minimal solutions. Note that $V$ is an antichain in $\cfleq$, so as $\cfleq$ is a well quasi-order on $\Zn^\yy$ by Dickson's Lemma, it follows that $V$ is finite. If $V$ is empty, then there are no solutions and there is noting to prove, so assume otherwise.

For each $\myvv\in V$, construct a multiset of vectors $T_\myvv$ as follows: for each $y\in \yy$, include $\myvv[y]$ copies of the column $D[y]$ (i.e., the column of $D$ corresponding to variable $y$) in $T_\myvv$. Thus, we obtain a finite collection of multisets $\mset{T_\myvv\colon \myvv\in V}$. By construction, we have
$$\sum T_\myvv = D\myvv=\bb\qquad\textrm{for all }\myvv\in V.$$
Note that all vectors in all multisets $T_\myvv$ have all entries bounded in absolute value by $\Delta$. In the sequel, we will use the following claim a few times.

\begin{techclaim}\label{cl:minimal}
    Suppose $\myvv\in V$ and $A\subseteq T_\myvv$ is a submultiset such that $\sum A=\mathbf{0}$. Then $A$ is empty.
\end{techclaim}
\begin{subproof}
    Let $\ww\in \Zn^\yy$ be the vector of multiplicities in which the columns of $D$ appear in $A$: for $y\in \yy$, $\ww[y]$ is the number of times $D[y]$ appears in $A$. Clearly, $\ww\cfleq \myvv$ and $\sum A=\mathbf{0}$ implies that $D\ww=\mathbf{0}$. Then $D(\myvv-\ww)=\bb$ and $\myvv-\ww\cfleq \myvv$ as well. This is a contradiction with the minimality of $\myvv$ unless $\ww=\mathbf{0}$, or equivalently, $A$ is empty.
\end{subproof}

As $V$ is finite, we may apply \cref{lem:Klein} to multisets $\mset{T_\myvv\colon \myvv\in V}$. In this way, we obtain submultisets $\mset{S_\myvv\subseteq T_\myvv\colon \myvv\in V}$, each of size at most $\Xi$,
such that
\begin{align*}
    \sum S_\myvv = \sum S_{\myvv'}\qquad \textrm{for all }\myvv,\myvv'\in V.
\end{align*}

Consider multisets $T'_\myvv\coloneqq T_\myvv\setminus S_\myvv$ for $\myvv\in V$. Observe that the multisets $T'_\myvv$ have again the same sums. Moreover, if some $T'_\myvv$ is empty, then $\sum T'_{\myvv'}=0$ for every $\myvv'\in V$, which by \cref{cl:minimal} implies that $T'_{\myvv'}$ is empty as well. It follows that either all multisets $\mset{T'_\myvv\colon \myvv\in V}$ are empty, or all of them are non-empty.

If all multisets $\mset{T'_\myvv\colon \myvv\in V}$ are non-empty, then we may apply the same argument to them again, and thus extract suitable non-empty submultisets $\mset{S'_\myvv\subseteq T'_\myvv\colon \myvv\in V}$ with the same sum. By performing this reasoning iteratively until all multisets become empty (which occurs simultaneously due to \cref{cl:minimal}), we find a collection of partitions
$$\mset{\mset{S_\myvv^i\colon i\in I}\colon \myvv\in V}$$
for some index set $I$ such that:
\begin{itemize}[nosep]
    \item $\mset{S_\myvv^i\colon i\in I}$ is a partition of $T_\myvv$ for each $\myvv\in V$;
    \item $S_\myvv^i$ is non-empty and $|S_\myvv^i|\leq \Xi$ for all $i\in I$ and $\myvv\in V$; and
    \item $\sum S^i_\myvv=\sum S^i_{\myvv'}$ for all $i\in I$ and $\myvv,\myvv'\in V$.
\end{itemize}
For $i\in I$, let $\pp_i$ be the common sum of multisets $S^i_\myvv$ for $\myvv\in V$, that is,
$$\pp_i= \sum S^i_\myvv\qquad\textrm{for all }\myvv\in V.$$
Note that $\sum_{i\in I}\pp_i = \bb$ and $\|\pp_i\|_\infty\leq \Delta\Xi\eqqcolon \Delta'$ for all $i\in I$.

We now apply \cref{lem:bundling} to the collection of vectors $\mset{\pp_i\colon i\in I}$. This yields a partition $\mset{I_j\colon j\in J}$ of $I$ with some index set $J$ such that for each $j\in J$, we have
$$|I_j|\leq (2k\Delta'+1)^{k}\leq  2^{(k\Delta)^{\Oh(k)}}\qquad\textrm{and}\qquad \sum_{i\in I_j} \pp_i\cfleq \bb.$$
By setting $g(k,\Delta)\in 2^{(k\Delta)^{\Oh(k)}}$ large enough, we may guarantee that $|I_j|\cdot \Delta\leq g(k,\Delta)$ for all $j\in J$. Since we assumed that $\|b\|_\infty>g(k,d)$, we conclude that $\|b\|_\infty>\sum_{i\in I_j} \pp_i$ for all $j\in J$, and consequently it must be the case that $|J|\geq 2$.

Let $\{J',J''\}$ be any partition of $J$ with both $J'$ and $J''$ being non-empty. Define
$$I'\coloneqq \bigcup_{j\in J'} I_j\qquad \textrm{and}\qquad I''\coloneqq \bigcup_{j\in J''} I_j.$$
Further, we set
$$\bb'\coloneqq \sum_{i\in I'} \pp_i\qquad \textrm{and}\qquad\bb''\coloneqq \sum_{i\in I''} \pp_i.$$
As $\sum_{i\in I} \pp_i=\bb$ and $\pp_i\cfleq \bb$ for all $i\in I$, it follows that $\bb'+\bb''=\bb$ and $\bb',\bb''\cfleq \bb$. Further, from \cref{cl:minimal} we have that both $\bb'$ and $\bb''$ are non-zero.

Consider any $\myvv\in V$. Let
$$S'_\myvv\coloneqq \bigcup_{i\in I'} S^i_\myvv\qquad \textrm{and}\qquad S''_\myvv\coloneqq \bigcup_{i\in I''} S^i_\myvv .$$
Note that
\begin{equation}\label{eq:zaba}
\sum S'_\myvv = \bb'\qquad \textrm{and}\qquad \sum S''_\myvv = \bb''.
\end{equation}
Further, let $\myvv'\in \Zn^{\yy}$ be the vector of multiplicities in which the columns of $D$ appear in $S'_\myvv$: for $y\in \yy$, $\myvv'[y]$ is the number of times $D[y]$ appears in $S'_\myvv$. Define $\myvv''$ analogously for $S''_\myvv$. Since $\{S'_\myvv,S''_\myvv\}$ is a partition of $T_\myvv$, we have $\myvv'+\myvv''=\myvv$ and $\myvv',\myvv''\cfleq \myvv$. Moreover, from~\cref{eq:zaba} it follows that
\begin{equation*}
D\myvv' = \bb'\qquad \textrm{and}\qquad D\myvv'' = \bb''.
\end{equation*}

\newcommand{\whvv}{\widehat{\myvv}}

The above reasoning already settles the second condition from the lemma statement for every minimal solution. So consider now any solution $\myvv\in \Zn^\yy$. Then there exists a minimal solution $\whvv\in V$ such that $\whvv\cfleq \myvv$. As both $\myvv$ and $\whvv$ are solutions, we have $D(\myvv-\whvv)=\mathbf{0}$. It follows that we may simply take
$$\myvv'\coloneqq \whvv'+(\myvv-\whvv)\qquad \textrm{and}\qquad \myvv''\coloneqq \whvv''.$$
This concludes the proof.
\end{proof}

From now on, we adopt the function $g(\Delta,k)$ provided by \cref{lem:Decomp} in the notation.

Now that \cref{lem:Decomp} is established, it is tempting to apply it iteratively: first break $\bb$ into $\bb'$ and $\bb''$, then break $\bb'$ into two even smaller vectors, and so on. To facilitate the discussion about such decompositions, we introduce the following definition.

\begin{definition}
    Let $D\in \matrices{\Z}{\yy}{\tv}$ be a matrix and $\bb\in \Z^\tv$ be a vector. A {\em{faithful decomposition}} of~$\bb$ with respect to $D$ is a collection $\mset{\bb_i\colon i\in I}$ of non-zero vectors in $\Z^\tv$ satisfying the following conditions:
\begin{itemize}[nosep]
    \item $\bb_i\cfleq \bb$ for each $i\in I$ and $\bb=\sum_{i\in I} \bb_i$; and
    \item for every $\myvv\in \Zn^{\yy}$ satisfying $D\myvv=\bb$, there exist a collection of vectors $\mset{\myvv_i\colon i\in I}$ in $\Zn^{\yy}$ such that
    $$\myvv=\sum_{i\in I} \myvv_i\qquad \textrm{and} \qquad D\myvv_i=\bb_i\quad\textrm{for each }i\in I.$$
\end{itemize}
The {\em{order}} of a faithful decomposition $\mset{\bb_i\colon i\in I}$ is $\max_{i\in I} \|\bb_i\|_\infty$.
\end{definition}

By applying \cref{lem:Decomp} iteratively, we get the following:

\begin{lemma}\label{lem:ConformalDecomp}
For every matrix $D\in \matrices{\Z}{\yy}{\tv}$ and a non-zero vector $\bb\in \Z^{\tv}$, there exists a faithful decomposition of $\bb$ with respect to $D$ of order at most $g(\Delta,k)$, where $k=|\tv|$ and $\Delta=\|D\|_\infty$.
\end{lemma}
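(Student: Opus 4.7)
The plan is to prove the lemma by induction on $\|\bb\|_1$, using the Brick Decomposition Lemma (\cref{lem:Decomp}) as the key recursive step, and composing the decompositions obtained from the two halves. If $\|\bb\|_\infty \leq g(\Delta,k)$, then the singleton collection $\mset{\bb}$ is already a faithful decomposition of $\bb$ of order $\|\bb\|_\infty \leq g(\Delta,k)$: the trivial assignment $\myvv_1 \coloneqq \myvv$ witnesses the second condition for any solution $\myvv$.

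In the inductive step, suppose $\|\bb\|_\infty > g(\Delta,k)$. Apply \cref{lem:Decomp} to obtain non-zero vectors $\bb',\bb'' \in \Z^\tv$ with $\bb',\bb'' \cfleq \bb$, $\bb = \bb'+\bb''$, such that every $\myvv \in \Zn^\yy$ with $D\myvv=\bb$ decomposes as $\myvv=\myvv'+\myvv''$ with $\myvv',\myvv'' \in \Zn^\yy$, $D\myvv'=\bb'$, and $D\myvv''=\bb''$. Since $\bb',\bb''$ are sign-compatible with $\bb$ coordinate-wise and $\bb=\bb'+\bb''$, we get $|\bb[x]| = |\bb'[x]|+|\bb''[x]|$ for every $x \in \tv$, hence $\|\bb\|_1 = \|\bb'\|_1 + \|\bb''\|_1$. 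As both $\bb'$ and $\bb''$ are non-zero, this yields the strict drops $\|\bb'\|_1, \|\bb''\|_1 < \|\bb\|_1$, so the induction is well-founded.

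By the inductive hypothesis, there exist faithful decompositions $\mset{\bb'_j \colon j \in J'}$ of $\bb'$ and $\mset{\bb''_j \colon j \in J''}$ of $\bb''$ with respect to $D$, each of order at most $g(\Delta,k)$. Let $J \coloneqq J' \sqcup J''$ and define the combined collection $\mset{\bb_j \colon j \in J}$ by setting $\bb_j$ to be $\bb'_j$ for $j \in J'$ and $\bb''_j$ for $j \in J''$. I claim this is a faithful decomposition of $\bb$ of order at most $g(\Delta,k)$. Each $\bb_j$ is non-zero; by transitivity of $\cfleq$ we have $\bb_j \cfleq \bb$ (since $\bb_j \cfleq \bb'$ or $\bb_j \cfleq \bb''$, and $\bb',\bb'' \cfleq \bb$); and $\sum_{j \in J} \bb_j = \bb' + \bb'' = \bb$. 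For the second property, given $\myvv \in \Zn^\yy$ with $D\myvv=\bb$, first split $\myvv = \myvv' + \myvv''$ via \cref{lem:Decomp}, then apply the two inductive decompositions to $\myvv'$ and $\myvv''$ to obtain vectors $\mset{\myvv_j \colon j \in J}$ in $\Zn^\yy$ with $D\myvv_j = \bb_j$ and $\sum_{j \in J} \myvv_j = \myvv' + \myvv'' = \myvv$.

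There is no real obstacle once \cref{lem:Decomp} is in hand; the proof is essentially bookkeeping around the recursion. The only point meriting attention is the strict decrease of $\|\bb\|_1$ needed for termination of the induction, which is secured by the conformality of the split together with the non-vanishing of both pieces guaranteed by \cref{lem:Decomp}. I note that this proof is purely existential and yields no complexity bound on computing such a decomposition; efficient computation is addressed separately (and is where the high-multiplicity encoding and the algorithm of Kouteck\'y and Talmon will come into play, as outlined in \cref{sec:overview}).
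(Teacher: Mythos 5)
Your proof is correct and follows essentially the same route as the paper's: the paper runs the same iterated application of \cref{lem:Decomp}, phrased as a splitting procedure that maintains a faithful decomposition invariant and terminates because the cardinality of the decomposition strictly increases yet is bounded by $\|\bb\|_1$, whereas you phrase the very same recursion as strong induction on $\|\bb\|_1$ with the conformal additivity $\|\bb\|_1 = \|\bb'\|_1 + \|\bb''\|_1$ supplying the well-founded measure. The two termination arguments are two sides of the same coin; your write-up is just a bit more explicit about why the recursion is well-founded and why the recombined collection is again a faithful decomposition.
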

\begin{proof}
Start with a faithful decomposition $\Bb$ consisting only of $\bb$. Then, as long as $\Bb$ contains a vector $\pp$ with $\|\pp\|>g(k,\Delta)$, apply \cref{lem:Decomp} to $\pp$ yielding suitable vectors $\pp'$ and $\pp''$, and replace $\pp$ with $\pp'$ and $\pp''$ in $\Bb$. It is straightforward to verify that throughout this procedure $\Bb$ remains a faithful decomposition. Moreover, the size of $\Bb$ increases in each step of the procedure, while every faithful decomposition of $\bb$ has cardinality bounded by $\|\bb\|_1$. Therefore, the procedure must eventually end, yielding a faithful decomposition of order at most $g(k,\Delta)$.
\end{proof}

Observe that in the worst case, a faithful decomposition $\Bb$ provided by \cref{lem:ConformalDecomp} can consist of as many as $\Omega(\|\bb\|_1)$ vectors, which in our case is exponential in the size of the bit encoding of $\bb$. However, as all vectors in $\Bb$ have $\ell_\infty$-norm bounded in terms of $k$ and $\Delta$, the total number of distinct vectors is bounded by a function of $k$ and $\Delta$. Therefore, a faithful decomposition $\Bb$ provided by \cref{lem:ConformalDecomp} can be encoded compactly using $g(k,\Delta)\cdot \log \|\bb\|_\infty$ bits, by storing each vector present in $\Bb$ together with its multiplicity in $\Bb$, encoded in binary. In all further algorithmic discussions we will assume that this encoding scheme for faithful decompositions.

Our next goal is to prove that the faithful decomposition of \cref{lem:ConformalDecomp} can be computed algorithmically, in fixed-parameter time when parameterized by $|\yy|$, $k$, and $\Delta$, assuming $\bb$ is given on input in binary. The idea is to iteratively extract a large fraction of the elements of the decomposition, so that the whole process finishes after a number of steps that is logarithmic in $\|\bb\|_\infty$. Each extraction step will be executed using a fixed-parameter algorithm for deciding Presburger Arithmetic. Let us give a brief introduction.

Presburger Arithmetic is the first-order theory of the structure $\langle \Zn,+,0,1,2,\ldots \rangle$, that is, of nonnegative integers equipped with addition (treated as a binary function) and all elements of the universe as constants. More precisely, a {\em{term}} is an arithmetic expression using the binary function $+$, variables (meant to be evaluated to nonnegative integers), and constants (nonnegative integers). Formulas considered in Presburger Arithmetic are constructed from atomic formulas --- equalities of terms --- using the standard syntax of first-order logic, including standard boolean connectives, negation, and both existential and universal quantification (over nonnegative integers). The semantics is as expected. Note that while comparison is not directly present in the signature, it can be easily emulated by existentially quantifying a slack variable: for two variables $x,y$, the assertion $x\leq y$ is equivalent to $\exists_z\ x+z=y$. As usual, $\varphi(\xx)$ denotes a formula with a tuple of free variables $\xx$, while a {\em{sentence}} is a formula without free variables. The {\em{length}} of a formula $\varphi(\xx)$, denoted~$\|\varphi\|$, is defined in a standard way by structural induction on the formula and the terms contained within. Here, all constants are deemed to have length~$1$.

Presburger~\cite{Presburger29} famously proved that Presburger Arithmetic is decidable: there is an algorithm that given a sentence $\varphi$ of first-order logic over $\langle \Zn,+,0,1,2,\ldots \rangle$, decides whether $\varphi$ is true. As observed by Kouteck\'y and Talmon~\cite{KouteckyT21}, known algorithms for deciding Presburger Arithmetic can be understood as fixed-parameter algorithms in the following sense.

\begin{theorem}[follows from {\cite[Theorem~1]{KouteckyT21}}]\label{thm:PA}
Given a first-order sentence $\varphi$ over the structure $\langle \Zn,+,0,1,2,\ldots \rangle$, with constants encoded in binary, one can decide whether $\varphi$ is true in time $f(\|\varphi\|)\cdot (\log \Delta)^{\Oh(1)}$, where $\Delta\geq 3$ is an upper bound on all the constants present in $\varphi$ and $f$ is a computable function.
\end{theorem}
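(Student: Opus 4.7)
The plan is to derive \cref{thm:PA} as a direct specialization of Theorem~1 of Kouteck\'y and Talmon~\cite{KouteckyT21}. Their result provides an algorithm for deciding truth of a Presburger sentence whose running time is bounded by $h(\alpha(\varphi)) \cdot (\log \Delta)^{\Oh(1)}$, where $\Delta$ is an upper bound on the constants appearing in $\varphi$, $\alpha(\varphi)$ is a tuple of structural complexity measures of $\varphi$ (such as the number of quantifier alternations, the number of quantified variables, and the number of atomic subformulas), and $h$ is a computable function.

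The key observation is that each such structural measure is bounded above by the total length $\|\varphi\|$: every variable occurrence, quantifier, boolean connective, and atomic subformula contributes at least one symbol to the encoding of $\varphi$, and under our convention each constant also contributes at least one symbol irrespectively of its numerical magnitude (its binary bit-representation is tracked separately through $\Delta$). Consequently, $h(\alpha(\varphi)) \le f(\|\varphi\|)$ for the computable function $f$ obtained by setting every argument of $h$ equal to $\|\varphi\|$.

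The dependence on $\Delta$ transfers directly, because the Kouteck\'y--Talmon algorithm already expresses its arithmetic complexity in terms of $(\log \Delta)^{\Oh(1)}$ to account for manipulating binary-encoded constants; substituting the above bound on the parameterized factor yields exactly the claimed running time $f(\|\varphi\|) \cdot (\log \Delta)^{\Oh(1)}$. The mild requirement $\Delta \ge 3$ simply ensures $\log \Delta \ge 1$, so that the polylogarithmic expression does not degenerate. I expect the only real work to be a careful bookkeeping step matching the syntactic conventions of~\cite{KouteckyT21}, in particular checking that every parameter appearing in their complexity bound is indeed captured by $\|\varphi\|$ as defined here (including accounting for any constants introduced by normalizing the formula, e.g.\ eliminating comparisons via slack variables); this is routine rather than conceptual.
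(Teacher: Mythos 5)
Your proposal is correct and takes essentially the same approach as the paper, which states the theorem as a specialization of Koutecký--Talmon's Theorem~1 with only a brief remark that the latter handles a more general setting (modular arithmetic, concise formulas, convex minimization) yet readily implies the narrower statement. You merely spell out the bookkeeping that the paper leaves implicit: that every structural parameter in the cited running-time bound is dominated by $\|\varphi\|$ under the paper's convention that constants count as length~$1$, while their magnitude is tracked separately through $\Delta$.
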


We remark that Kouteck\'y and Talmon discuss a more general setting where formulas are written in a more concise form, including allowing a direct usage of modular arithmetics, and one considers minimization of convex functions over tuples of variables satisfying constraints expressible in Presburger Arithmetic. Nevertheless, \cref{thm:PA}, as stated above, follows readily from \cite[Theorem~1]{KouteckyT21}. For a proof of \cite[Theorem~1]{KouteckyT21}, see~\cite[Theorem~2.2]{KouteckyT20}.

%Before we can do so, however, we are missing one more ingredient: the fixed-parameter algorithm of Eisenbrand and Shmonin~\cite{DBLP:journals/mor/EisenbrandS08} for $\forall\exists$ integer programming.

We now use \cref{thm:PA} to argue that $\forall\exists$ integer programs can be solved efficiently.

\begin{lemma}\label{thm:EisenbrandShmonin}
Suppose $A\in \matrices{\Z}{\xx}{\yy}$ and $B\in \matrices{\Z}{\yy}{\tv}$ are matrices and $\dd\in \Z^\tv$ is a vector. Then the satisfaction of the following sentence
$$\forall_{\myvv\in \Z^\yy} \left[(B\myvv\leq \dd) \implies \exists_{\uu\in \Z^\xx}\, (A\uu\leq \myvv)\right]$$
can be verified in time $f(\Delta,|\xx|,|\yy|,|\tv|)\cdot \left(\log \|\dd\|_\infty\right)^{\Oh(1)}$, where $\Delta=\max(\|A\|_\infty,\|B\|_\infty)$ and $f$ is a computable function.
\end{lemma}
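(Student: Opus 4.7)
The plan is to express the given $\forall\exists$ sentence as a sentence $\varphi$ of Presburger Arithmetic over $\langle \Zn, +, 0, 1, 2, \ldots\rangle$ and then apply \cref{thm:PA}. The main work is encoding the sentence over $\Z$ as a sentence over $\Zn$ while controlling both the total length of $\varphi$ (which should depend only on $|\xx|,|\yy|,|\tv|$ and $\Delta$) and the magnitude of the constants appearing in it (which should be bounded by $\max(\Delta, \|\dd\|_\infty)$).

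To convert integer variables to nonnegative ones, for every variable $v$ of the original sentence we introduce two variables $v^+, v^- \in \Zn$ and syntactically replace each occurrence of $v$ with the formal difference $v^+ - v^-$. Since Presburger Arithmetic has only addition, each linear inequality $\sum_j a_j v_j \leq d$ is rewritten by grouping terms by sign: it becomes an inequality of the form $\sum_{j} |a_j|\, v_j^{\sigma(a_j)} + d^{-} \leq \sum_j |a_j|\, v_j^{-\sigma(a_j)} + d^{+}$ (where $\sigma$ indicates the sign of $a_j$ and $d = d^{+} - d^{-}$ splits the right-hand side). Each such inequality $s \leq t$ is finally rewritten as $\exists_z\, s + z = t$ using an auxiliary slack variable $z \in \Zn$. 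Applying this translation to both the $|\tv|$ inequalities defining $B\myvv \leq \dd$ and the $|\xx|$ inequalities defining $A\uu \leq \myvv$, and combining with the outer $\forall$/$\exists$ quantifiers and the implication, yields a Presburger sentence $\varphi$ equivalent to the one in the lemma statement.

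The length of $\varphi$ is bounded by a polynomial in $|\xx|$, $|\yy|$, $|\tv|$, and $\Delta$: each of the $O((|\xx|+|\tv|)(|\xx|+|\yy|))$ terms contributes $O(\Delta)$ summands (since $|a_j|\leq \Delta$), and we add $O(|\xx|+|\yy|+|\tv|)$ slack and sign-split variables. Crucially, the only constants appearing in $\varphi$ are the entries of $A$, $B$, and $\dd$ (together with $0$ and $1$), hence all constants have absolute value at most $\max(\Delta, \|\dd\|_\infty)$. Applying \cref{thm:PA} with this bound gives a running time of $F(\|\varphi\|)\cdot (\log \max(\Delta, \|\dd\|_\infty))^{\Oh(1)}$ for a computable function $F$. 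Since $\|\varphi\|$ is bounded by a computable function of $\Delta, |\xx|, |\yy|, |\tv|$, and since $(\log \Delta)^{\Oh(1)}$ can be absorbed into the fixed-parameter factor, the total running time is $f(\Delta, |\xx|, |\yy|, |\tv|) \cdot (\log \|\dd\|_\infty)^{\Oh(1)}$, as required. The only mild subtlety is ensuring that constants remain bounded by $\max(\Delta, \|\dd\|_\infty)$ after the sign-splitting step, but this is immediate because we never multiply constants together during the translation.
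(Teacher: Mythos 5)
Your proposal is correct and follows essentially the same route as the paper: rewrite the $\forall\exists$ sentence into an equivalent Presburger sentence over $\langle\Zn,+,0,1,2,\ldots\rangle$ with length bounded in terms of $\Delta,|\xx|,|\yy|,|\tv|$ and constants bounded in terms of $\|\dd\|_\infty$ and $\Delta$, then invoke \cref{thm:PA}. The one small way you go beyond the paper is that you explicitly handle the fact that $\myvv$ and $\uu$ range over $\Z$ rather than $\Zn$ via the $v^+-v^-$ splitting, a step the paper's proof treats as implicit; apart from that, both rewritings (sign handling, coefficient expansion by repeated addition, and existentially quantified slack variables) match.
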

\begin{proof}
    It suffices to combine \cref{thm:PA} with the observation that the considered sentence can be easily rewritten to an equivalent first-order sentence over $\langle \Zn,+,0,1,2,\ldots \rangle$, whose length depends only on $\Delta,|\xx|,|\yy|,|\tv|$ and whose constants are bounded by $\|\dd\|_\infty$. Indeed, every inequality present in the constraint $B\myvv\leq \dd$ can be rewritten to an equality of two terms as follows:
    \begin{itemize}[nosep]
    \item Every summand on the left hand side with a negative coefficient is moved to the right hand side. Also, if the constant on the right hand side is negative, it is moved to the left hand side. Thus, after this step, every side is a sum of variables multiplied by positive coefficients and positive constants.
    \item Every summand of the form $\alpha v$, where $\alpha$ is a positive coefficient and $v=\myvv[y]$ for some $y\in \yy$, is replaced by $\underbrace{v+v+\ldots+v}_\alpha$. Note here that $\alpha\leq \Delta$.
    \item Inequality is turned into an equality by existentially quantifying a nonnegative slack variable.
    \end{itemize}
    We apply a similar rewriting to the constraints present in $A\uu\leq \myvv$. After these steps, the sentence is a first-order sentence over $\langle \Zn,+,0,1,2,\ldots \rangle$.
\end{proof}

We remark that \cref{thm:EisenbrandShmonin} follows also from the work of Eisenbrand and Shmonin on $\forall\exists$ integer programs; see~\cite[Theorem~4.2]{DBLP:journals/mor/EisenbrandS08}. In~\cite{DBLP:journals/mor/EisenbrandS08} Eisenbrand and Shmonin only speak about polynomial-time solvability for fixed parameters, but a careful inspection of the proof shows that in fact, the algorithm works in fixed-parameter time with relevant parameters, as described in \cref{thm:EisenbrandShmonin}.

%We also note that \cref{thm:EisenbrandShmonin} in the form stated above follows from the fixed-parameter algorithm for optimization problems definable in Presburger arithmetic, due to Kouteck\'y and Talmon~\cite{KouteckyT21} (see~\cite[Theorem~2.2]{KouteckyT20} for a full proof).

With \cref{thm:EisenbrandShmonin} in place, we present an algorithm to compute faithful decompositions. We first prove the following statement, which shows that a large fraction of a decomposition can be extracted~efficiently.

\begin{lemma}\label{lem:ComputingOneStepDecomp}
Given a matrix $D\in \matrices{\Z}{\yy}{\tv}$ and a non-zero vector $\bb\in \Z^{\tv}$, one can in time $f(\Delta,|\yy|,k)\cdot (\log \|\bb\|_\infty)^{\Oh(1)}$, where $f$ is a computable function, $\Delta=\|D\|_\infty$, and $k=|\tv|$, compute a faithful decomposition $\Bb_0$ of $\bb$ with respect to $D$ with the following property: the $\ell_\infty$-norms of all the elements of $\Bb_0$ are bounded by $g(\Delta,k)$ except for at most one element $\bb'$, for which it holds that
$$\|\bb'\|_1\leq \left(1-\frac{1}{2^{(k\Delta)^{\Oh(k)}}}\right)\cdot \|\bb\|_1.$$
\end{lemma}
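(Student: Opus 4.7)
The plan is to establish existence of a good one-step extraction via a pigeonhole argument on \cref{lem:ConformalDecomp}, and then to locate it algorithmically by iterating over candidate ``extracted'' vectors and computing, for each, the maximum multiplicity through a $\forall\exists$ query reducible to \cref{thm:EisenbrandShmonin}. Let $\Xi \coloneqq g(\Delta, k)$. By \cref{lem:ConformalDecomp} there is a faithful decomposition $\Bb^*$ of $\bb$ of order at most $\Xi$. Each $\bb_i \in \Bb^*$ is non-zero, sign-compatible with $\bb$, and has $\|\bb_i\|_\infty \leq \Xi$; sign-compatibility combined with $\sum_i \bb_i = \bb$ yields $\sum_i \|\bb_i\|_1 = \|\bb\|_1$, hence $|\Bb^*| \geq \|\bb\|_1/(k\Xi)$. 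Since only $(2\Xi+1)^k$ distinct vectors are admissible, pigeonhole produces a non-zero $\pp^*$ with $\|\pp^*\|_\infty \leq \Xi$ appearing at least $N^* \geq \|\bb\|_1 / (k\Xi(2\Xi+1)^k)$ times in $\Bb^*$; regrouping the remaining members of $\Bb^*$ into a single sum gives a faithful decomposition of $\bb$ consisting of $N^*$ copies of $\pp^*$ and the single remainder $\bb - N^*\pp^*$, of $\ell_1$-norm at most $(1 - 1/2^{(k\Delta)^{\Oh(k)}})\|\bb\|_1$.

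Algorithmically, I would iterate over all non-zero $\pp \in \Z^\tv$ with $\|\pp\|_\infty \leq \Xi$ (only $\Oh((2\Xi+1)^k)$ candidates) and determine, for each, the largest $N \geq 1$ for which $N\pp \cfleq \bb$ and $\{N\pp,\ \bb - N\pp\}$ is a faithful decomposition of $\bb$. The main obstacle is expressing faithfulness in a fixed-parameter manner: a direct reading of the definition introduces $N$ existential quantifiers, and $N$ can be exponentially large in the input. Unpacking the definition, faithfulness of $\{N\pp,\bb-N\pp\}$ is equivalent to
\[
\forall \myvv \in \Zn^\yy\colon\ D\myvv = \bb \implies \exists \uu_1, \ldots, \uu_N \in \Zn^\yy\colon\ D\uu_j = \pp \textrm{ for all } j \textrm{ and } \textstyle\sum_{j=1}^N \uu_j \leq \myvv.
\]
To compress the inner existentials I would invoke \cref{lem:GraverDecomp} applied to $D$ and $\pp$: every nonnegative integer solution of $D\uu = \pp$ decomposes as $\uu^{(i)} + \sum_{\gg \in G^+} c_\gg\, \gg$, where $\uu^{(i)}$ ranges over the finite set $V_\pp$ of nonnegative solutions of $\ell_\infty$-norm at most $(2k(\Delta+\Xi)+1)^k$, $G^+$ is the set of nonnegative elements of $\Graver(D)$, and $c_\gg \in \Zn$. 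Both $V_\pp$ and $G^+$ are computable from $D, \pp$ and have cardinality bounded by a function of the parameters. Aggregating the $N$ copies by their type and pooling Graver summands (nonnegativity makes the subsequent redistribution among the $N$ slots unconstrained) turns the condition into the equivalent
\[
\forall \myvv \in \Zn^\yy\colon\ D\myvv = \bb \implies \exists (n_i)_i, (C_\gg)_\gg \in \Zn\colon\ \textstyle\sum_i n_i = N \ \wedge\ \sum_i n_i\, \uu^{(i)} + \sum_\gg C_\gg\, \gg \leq \myvv,
\]
which has only a parameter-bounded number of inner variables.

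This statement fits the template of \cref{thm:EisenbrandShmonin} after rewriting equalities as pairs of inequalities and augmenting the outer vector with a few auxiliary coordinates fixed by outer equalities to the constants $N$ and $0$ (so that every RHS of the inner block is a component of the extended outer vector). All numerical constants are bounded by $\|\bb\|_\infty$, hence one query can be decided in time $h(\Delta, |\yy|, k) \cdot (\log\|\bb\|_\infty)^{\Oh(1)}$. Furthermore, the reformulated condition is monotone in $N$: given witnesses for some $N \geq 1$, decrementing any positive $n_i$ by $1$ produces witnesses for $N-1$, since subtracting a nonnegative vector $\uu^{(i)}$ only strengthens the inequality. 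Thus, for each candidate $\pp$, a binary search over $N \in \{0, 1, \ldots, \|\bb\|_\infty\}$ locates the maximum valid $N$ using $\Oh(\log\|\bb\|_\infty)$ queries; summing over all candidates yields the claimed running time. Returning the pair $(\pp, N)$ that maximizes $N\|\pp\|_1$ and setting $\Bb_0 = \{N\text{ copies of }\pp,\ \bb - N\pp\}$ (dropping the remainder if zero) finishes the construction: the pigeonhole argument guarantees that some candidate achieves $N\|\pp\|_1 \geq \|\bb\|_1/2^{(k\Delta)^{\Oh(k)}}$, so the output meets the required bound on $\|\bb'\|_1$.
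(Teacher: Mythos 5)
Your proposal is correct and follows essentially the same route as the paper: both use \cref{lem:ConformalDecomp} plus pigeonhole to establish the existence of a good $(\pp,N)$, enumerate the parametrically many candidates $\pp$, and for each verify faithfulness via a $\forall\exists$ integer program built from \cref{lem:GraverDecomp} and decided by \cref{thm:EisenbrandShmonin}. The only (cosmetic) differences are that you binary-search for the maximum valid $N$ using a monotonicity observation, whereas the paper guesses a power of two $\alpha'$ and checks it, and you fold the remainder $\myvv'$ into the inequality $\sum_i n_i\uu^{(i)}+\sum_\gg C_\gg\gg\leq\myvv$ rather than carrying it as an explicit inner variable; both give the same running-time guarantee.
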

\begin{proof}
By \cref{lem:ConformalDecomp}, $\bb$ admits a faithful decomposition $\Bb$ with respect to $D$ such that the order of $\Bb$ is at most $\Xi\coloneqq g(\Delta,k)$. Observe that there are $M\coloneqq (2\Xi+1)^k$ different vectors in $\Z^\tv$ of $\ell_\infty$-norm at most~$\Xi$, hence also at most $M$ different vectors present in $\Bb$. Therefore, there exists a vector $\bb_0$ appearing in $\Bb$ with multiplicity $\alpha\geq 1$ such that
$$\alpha\cdot \|\bb_0\|_1\geq \frac{1}{M}\cdot \|\bb\|_1.$$
Let $\alpha'$ be the largest power of two such that $\alpha'\leq \alpha$. Then we have
\begin{equation}\label{eq:nornica}\alpha'\cdot \|\bb_0\|_1\geq \frac{1}{2M}\cdot \|\bb\|_1.
\end{equation}

The algorithm guesses, by trying all possibilities, vector $\bb_0$ and power of two $\alpha'$. Note that there are at most $M\leq 2^{(k\Delta)^{\Oh(k)}}$ choices for $\bb_0$ and at most $\log \|\bb\|_1$ choices for $\alpha'$, as every faithful decomposition of $\bb$ consists of at most $\|\bb\|_1$ vectors. Having guessed, $\bb_0$ and $\alpha'$, we define decomposition $\Bb_0$ to consist~of
\begin{itemize}[nosep]
    \item $\alpha'$ copies of the vector $\bb_0$, and
    \item vector $\bb'\coloneqq \bb-\alpha'\bb_0$.
\end{itemize}
As $\Bb$ is a faithful decomposition, it follows that $\Bb_0$ defined in this way is a faithful decomposition as well, provided $\bb_0$ and $\alpha'$ were chosen correctly. Further,  we have $\|\bb_0\|\leq \Xi=g(\Delta,k)$ and, by~\cref{eq:nornica}, also
$$\|\bb'\|_1 \geq \left(1-\frac{1}{2M}\right)\cdot \|\bb\|_1.$$
Note that $2M=2\cdot (2\Xi+1)^k=2\cdot (2g(\Delta,k)+1)^k\in 2^{(k\Delta)^{\Oh(k)}}$, as promised.
So what remains to prove is that for a given choice of $\bb_0$ and $\alpha'$, one can efficiently verify whether $\Bb_0$ defined as above is indeed a faithful decomposition of $\bb$.

Before we proceed, observe that by \cref{lem:GraverDecomp}, every solution $\ww\in \Zn^{\yy}$ to $D\ww=\bb_0$ can be decomposed as
\begin{equation}\label{eq:rekin}
    \ww=\whww+\sum_{\gg\in \Gg} \gg,
\end{equation}
where $\whww\in \Zn^{\yy}$ is also a solution to $D\whww=\bb_0$ that additionally satisfies $\|\whww\|_\infty\leq (2k(\Xi+\Delta)+1)^{k}$, while $\Gg$ is a multiset of nonnegative vectors belonging to the Graver basis of $D$. Vector $\whww$ will be called the {\em{base solution}} for $\ww$.

With this observation, we can formulate the task of verifying faithfulness of $\Bb_0$ as a $\forall\exists$ integer program as follows. We need to verify whether for every $\myvv\in \Zn^\yy$ satisfying $D\myvv=\bb$, there is a vector $\myvv'\in \Zn^\yy$ satisfying $D\myvv'=\bb'$ and vectors $\ww_1,\ldots,\ww_{\alpha'}\in \Zn^\yy$ satisfying $D\ww_i=\bb_0$ for all $i\in \{1,\ldots,\alpha'\}$ such that $\myvv'+\sum_{i=1}^{\alpha'}\ww_i=\myvv$. Note that every vector $\ww_i$ will have a decomposition~\cref{eq:rekin}, with some base vector~$\whww_i$. Hence, to verify the existence of suitable $\myvv',\ww_1,\ldots,\ww_{\alpha'}$, it suffices to find the following objects:
\begin{itemize}[nosep]
 \item a vector $\myvv'\in \Zn^\yy$ satisfying $D\myvv'=\bb'$;
 \item for every $\whww\in \Zn^\yy$ such that $\|\whww\|_\infty\leq (2k(\Xi+\Delta)+1)^{k}$ and $D\whww=\bb_0$, a multiplicity $\gamma_{\whww}\in \Zn$ signifying how many times $\whww$ will appear as the base solution $\whww_i$; and
 \item for every Graver element $\gg\in \Graver(D)\cap \Zn^{\yy}$, a multiplicity $\delta_\gg\in \Zn$ with which $\gg$ will appear in the decompositions~\cref{eq:rekin} of the solutions $\mset{\ww_i\colon i\in \{1,\ldots,\alpha'\}}$ in total.
\end{itemize}
Note that by \cref{lem:GraverBound}, $\Graver(D)$ can be computed in time depending only on $\Delta$, $|\yy|$, and $k$.
The constraints that the above variables should obey, except for integrality and nonnegativity, are the following:
\begin{align*}
    D\myvv' & = \bb', \\
    \myvv' + \sum_{\whww} \gamma_{\whww}\cdot \whww + \sum_{\gg} \delta_\gg\cdot \gg & = \myvv,\\
    \sum_{\whww} \gamma_{\whww} & = \alpha' ,
\end{align*}
where summations over $\whww$ and $\gg$ go over all relevant $\whww$ and $\gg$ for which the corresponding variables are defined. It is easy to see that the feasibility of the integer program presented above is equivalent to the existence of suitable $\myvv',\ww_1,\ldots,\ww_{\alpha'}$; note here that the nonnegative elements of the Graver basis can be added to any solution of $D\ww=\bb_0$ without changing feasibility. As we need to verify the feasibility for every $\myvv\in \Zn^\yy$ satisfying $D\myvv=\bb$, we have effectively constructed a $\forall\exists$ integer program of the form considered in \cref{thm:EisenbrandShmonin}. It now remains to observe that the number of variables of this program as well as the absolute values of all the entries in its matrices are bounded by computable functions of $\Delta$, $|\yy|$, and~$k$. So we may use the algorithm of \cref{thm:EisenbrandShmonin} to solve this $\forall\exists$ program within the promised running time~bounds.
\end{proof}

We now give the full algorithm, which boils down to applying \cref{lem:ComputingOneStepDecomp} exhaustively.

\begin{lemma}\label{lem:ComputingCompleteDecomp}
Given a matrix $D\in \matrices{\Z}{\yy}{\tv}$ and a non-zero vector $\bb\in \Z^{\tv}$, one can in time $f(\Delta,|\yy|,k)\cdot (\log \|\bb\|_\infty)^{\Oh(1)}$, where $f$ is a computable function, $\Delta=\|D\|_\infty$, and $k=|\tv|$, compute a faithful decomposition $\Bb$ of $\bb$ with respect to $D$ of order at most $g(\Delta,k)$.
\end{lemma}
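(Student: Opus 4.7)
The plan is to iteratively apply \cref{lem:ComputingOneStepDecomp}, peeling off its ``bounded'' part and recursing on the at-most-one oversized residual, until no residual with $\ell_\infty$-norm exceeding $g(\Delta,k)$ remains. Concretely, if $\|\bb\|_\infty\leq g(\Delta,k)$ we immediately return the singleton $\mset{\bb}$; otherwise we set $\bb^{(0)}\coloneqq \bb$ and for $t=0,1,2,\ldots$ invoke \cref{lem:ComputingOneStepDecomp} on $\bb^{(t)}$ to obtain a faithful decomposition $\Bb_t$ of $\bb^{(t)}$ whose elements all have $\ell_\infty$-norm at most $g(\Delta,k)$ except possibly one oversized residual $\bb^{(t+1)}$ satisfying $\|\bb^{(t+1)}\|_1\leq (1-1/M)\cdot \|\bb^{(t)}\|_1$, where $M\coloneqq 2^{(k\Delta)^{\Oh(k)}}$. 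We halt as soon as $\|\bb^{(t+1)}\|_\infty\leq g(\Delta,k)$, absorbing that residual as a legal element of the output, and return the multiset union of all vectors produced across the iterations.

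Correctness rests on the elementary observation that faithful decompositions compose: if $\Bb$ is a faithful decomposition of $\bb$ with respect to $D$ and $\Bb'$ is a faithful decomposition of some $\bb_i\in \Bb$ with respect to $D$, then replacing $\bb_i$ in $\Bb$ by the entries of $\Bb'$ again yields a faithful decomposition of $\bb$. Indeed, any $\myvv$ with $D\myvv=\bb$ first splits via $\Bb$ into summands with the right right-hand sides, and then the summand corresponding to $\bb_i$ splits further via $\Bb'$; the new relation $\bb_{i,j}\cfleq \bb$ holds because $\cfleq$ is transitive. Iterating this composition over $t=0,1,2,\ldots$ shows that the returned collection is a faithful decomposition of $\bb$, and by construction every one of its elements has $\ell_\infty$-norm at most $g(\Delta,k)$, as required.

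For the running time, since $\|\bb^{(t)}\|_1\leq (1-1/M)^t\cdot k\|\bb\|_\infty$, the procedure halts after at most $T=\Oh(M\cdot \log(k\|\bb\|_\infty))$ iterations, which is $(\log \|\bb\|_\infty)^{\Oh(1)}$ once $\Delta$ and $k$ are fixed. Each iteration costs $f(\Delta,|\yy|,k)\cdot (\log \|\bb\|_\infty)^{\Oh(1)}$ by \cref{lem:ComputingOneStepDecomp}, so the total matches the stated bound. To keep intermediate data structures small, we use the compact encoding introduced after \cref{lem:ConformalDecomp}: there are only $(2g(\Delta,k)+1)^k$ distinct vectors of $\ell_\infty$-norm at most $g(\Delta,k)$, so we merge multiplicities of repeated vectors on the fly, storing each multiplicity in binary in $\Oh(\log \|\bb\|_1)$ bits. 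There is no single hard obstacle here; the most delicate point is the composition argument for faithful decompositions, and the rest is just bookkeeping of the geometric decrease of the residual against the per-iteration cost guaranteed by \cref{lem:ComputingOneStepDecomp}.
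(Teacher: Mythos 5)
Your proposal is correct and follows essentially the same approach as the paper: iteratively apply \cref{lem:ComputingOneStepDecomp} to the single oversized residual, use the geometric $\ell_1$ decrease to bound the number of iterations by $2^{(k\Delta)^{\Oh(k)}}\cdot\log\|\bb\|_\infty$, and maintain the result in the compact high-multiplicity encoding. The one thing you do in more detail than the paper is the composition argument (replacing one element of a faithful decomposition by a faithful decomposition of that element again yields a faithful decomposition, using transitivity of $\cfleq$); the paper simply asserts this as ``easy to see,'' and your explicit verification is a welcome addition, not a divergence.
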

\begin{proof}
Again, start with a faithful decomposition $\Bb$ consisting only of $\bb$. Next, as long as $\Bb$ contains a vector $\bb'$ with $\|\bb'\|>g(\Delta,k)$, apply \cref{lem:ComputingOneStepDecomp} to $\bb'$, which yields a faithful decomposition $\Bb'$ of $\bb'$. Then replace $\bb$ with $\Bb'$ in $\Bb$ and continue. It is easy to see that $\Bb$ remains a faithful decomposition throughout this procedure, hence once all vectors in $\Bb$ have $\ell_\infty$-norm bounded by $g(\Delta,k)$, one can simply output $\Bb$.

It is easy to see that at each point of the procedure, $\Bb$ contains at most one vector with $\ell_\infty$-norm larger than $g(\Delta,k)$. Moreover, by \cref{lem:ComputingOneStepDecomp}, at every iteration the $\ell_1$-norm of this vector decreases by being multiplied by at most $1-\frac{1}{2^{(k\Delta)^{\Oh(k)}}}$. It follows that the total number of iterations performed by the procedure is at most $2^{(k\Delta)^{\Oh(k)}}\cdot \log \|\bb\|_1=2^{(k\Delta)^{\Oh(k)}}\cdot \log \|\bb\|_\infty$. Since by \cref{lem:ComputingOneStepDecomp} every iteration takes time $f(\Delta,|\yy|,k)\cdot (\log \|\bb\|_\infty)^{\Oh(1)}$, the claim follows.
\end{proof}

\subsection{Algorithm for \texorpdfstring{$n$}{n}-fold integer programming}

We are now ready to prove \cref{thm:nfold-main}.

\begin{proof}[Proof of Theorem~\ref{thm:nfold-main}]
Let $P=(C,\av,\mset{D_i,\bb_i,\cc_i\colon i\in I})$ be the given program. By adding some dummy variables and constraints if necessary, we may assume that all vectors $\bb_i$ are non-zero. The first step of the algorithm is to construct a new program $P'$, equivalent to $P$, in which all the right-hand sides $\bb_i$ will be bounded in the $\ell_\infty$-norm. This essentially boils down to applying \cref{lem:ComputingCompleteDecomp} to every $\bb_i$ as follows.

For every $i\in I$, apply \cref{lem:ComputingCompleteDecomp} to matrix $D_i$ and vector $\bb_i$, yielding a faithful decomposition $\Bb_i$ of $\bb_i$ with respect to $D_i$ of order at most $\Xi\coloneqq g(\Delta,|\tv|)$. Obtain a new uniform $n$-fold program $P'=(C,\av,\mset{D_{j},\bb_{j},\cc_{j}\colon j\in J})$ by replacing the single vector $\bb_i$ with the collection of vectors $\Bb_i$, for each~$i\in I$. Thus every index $j\in J$ originates in some index $i\in I$, and for all indices $j$ originating in $i$ we put $D_{j}=D_i$ and $\cc_{j}=\cc_i$.

Since collections $\Bb_i$ were faithful decompositions of vectors $\bb_i$ with respect to $D_i$, for all $i\in I$, it follows that the program $P'$ is equivalent to $P$ in terms of feasibility and the minimum attainable value of the optimization goal. This is because every solution to $P$ can be decomposed into a solution to $P'$ with the same optimization goal value using faithfulness of the decompositions $\Bb_i$, and also every solution to $P'$ can be naturally composed into a solution to $P$ with the same value.

We note that the program $P'$ is not computed by the algorithm explicitly, but in a high-multiplicity form. That is, decompositions $\Bb_i$ are output by the algorithm of \cref{lem:ComputingCompleteDecomp} by providing every vector present in $\Bb_i$ together with its multiplicity in $\Bb_i$. Consequently, we may describe $P'$ concisely by writing the multiplicity of every {\em{brick type}} present in $P'$, where the brick type of an index $j\in J$ is defined by
\begin{itemize}[nosep]
    \item the diagonal matrix $D_{j}$;
    \item the right-hand side $\bb_{j}$; and
    \item the optimization vector $\cc_{j}$.
\end{itemize}
In other words, indices $j,j'\in J$ with $D_j=D_{j'}$, $\bb_j=\bb_{j'}$, and $\cc_j=\cc_{j'}$ are considered to have the same brick type. Note that there are at most $(2\Delta+1)^{|\yy|\cdot |\tv|}$ different diagonal matrices $D_j$, at most $(2\Xi+1)^{|\tv|}$ different right-hand sides $\bb_j$, and at most $|I|\leq \|P\|$ different optimization vectors $\cc_j$, hence the total number of possible brick types is bounded $h(\Delta,|\yy|,|\tv|)\cdot \|P\|$, for some computable function $h$. For each such brick type, we store one positive integer of bitsize bounded by $\|P\|$ describing the multiplicity.

We now aim to solve the $n$-fold integer program $P'$ efficiently by formulating it as another integer program~$M$, and then relaxing $M$ to a mixed integer program $M'$ with a bounded number of integral variables. Intuitively, $M$ will determine multiplicities in which solutions to individual bricks are taken, and assign them to the blocks in an optimal manner. The solutions to bricks will not be guessed explicitly, but through their decompositions provided by \cref{lem:GraverDecomp}.

To formulate $M$, we need first some notation regarding $P'$. Let
\begin{itemize}[nosep]
    \item $\Diag=\matrices{\{-\Delta,\ldots,\Delta\}}{\yy}{\tv}$ be the set of all possible diagonal blocks; and
    \item $\RHS=\{-\Xi,\ldots,\Xi\}^\tv$ be the set of all possible right-hand sides.
\end{itemize}
Also, $I$ is the index set of the collection $\mset{\cc_i\colon i\in I}$ consisting of all possible optimization vectors.
Thus, the set of all possible brick types that may be present in $P'$ is $\Types\coloneqq \RHS\times \Diag\times I$, and we assume that $P'$ is stored by providing, for each type $(D,\bb,i)\in \Types$, the multiplicity $\cntTypes[D,\bb,i]$ with which the type $(D,\bb,i)$ appears in $P'$.

For $D\in \Diag$ and $\bb\in \RHS$, we let $\Base[D,\bb]$ be the set of all {\em{base solutions}} for $D$ and $\bb$: vectors $\whww\in \Zn^\yy$ such that $D\whww=\bb$ and $\|\whww\|\leq (2|\tv|(\Delta+\Xi)+1)^{|\tv|}$. Recall that by \cref{lem:GraverDecomp}, every nonnegative integer solution $\ww$ to $D\ww=\bb$ can be decomposed as
\begin{equation}\label{eq:ropucha}
\ww = \whww+\sum_{\gg\in \Gg} \gg,
\end{equation}
where $\whww\in \Base[D,\bb]$ and $\Gg$ is a multiset consisting of nonnegative vectors from the Graver basis. For convenience, denote $\Gra(D)\coloneqq \Graver(D)\cap \Zn^\tv$.

We may now define the variables of $M$. These will be:
\begin{itemize}[nosep]
    \item $\zeta^{D,\bb}_\whww$ for $D\in \Diag$, $\bb\in \RHS$, and $\whww\in \Base[D,\bb]$, signifying how many times in total the base solution $\whww$ will be taken for a brick with diagonal block $D$ and right-hand side $\bb$;
    \item $\delta^D_\gg$ for $D\in \Diag$ and $\gg\in \Gra(D)$, signifying how many times in total the Graver basis element $\gg$ will appear in the decompositions~\cref{eq:ropucha} of solutions to bricks; and
    \item $\omega^{D,\bb,i}_{\whww}$ for $D\in \Diag$, $\bb\in \RHS$, $i\in I$, and $\whww\in \Base[D,\bb]$, signifying how many times the base solution $\whww$ will be assigned to a brick of type $(D,\bb,i)$.
\end{itemize}
Next, we define the constraints of $M$:
\begin{align*}
&\ \ \ \sum_{D\in \Diag}\,\sum_{\bb\in \RHS}\left( \sum_{\whww\in \Base{[D,\bb]}} \zeta^{D,\bb}_{\whww}\cdot C\whww + \sum_{\gg\in \Gra(D)} \delta^D_\gg\cdot C\gg \right)  = \av;
& \tag{C1} \label{IP:FindSolAss1} \\
%&\sum_{\whww\in \Base[D,\bb]} \zeta^{D,\bb}_{\whww} = \sum_{i\in I} \cntTypes[D,\bb,i]\qquad \textrm{for all }D\in \Diag\textrm{ and }\bb\in \RHS; & \tag{C2} \label{IP:FindSolAss2} \\
&\quad\ \ \sum_{i\in I} \omega^{D,\bb,i}_{\whww} = \zeta^{D,\bb}_{\whww} \qquad\qquad\qquad\qquad\  \textrm{for all }D\in \Diag, \bb\in \RHS, \textrm{ and } \whww\in \Base[D,\bb]; & \tag{C2} \label{IP:FindSolAss3} \\
& \sum_{\whww\in \Base[D,\bb]} \omega^{D,\bb,i}_{\whww} = \cntTypes[D,\bb,i] \qquad\quad \textrm{for all }D\in \Diag, \bb\in \RHS, \textrm{ and } i\in I; & \tag{C3} \label{IP:FindSolAss4} \\
&\quad\ \ \zeta^{D,\bb}_{\whww} \in \Zn \qquad \textrm{for all }D\in \Diag, \bb\in \RHS, \textrm{ and } \whww\in \Base[D,\bb]; &
\tag{C4} \label{IP:FindSolAss5}\\
& \quad\ \ \delta^D_\gg \in \Zn\qquad\ \ \   \textrm{for all }D\in \Diag \textrm{ and } \gg\in \Gra[D,\bb]; &
\tag{C5} \label{IP:FindSolAss6}\\
& \quad\ \ \omega^{D,\bb,i}_{\whww} \in \Zn \quad\ \,\textrm{for all }D\in \Diag, \bb\in \RHS, \whww\in \Base[D,\bb], \textrm{ and } i\in I. &
\tag{C6} \label{IP:FindSolAss7}
\end{align*}
Let us explain these constraints:
\begin{itemize}[nosep]
\item Constraint~\cref{IP:FindSolAss1} corresponds to the linking constraints of the $n$-fold program $P'$.
%\item Constraint~\cref{IP:FindSolAss2} guarantees that the total number of base solutions for $(D,\bb)\in \Diag\times \RHS$ is equal to the total number of blocks of type $(D,\bb,i)$ for any $i\in I$.
\item  Constraint~\cref{IP:FindSolAss3} assures that the numbers $\omega^{D,\bb,i}_{\whww}$ of base solutions $\whww\in \Base[D,\bb]$ assigned to bricks with different optimization vectors add up to the total number of such base solutions.
\item In a similar vein, Constraint~\cref{IP:FindSolAss4} states that
 the number of bricks of a particular brick type matches the total number of suitable base solutions assigned to them.
\item Finally, constraints~\cref{IP:FindSolAss5}, \cref{IP:FindSolAss6}, and \cref{IP:FindSolAss7} ensure the integrality and nonnegativity of our variables.
\end{itemize}
Last but not least, the objective function of $M$ is to
\begin{equation*}
\text{minimize}\quad \sum_{D\in \Diag}\ \sum_{\bb\in \RHS}\ \sum_{\whww\in \Base[D,\bb]}\ \sum_{i\in I}\ \omega^{D,\bb,i}_{\whww}\cdot \sca{\cc_i}{\whww} +\sum_{D\in \Diag}\ \sum_{\gg\in \Gra(D)}  \delta^D_\gg\cdot  \rev^{D}_\gg,
\end{equation*}
where for $D\in \Diag$ and $\gg\in \Gra(D)$, we define $\rev^D_\gg$ to be the minimum of $\sca{\cc_i}{\gg}$ over all $i\in I$ such that $D_i=D$. Here, the first summand is just the contribution of the base solutions to the optimization goal, while in the second summand we observe that every Graver basis element that appears in decompositions~\cref{eq:ropucha} can be assigned freely to any brick with the corresponding diagonal block $D$, hence we may greedily assign it to the brick where its contribution to the optimization goal is the smallest. Note that here we crucially use the assumption that the $n$-fold program $P'$ is uniform, because we exploit the fact that assigning both the base solutions and the Graver basis elements to different bricks has always the same effect on the linking constraints.

From the construction it is clear that the integer programs $P'$ and $M$ are equivalent in terms of feasibility and the minimum attainable value of the optimization goal. However, the number of variables of $M$ is too large to solve it directly: while the total number of variables $\zeta^{D,\bb}_{\whww}$ and $\delta^D_\gg$ is indeed bounded in terms of the parameters, this is not the case for the assignment variables $\omega^{D,\bb,i}_{\whww}$, of which there may be as many as $\Omega(|I|)$.

To solve this difficulty, we consider a mixed program $M'$ that differs from $M$ by replacing \cref{IP:FindSolAss7} with the following constraint:
\begin{align}
    & \quad\ \ \omega^{D,\bb,i}_{\whww} \in \myRn \quad\ \,\textrm{for all }D\in \Diag, \bb\in \RHS, \whww\in \Base[D,\bb], \textrm{ and } i\in I. &
\tag{C7'} \label{IP:FindSolAss7p}
\end{align}
That is, we relax variables $\omega^{D,\bb,i}_{\whww}$ to be fractional. We now observe that $M$ and $M'$ in fact have the same optima, because after fixing the integral variables, the program $M'$ encodes a weighted flow problem and therefore its constraint matrix is totally unimodular.

\begin{techclaim}\label{cl:rounding}
    The minimum attainable optimization goal values of programs $M$ and $M'$ are equal.
\end{techclaim}
\begin{subproof}
As $M'$ is a relaxation of $M$, it suffices to prove that $M'$ has an optimum solution that is integral.
Observe that after fixing the integral variables $\zeta^{D,\bb}_{\whww}$ and $\delta^D_\gg$, the only remaining constraints are \cref{IP:FindSolAss3} and~\cref{IP:FindSolAss4}. These constraints encode that values $\omega^{D,\bb,i}_{\whww}$ form a feasible flow in the following flow network:
\begin{itemize}[nosep]
    \item There is a source $s$, a sink $t$, and two set of vertices: $V_s\coloneqq \{(D,\bb,\whww)\colon D\in \Diag, \bb\in \RHS, \whww\in \Base[D,\bb]\}$ and $V_t\coloneqq \{(D,\bb,i)\colon D\in \Diag, \bb\in \RHS, i\in I\}$.
    \item For every $(D,\bb,\whww)\in V_s$, there is an arc from $s$ to $(D,\bb,\whww)$ with capacity $\zeta_{\whww}^{D,\bb}$.
    \item For every $(D,\bb,i)\in V_t$, there is an arc from $(D,\bb,i)$ to $t$ with capacity $\cntTypes[D,\bb,i]$.
    \item For every $D\in \Diag$, $\bb\in \RHS$, $\whww\in \Base[D,\bb]$, and $i\in I$, there is an arc from $(D,\bb,\whww)$ to $(D,\bb,i)$ with infinite capacity.
\end{itemize}
It is well known that constraint matrices of flow problems are totally unimodular, hence so is the matrix of program $M'$ after fixing any evaluation of the integral variables. Vertices of polyhedra defined by totally unimodular matrices are integral, hence after fixing the integral variables of $M'$, the fractional variables can be always made integral without increasing the optimization goal value. This shows that $M'$ always has an optimum solution that is integral.
\end{subproof}

Consequently, to find the optimum value for the initial $n$-fold program $P$, it suffices to solve the mixed program $M'$. As $M'$ can be constructed in time $f(\Delta,|\yy|,|\tv|)\cdot \|P\|^{\Oh(1)}$ and has $f(\Delta,|\yy|,|\tv|)$ integral variables, for some computable function $f$, we may apply any fixed-parameter algorithm for solving mixed programs, for instance that of Lenstra~\cite{Lenstra1983}, to solve $M'$ within the promised running time bounds.

Finally, let us remark that while the presented procedure outputs only the optimum value for $P$, one can actually find an optimum solution to $P$ by (i) finding an optimum mixed solution to $M'$, (ii) finding an optimum integer solution to $M'$ by solving the fractional part integrally using e.g. the flow formulation constructed in \cref{cl:rounding}, and (iii) translating the obtained optimum solution to $M'$ back to an optimum solution to $P$. We leave the details to the reader.
\end{proof}

\printbibliography

\newpage
\appendix
\section{Proof of \texorpdfstring{\Cref{obs:4block}}{Observation \ref{obs:4block}}}\label{app:4block}

In this section, we prove \Cref{obs:4block}, restated below for convenience.

\fourblock*

\begin{proof}
    Let $P$ be the input program:
    \begin{align}
 \xx\in \Zn^k,\ \yy_t\in \Zn^k, &\nonumber\\
     B\xx+\sum_{t=1}^n C_t\yy_t = \av, & \qquad \textrm{and}\label{eq:linking} \\
 A_t\xx + D_t\yy_t =\bb_t &\qquad \textrm{for all }t=1,2,\ldots,n,\label{eq:local}
    \end{align}
    where $A,B,C,D_i$ are integer $k\times k$ matrices and $\av,\bb_i$ are integer vectors of length $k$. Denote $A=[a_{ij}]_{i,j\in [k]}$, $B=[b_{ij}]_{i,j\in [k]}$, and $C=[c_{ij}]_{i,j\in [k]}$, and recall that we assume that all entries in $A,B,C$ have absolute values bounded by $n$. Also, let $\av=(a_1,\ldots,a_k)^\intercal$, $\xx=(x_1,\ldots,x_k)^\intercal$, and $\yy_t=(y_{t,1},\ldots,y_{t,k})^\intercal$, for $t=1,\ldots,n$. Variables $\xx$ will be called {\em{global variables}} and variables $\yy_t$ will be called {\em{local variables}}. Similarly, constraints~\eqref{eq:linking} will be called {\em{linking}}, and constraints~\eqref{eq:local} will be called {\em{local}}. Local variables and local constraints come in $t$ {\em{groups}}, each consisting of $k$ variables/constraints.

    We will gradually modify the program $P$ so that at the end, in the modified blocks $A,B,C$ all the coefficients will belong to $\{-1,0,1\}$. During the modification, we do not insist on all blocks being always square matrices; this can be always fixed at the end by adding some zero rows or zero columns.

    First, we reduce large entries in $C$. Note that the $i$th linking constraint takes the form
    $$\sum_{j=1}^k b_{ij}\,x_j + \sum_{t=1}^n \sum_{j=1}^k c_{ij}\,y_{t,j}= a_i,$$
    or equivalently,
    $$\sum_{j=1}^k b_{ij}\,x_j + \sum_{j=1}^k c_{ij}\sum_{t=1}^n  y_{t,j}= a_i.$$
    Hence, for every pair $(i,j)\in [k]\times [k]$ we introduce a new global variable $z_{ij}$ together with the constraint
    $$-z_{ij}+\sum_{t=1}^n y_{t,j}=0,$$
    and in the $i$th linking constraint, we replace the summand $c_{ij}\sum_{t=1}^n  y_{t,j}$ with the summand $c_{ij}z_{ij}$. It is easy to see that after this modification, only $\{0,1\}$ entries will appear in the blocks $C$. This comes at the price of increasing the number of global variables and the number of linking constraints by $k^2$, and possibly adding some large (but bounded by $n$ in absolute value) entries to the block $B$.

    Next, we remove large entries in $A$. Suppose then $a_{ij}>1$ for some $i,j\in [k]$; recall that we assume also that $a_{ij}\leq n$. In program $P$, entry $a_{ij}$ occurs in summands of the form $a_{ij}x_j$; every group of local constraints contains one such summand. Therefore, the idea is to (i) copy the variable $x_j$ to $a_i$ new local variables, (ii) use one new linking constraint to define a new global variable $z'_{ij}$ equal to the sum of the copies, and hence equal to $a_{ij}x_j$, and (iii) replace each usage of the summand $a_{ij}x_j$ in local constraints with just the variable $z'_{ij}$. This idea can be executed as follows:
    \begin{itemize}[nosep]
        \item[(i)] For every $t=1,\ldots,n$, introduce two new local variables $p_{ij,t},p'_{ij,t}$ together with two new local constraints:
        \begin{align*}
         & -x_j+p'_{ij,t}=0;\qquad \textrm{and}\\
         & p_{ij,t}-p'_{ij,t}=0\textrm{\, when \,}t\leq a_{ij}\qquad \textrm{and}\qquad p_{ij,t}=0 \textrm{\, when \,} t>a_{ij}.
        \end{align*}
        \item[(ii)] Add a new global variable $z'_{ij}$ together with a new linking constraint $$-z'_{ij}+\sum_{t=1}^n p_{ij,t}=0.$$
        \item[(iii)] Replace every usage of summand $a_{ij}x_{ij}$ with variable $z'_{ij}$ in every local constraint within~$P$.
    \end{itemize}

    Entries $a_{ij}\in \{-n,\ldots,-2\}$ can be removed similarly: we perform the above construction for $|a_{ij}|$, and then replace each summand $a_{ij}x_{ij}$ with $-z'_{ij}$ instead of $z'_{ij}$. Entries with large absolute values in $B$ can be removed in exactly the same way.

    It is straightforward to see that by applying all the modifications presented above, we obtain a uniform $4$-block program with $\Oh(k^2)$ global variables, $\Oh(k^2)$ linking constraints, $\Oh(k^2)$ local variables in each group of local variables, and $\Oh(k^2)$ local constraints in each group of local constraints. Moreover, all entries in the blocks $A,B,C$ belong to $\{-1,0,1\}$, while blocks $D_i$ got modified only by adding some $\{-1,0,1\}$ entries. Hence, we can just solve the modified program using the assumed fixed-parameter algorithm for feasibility of uniform $4$-block programs parameterized by the dimensions of the block and the maximum absolute value of any entry appearing in the constraint matrix.
\end{proof}

\section{Hardness of two-stage stochastic integer programming with large diagonal entries}
\label{sec:nphard}

We conclude our discussion of two-stage stochastic integer programming by a short argument showing that it is indeed necessary to include $\Delta=\max_{i\in I} \|D_i\|_\infty$ among the parameters. This is because when $\Delta$ is unbounded the problem becomes strongly $\mathsf{NP}$-hard already for blocks of size $k=16$, as we prove below. This rules out even a running time of the form $f(k) \cdot (\|P\|+ \Delta)^{\Oh(1)}$, and shows that the dependence on $\Delta$ needs to be superpolynomial.

We give a reduction from the 3-SAT problem. Let $N$ denote the number of \emph{variables} and $M$ the number of \emph{clauses} in the input formula. The reduction creates an instance of \stageFeas{} with only one \emph{global} variable $x \in \Zn$, whose role is to encode a satisfying 3-SAT assignment. Let $p_1, p_2, \ldots, p_N$ be the first $N$ primes. In our encoding, $p_i \mid x$ if and only if the $i$-th SAT variable is set to false. The IP instance contains $M$ diagonal blocks, each corresponding to a single clause.

We use a certain gadget to construct the blocks. The gadget consists of five ILP variables, named $a,b,c,d,e\in \Zn$, and three constraints that involve those five variables as well as the global variable $x$. Let $p$ be a fixed prime (not an ILP variable). The key properties of the gadget are:
\begin{itemize}[nosep]
\item if $p \mid x$, then every feasible assignment to the five local variables has $b=0$;
\item if $p \nmid x$,  then every feasible assignment to the five local variables has $b=1$; and
\item for every $x \in \Zn$ there exists a feasible assignment to the five local variables.
\end{itemize}
We leave it to the reader to verify that the following gadget satisfies the desired properties.
\begin{align*}
-x + p \cdot a + b + c & = 0 \\
(p-2) \cdot b - c - d &= 0 \qquad \text{(equivalent to $c \leq (p-2) \cdot b$)} \\
b + e &= 1 \qquad \text{(equivalent to $b \leq 1$)}
\end{align*}

Each block consists of three such independent gadgets, instantiated for the three primes corresponding to the three SAT variables involved in the corresponding clause. Moreover, each block contains one additional constraint (and one additional local variable) ensuring that the three $b$-variables (or their negations in case of negative literals) sum up to at least one (minus the number of negative literals in the clause), which corresponds to the clause being satisfied.

Summarizing, the resulting two-stage stochastic ILP instance is:
\[\setcounter{MaxMatrixCols}{20}
D_i = \begin{bmatrix}
p_{v_{i,1}} & 1 & 1 & & &   &&&&&   &&&&& \\
& p_{v_{i,1}}\!\!-\!2 & -1 & -1 & & &&&&& &&&&& \\
& 1 & & & 1 & &&&&& \\
&&&&& p_{v_{i,2}} & 1 & 1 & & & &&&&& \\
&&&&& & p_{v_{i,2}}\!\!-\!2 & -1 & -1 & & &&&&& \\
&&&&& & 1 & & & 1 & &&&&& \\
&&&&& &&&&& p_{v_{i,3}} & 1 & 1 & & & \\
&&&&& &&&&& & p_{v_{i,3}}\!\!-\!2 & -1 & -1 & & \\
&&&&& &&&&& & 1 & & & 1 & \\
&t_{i,1}&&&& &t_{i,2}&&&& &t_{i,3}&&&& -1 \\
\end{bmatrix},
\]
\[
A_i = \begin{bmatrix} -1 \\ 0 \\ 0 \\ -1 \\ 0 \\ 0 \\ -1 \\ 0 \\ 0 \\ 0 \end{bmatrix}, \quad
\bb_i = \begin{bmatrix} 0 \\ 0 \\ 1 \\ 0 \\ 0 \\ 1 \\ 0 \\ 0 \\ 1 \\ 1 - \ell_i\end{bmatrix}, \quad
\text{for } i \in [M],
\]
where, for every for $i \in [M]$ and $j \in \{1,2,3\}$, $v_{i,j}$ denotes the index of the variable in the $j$-th literal in the $i$-th clause, $t_{i,j}=1$ if this is a positive literal and $t_{i,j}=-1$ otherwise, and $\ell_i$ equals the number of negative literals in the $i$-th clause, i.e., $\ell_i = (3-(t_{i,1}+t_{i,2}+t_{i,3}))/2$.

From the preceding discussion it follows that the obtained integer program is feasible if and only if the input 3-SAT formula is satisfiable. Each $A_i$ is a $10 \times 1$ matrix, each $D_i$ is a $10 \times 16$ matrix, and we have $\max_{i\in I} \|D_i\|_\infty \leq p_N = \Oh(N \log N)$ by the Prime Number Theorem. We conclude that the two-stage stochastic integer programming feasibility problem is strongly $\mathsf{NP}$-hard already for $k=16$.

\end{document}